\documentclass[a4paper]{amsart}

\usepackage{amssymb}
\usepackage[all,defaultlines=3]{nowidow}
\usepackage[many]{tcolorbox}
\usepackage[store-sets-label]{keytheorems}
\usepackage{graphicx} 
\graphicspath{{../}{../Figures/}}
\usepackage{mathrsfs}
\usepackage{braket}
\usepackage{imakeidx}
\usepackage{hyperref}
\usepackage{mathtools}
\usepackage{microtype}
\usepackage{xcolor}
\usepackage{mdwtab}
\usepackage{zref-clever}
\zcsetup{nameinlink,cap}
\usepackage[noadjust]{cite}
\usepackage{comment}
\usepackage[all]{xy}
\usepackage[normalem]{ulem}
\usepackage{afterpage}
\usepackage[disableredefinitions,small]{complexity}
\usepackage{accents}
\usepackage{ifthen}
    \makeindex[title=Index of definitions\label{sec:index}]
\newboolean{Anonymous}
\setboolean{Anonymous}{false}

\zcsetup{
countertype = {
mycasesi = case,
mycasesii = subcase,
mycasesiii = subsubcase,
rpropi = property ,
twpropi = property ,
} ,
counterresetby = {
mypropii = mypropi 
}
}
\usepackage{enumitem}
\newlist{mycases}{enumerate}{3}
\setlist[mycases,1]{label=(\arabic*)}
\setlist[mycases,2]{label = \emph{\alph*}),
ref = \themycasesi.\emph{\alph*}}
\setlist[mycases,3]{label = \emph{\roman*}),
ref = \themycasesii.\emph{\roman*}}
\newlist{twprop}{enumerate}{1}
\setlist[twprop]{label={\rm (Tw\,\arabic*)}}
\newlist{rprop}{enumerate}{1}
\setlist[rprop]{label={\rm (r\,\arabic*)}}
\newlist{mytype}{enumerate}{1}
\setlist[mytype]{label={\rm Type \arabic*}:,
ref = \arabic*}
\zcsetup{abbrev}
\hypersetup{colorlinks=true,linkcolor=blue,filecolor=blue,citecolor=blue}
\newcommand{\tim}{induced tree-ordered minor}
\newcommand{\Tim}{Induced tree-ordered minor}
\newclass{\MSO}{MSO}

\DeclareMathOperator{\Cont}{\mathsf{IMinor}_\prec}
\DeclareMathOperator{\bomega}{\ensuremath{{\rm b}\omega_E}}

\DeclareMathOperator{\Mon}{Mon_\prec}
\title[Monadically dependent tree-ordered weakly sparse structures]{Characterizations of monadically dependent tree-ordered weakly sparse structures}
\ifAnonymous\else
\thanks{\ERCagreement}
\author[H. Buffière]{Hector Buffière}
\address{Université Paris Cité, CNRS, IRIF, Paris, France  \and Centre d'Analyse et de Mathématique Sociales CNRS UMR 8557, France}
\email{buffiere@irif.fr}
\author[Y. Lin]{Yuquan Lin}
\address{Southeast University, Nanjing, Jiangsu, China \and Centre d'Analyse et de Mathématique Sociales CNRS UMR 8557, France.}
\email{yqlin@seu.edu.cn}
\author[J. Ne\v set\v ril]{Jaroslav Ne\v set\v ril}
\address{Computer Science Institute of Charles University (IUUK), Praha, Czech Republic}
\email{nesetril@iuuk.mff.cuni.cz}
\author[P. Ossona de Mendez]{Patrice Ossona de Mendez}
\address{Centre d'Analyse et de Mathématique Sociales CNRS UMR 8557, France \and Computer Science Institute of Charles University (IUUK), Praha, Czech Republic}
\email{pom@ehess.fr}
\author[S. Siebertz]{Sebastian Siebertz}
\address{University of Bremen, Bremen, Germany}
\email{siebertz@uni-bremen.de}
\date{\today}
\newcommand{\ERCagreement}{
        {\footnotesize
        The second author is supported by the China Scholarship Council (CSC)	and  SEU Innovation Capability Enhancement Plan for Doctoral Students (CXJH\_SEU 24119).}\\[4pt]
		\noindent\begin{minipage}{.73\textwidth}
			\footnotesize
    This paper is part of a project that has received funding from the European Research Council (ERC) under the European Union's Horizon 2020 research and innovation program (grant agreement No 810115 -- {\sc Dynasnet}), and from the 
    German Research Foundation (DFG) with grant agreement No 444419611
		\end{minipage}\hfill
		\begin{minipage}{.25\textwidth}
\phantom{.}\hfill\includegraphics[height=13mm]{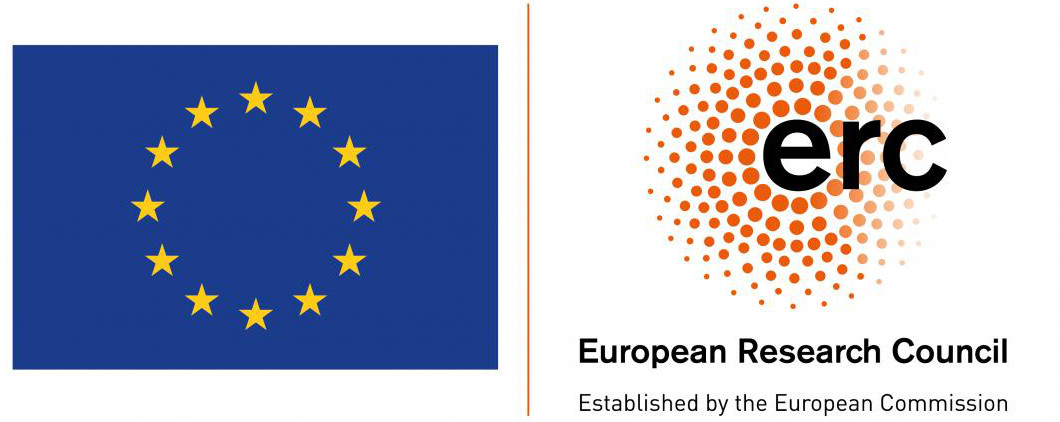}\hfill\phantom{.}
		\end{minipage}
}
\fi

\newtheorem{ex}{Example}[section]
\newtheorem{conj}{Conjecture}[section]

\newtheorem{remark}{Remark}[section]
\newtheorem{claim}{$\rhd$ Claim}[section]

\zcRefTypeSetup{case}{
Name-sg = Case ,
name-sg = case ,
Name-pl = Cases ,
name-pl = cases ,
}
\zcRefTypeSetup{subcase}{
Name-sg = Subcase ,
name-sg = subcase ,
Name-pl = Subcases ,
name-pl = subcases ,
}
\zcRefTypeSetup{subsubcase}{
Name-sg = Subsubcase ,
name-sg = subsubcase ,
Name-pl = Subsubcases ,
name-pl = subsubcases ,
}
\zcRefTypeSetup{property}{
Name-sg = Property ,
name-sg = property ,
Name-pl = Properties ,
name-pl = properties ,
}

\zcRefTypeSetup{claim}{
Name-sg = Claim ,
name-sg = claim ,
Name-pl = Claims ,
name-pl = claims ,
}
\zcRefTypeSetup{conjecture}{
Name-sg = Conjecture ,
name-sg = conjecture ,
Name-pl = Conjectures ,
name-pl = conjectures ,
}
\zcRefTypeSetup{mytypei}{Name-sg-ab = Type}
\zcsetup{countertype={cor=corollary}}
\zcsetup{countertype={lem=lemma}}
\zcsetup{countertype={ex=example}}
\zcsetup{countertype={thm=theorem}}
\zcsetup{countertype={defi=definition}}
\zcsetup{countertype={ndefi=definition}}
\newkeytheoremstyle{tcbthm}{
tcolorbox-no-titlebar={colframe=red,interior hidden, breakable,before skip=5pt,after skip=5pt,	outer arc=0pt, blanker,
borderline west={1pt}{0pt}{red},
	top=3pt,
	right=3pt,
    left=5pt,
	bottom=3pt},
}
\newkeytheoremstyle{tcbcor}{
tcolorbox-no-titlebar={colframe=purple,interior hidden, breakable,before skip=5pt,after skip=5pt,	outer arc=0pt, blanker,
borderline west={1pt}{0pt}{purple},
	top=3pt,
	right=3pt,
    left=5pt,
	bottom=3pt},
}
\newkeytheoremstyle{tcblem}{
tcolorbox-no-titlebar={colframe=blue,interior hidden, breakable,before skip=5pt,after skip=5pt,	outer arc=0pt, blanker,
borderline west={1pt}{0pt}{blue},
	top=3pt,
	right=3pt,
    left=5pt,
	bottom=3pt},
}
\newkeytheoremstyle{tcbdef}{
tcolorbox-no-titlebar={colframe=green,interior hidden, breakable,before skip=5pt,after skip=5pt,	outer arc=0pt, blanker,
borderline west={1pt}{0pt}{green},
	top=3pt,
	right=3pt,
    left=5pt,
	bottom=3pt},
}
\newkeytheoremstyle{tcbnot}{
tcolorbox-no-titlebar={colframe=yellow,interior hidden, breakable,before skip=5pt,after skip=5pt,	outer arc=0pt,
	arc=0pt,
	colback=white,
		rightrule=0pt,
		toprule=0pt,
	top=3pt,
	right=3pt,
    left=3pt,
	bottom=3pt,
	bottomrule=0pt,},
}
\newkeytheoremstyle{tcbprop}{
tcolorbox-no-titlebar={colframe=gray,interior hidden, breakable,before skip=5pt,after skip=5pt,	outer arc=0pt,
	arc=0pt,
	colback=white,
		rightrule=0pt,
		toprule=0pt,
	top=3pt,
	right=3pt,
    left=3pt,
	bottom=3pt,
	bottomrule=0pt,},
}
\newkeytheorem{property}[style=tcbprop,name=Property,numbered=false]
\newkeytheorem{thm}[style=tcbthm,name=Theorem,parent=section]
\newkeytheorem{theorem}[sibling=thm]
\newkeytheorem{lemma}[sibling=thm]
\newkeytheorem{corollary}[sibling=thm]
\newkeytheorem{cor}[style=tcbcor,name=Corollary,sibling=thm]
\newkeytheorem{lem}[style=tcblem,name=Lemma,sibling=thm]
\newkeytheorem{ndefi}[style=tcbdef,name=Definition,sibling=thm]
\newkeytheorem{defi}[sibling=thm,name=Definition]
\newkeytheorem{fact}[sibling=thm]
\newkeytheorem{nota}[style=tcbnot,name=Notation,sibling=thm]

\newkeytheorem{rproof}[name=Proof,numbered=false,style=remark,qed]

\newenvironment{clproof}{ \trivlist
	\item[\hskip\labelsep
	\emph{Proof of the claim}.]\ignorespaces
}{\hfill$\vartriangleleft$\vspace{10pt}\par}

\newcommand{\Cc}{\mathscr C}
\newcommand{\Dd}{\mathscr D}
\newcommand{\Mm}{\mathscr M}
\newcommand{\Nn}{\mathscr N}
\newcommand{\struc}[1]{\mathbf{#1}}
\renewcommand{\phi}{\varphi}

\newcommand{\equ}[2]{$\xy\xymatrix@=18pt{(#1)\ar@2{<->}[r]&(#2)}\endxy$}
\newcommand{\impl}[2]{$\xy\xymatrix@=18pt{(#1)\ar@2[r]&(#2)}\endxy$}

\usepackage{cleveref}
\usepackage{xifthen}

\DeclareMathOperator{\atp}{\rm atp}
\DeclareMathOperator{\otp}{\rm otp}
\DeclareMathOperator{\Inc}{\rm Inc}
\DeclareMathOperator{\Gaif}{\mathsf{Gaif}}
\DeclareMathOperator{\Flat}{\rm Flat}
\DeclareMathOperator{\TInc}{{\rm Inc}_\prec}
\DeclareMathOperator{\TGaif}{\mathsf{Gaif}_\prec}
\DeclareMathOperator{\Minor}{Minor_\prec}

\newcommand{\shm}{\mathbin{\nabla}}
\newcommand{\meet}{\mathbin{\curlywedge}}
\newcommand{\ndef}[2][]{\emph{#2}%
\ifthenelse{\isempty{#1}}{\index{#2}}{\index{#2 (#1)}}}
\newtheoremstyle{idxstyle} 
    {0pt}                    
    {0pt}                    
    {}                   
    {}                           
    {}                   
    {}                          
    {0pt}                       
    {}  

\theoremstyle{idxstyle}
\newtheorem{idx}{}

\zcRefTypeSetup{idx}{
Name-sg = Index of definitions,
name-sg = index of definitions ,
Name-pl = Indexes ,
name-pl = indexes ,
}

\begin{document}	

\begin{abstract}
    A class of structures is monadically dependent if one cannot interpret all graphs in colored expansions from the class using a fixed first-order formula. 
    A tree-ordered  $\sigma$-structure is the expansion of a $\sigma$-structure with a tree-order. 
    A tree-ordered $\sigma$-structure is weakly sparse if the Gaifman graph of its $\sigma$-reduct excludes some  biclique (of a given fixed size) as a subgraph. 
    Tree-ordered weakly sparse graphs are commonly used as tree-models (for example for classes with bounded shrubdepth,  structurally bounded expansion, bounded cliquewidth, or bounded twin-width), motivating their study on their own.
    In this paper, we consider several
    constructions on tree-ordered structures, such as tree-ordered variants of the Gaifman graph and of the incidence graph, induced and non-induced tree-ordered minors, and generalized fundamental graphs.
    We provide characterizations of monadically dependent classes of tree-ordered weakly sparse $\sigma$-structures based on each of these constructions, some of them establishing unexpected bridges with sparsity theory. As an application, we prove that a class of tree-ordered weakly sparse structures is monadically dependent if and only if its sparsification is nowhere-dense. Moreover, the sparsification transduction translates boundedness of clique-width and linear clique-width into boundedness of tree-width and path-width.
    We also prove that first-order model checking is not fixed parameter tractable on independent hereditary classes of tree-ordered weakly sparse graphs (assuming $\AW[*]\neq\FPT$) and give what we believe is the first model-theoretical characterization of classes of graphs excluding a minor, thus opening a new perspective of structural graph theory.
\end{abstract}

\vspace*{-2.2cm}
\maketitle
\setcounter{tocdepth}{1} 

\vspace*{-0.5cm}
{\small\tableofcontents}

\section{Introduction}

Recent years have witnessed an increasingly tight interplay between structural graph theory, model theory, and algorithmic complexity. 
A recurring theme in these three areas is that strong structural restrictions, often phrased in terms of sparsity or decomposability, simultaneously lead  to robust notions of tameness in logic and to efficient algorithms. 
Sparsity theory, introduced by Ne\v{s}et\v{r}il and Ossona de Mendez~\cite{Sparsity}, provides a particularly rich instance of this phenomenon. 
For example, \emph{nowhere dense} classes of graphs admit a wealth of equivalent combinatorial, algorithmic, and logical characterizations, and they form the precise boundary of tractability for first-order model checking on monotone graph classes~\cite{dvovrak2010deciding,grohe2017deciding}.

A decisive conceptual step was taken by Adler and Adler~\cite{Adler2013}, building on earlier work of Podewski and Ziegler~\cite{Podewski1978}, by connecting notions of model-theoretical tameness with graph sparsity. 
They proved that on monotone classes of binary structures the model-theoretical notions of dependence and stability, as well as their monadic variants, collapse to nowhere density. 
(Stability and dependence are two of the most central tameness notions from model theory~\cite{shelah1990classification}.)
Braunfeld and Laskowski~\cite{braunfeld2022existential} recently proved that  dependence and stability still collapse to their monadic variants under the weaker (and standard) assumption of heredity (i.e., closure by induced substructures). 
Therefore, under the heredity assumption, a result by Baldwin and Shelah~\cite{baldwin1985second} allows us to use, instead of the standard definitions of dependence and stability, characterizations of monadic dependence and monadic stability based on first-order transductions.
Informally, a class of structures $\Mm$ is monadically dependent if one cannot first-order transduce the class of all graphs from $\Mm$.
These insights support the systematic study of \emph{structurally sparse} classes, that is, classes obtained as first-order transductions of sparse ones~\cite{SurveyND,SBE_TOCL,gajarsky2020new}, as well as the study of monadic dependence and stability through a combinatorial lens. 
Many results were  extended from the sparse to the more general model-theoretical setting, and it was conjectured that the (essentially) graph theoretical notion of structural nowhere density coincides with the model theoretical notion of monadic stability~\cite{RW_SODA}. 

This not only established a bridge between combinatorics and model theory, but led to combinatorial characterizations of those hereditary classes of graphs that are (monadically) stable \cite{braunfeld2025decomposition, DreierMST23,MCST} and (monadically) dependent \cite{dreier2024flipbreakability,Bonnet2025,svm2024}.
It also led to important algorithmic applications. 
It was conjectured that on hereditary classes of graphs first-order model checking is fixed-parameter tractable if and only if the class is monadically dependent~\cite{tww4}. 
At this time, important partial results have been obtained in this direction, namely the fixed parameter tractability of first-order model checking for monadically stable classes of graphs~\cite{dreier2023first,MCST} and (assuming some special decomposition is given with the input) for classes with bounded twin-width~\cite{bonnet2021twin} and, more generally, for classes with bounded merge-width~\cite{dreier2025merge}. 
Furthermore, it has been established that monadic dependence constitutes a limit for tractability of first-order model checking on hereditary classes of graphs~\cite{dreier2024flipbreakability}. 

At the heart of the obtained tractability results for the aforementioned (sparse and dense) classes lie special tree-like decompositions. 
These include low tree-depth decompositions~\cite{Taxi_stoc06,POMNI}, low shrub-depth decompositions \cite{SBE_drops,SBE_TOCL,covers,kwon2020low}, clique-width expressions~\cite{courcelle1993handle}, and twin-models~\cite{TWW_perm}. 
These results show that, at least implicitly, many monadically dependent classes have a tree-ordered structure and suggest that  \emph{monadically dependent classes could be exactly first-order transductions of monadically dependent tree-ordered nowhere dense classes}~\cite{Buffiere_lcw}.
On the other hand, the  striking equivalence, for classes of totally ordered binary structures, of monadic dependence and  twin-width boundedness~\cite{tww4} suggests that perhaps monadically dependent classes of tree-ordered structures could also have a non-trivial but compact structural characterization.

This ubiquity of tree-ordered sparse graphs motivates our present study. 
We focus on \emph{tree-ordered weakly sparse (TOWS) structures}: weakly sparse relational structures equipped with a tree-order (see more formal definition below). 
Our aim is to understand precisely when classes of such tree-ordered structures are monadically dependent. 
The core contribution of this paper shows that, in the presence of a tree-order and under the assumption of weak sparsity, monadic dependence is characterized by the nowhere denseness of an associated minor closure. 
In order to provide a more precise formulation of this result, we take the time to outline a few definitions.

As usual, a class $\Mm$ of structures is said to be \emph{weakly sparse} if the class $\Gaif(\Mm)$ of its Gaifman graphs is weakly sparse, that is, excludes some balanced biclique (of a given fixed size) as a subgraph;
similarly, the class of structures $\Mm$ is said to be \emph{nowhere dense} if the class $\Gaif(\Mm)$ is nowhere dense.  

A \emph{tree-order} is a partial order $\prec$ whose cover (or direct successor) relation induces a tree. 
Recall that one of the main shortcomings of first-order logic is that it can express only local properties, captured e.g.\ by Gaifman's locality theorem. 
Tree-orders, however, are transitive, hence making it possible for first-order logic to access long directed paths in the order. 

We start our study by a dichotomy theorem for hereditary classes of tree-ordered graphs:  each such class either
includes some special monadically independent classes of tree-ordered graphs generated from a small structure $\tau$ (called a \emph{core}; see \zcref{fig:core}) by a natural extension map $\tau\mapsto\{\tau[n]\colon n\in\mathbb N\}$, or is dependent.

\getkeytheorem{thm:lics_twists}

\begin{figure}[ht]
    \centering
    \includegraphics[width=0.75\linewidth]{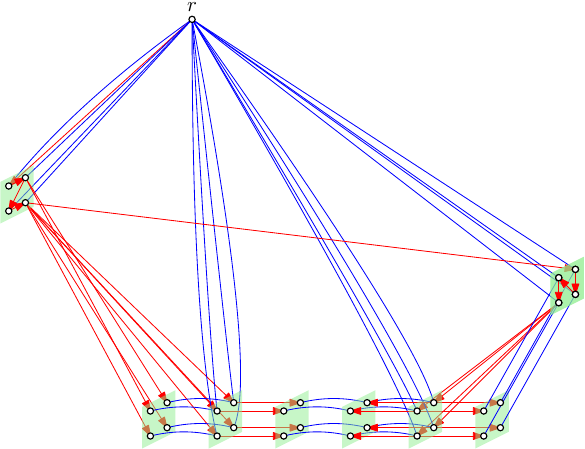}
    \caption{A core; red arrows are the cover relations of $\prec$, blue edges are the $E$-relations.}
    \label{fig:core}
\end{figure}

The long and technical proof of this theorem relies on the characterization of  monadically dependent classes of binary structures  in terms of \emph{transformers} given by Dreier et al.\ in~\cite{dreier2024flipbreakability} and is obtained by refining the extracted substructures.

Then, motivated by the particular structure of tree-ordered structures, we consider \emph{tree-ordered minors}, which are obtained by iteratively identifying direct successors in the order (we do not require that the identified vertices are connected by an edge/tuple), or equivalently by contracting subtrees of the underlying tree-order, and by deleting tuples from relations and  non-root vertices.  
This is a natural combinatorial operation;
see \zcref{fig:Tminor}.
We also consider \emph{induced tree-ordered minors}, which do not allow the deletion of tuples but are first-order transductions. 

\begin{figure}[ht]
    \centering
    \includegraphics[width=\columnwidth]{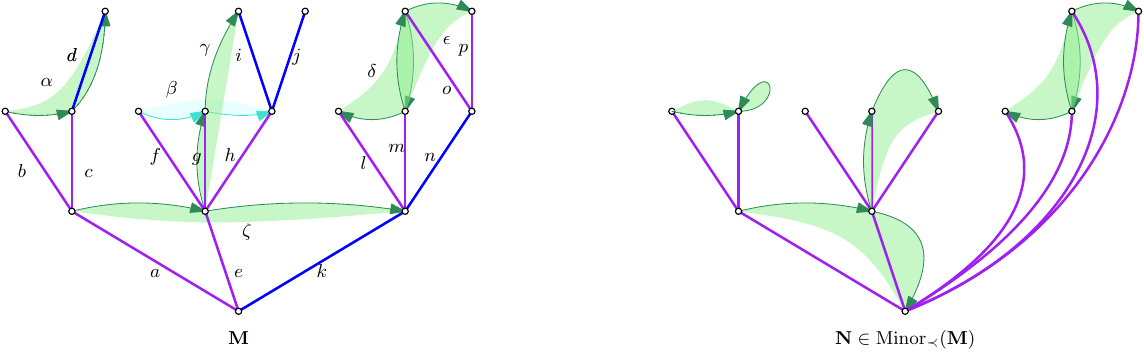}
    \caption{On the right, the tree-ordered minor of the tree-ordered structure $\mathbf M$ obtained by contracting covers $d,i,j,k,n$ and deleting the triple $\beta$.}
    \label{fig:Tminor}
\end{figure}

We also consider variants of standard constructions like the Gaifman graph, the incidence graph, and the fundamental graph relative to a spanning tree. 
In our setting, it will be convenient to consider the tree-order separately, and to introduce variants $\TGaif$ and $\TInc$ of the tree-ordered Gaifman and incidence graphs of a tree-ordered structure.
We shall also consider a representation by generalized fundamental graphs (where the role of the spanning tree is played by the cover graph of the tree order). We refer to the later part of the paper for  definitions and discussion.
Examples of these constructions are shown in \zcref{fig:all}.

\begin{figure}[ht]
    \centering
    \includegraphics[width=\columnwidth]{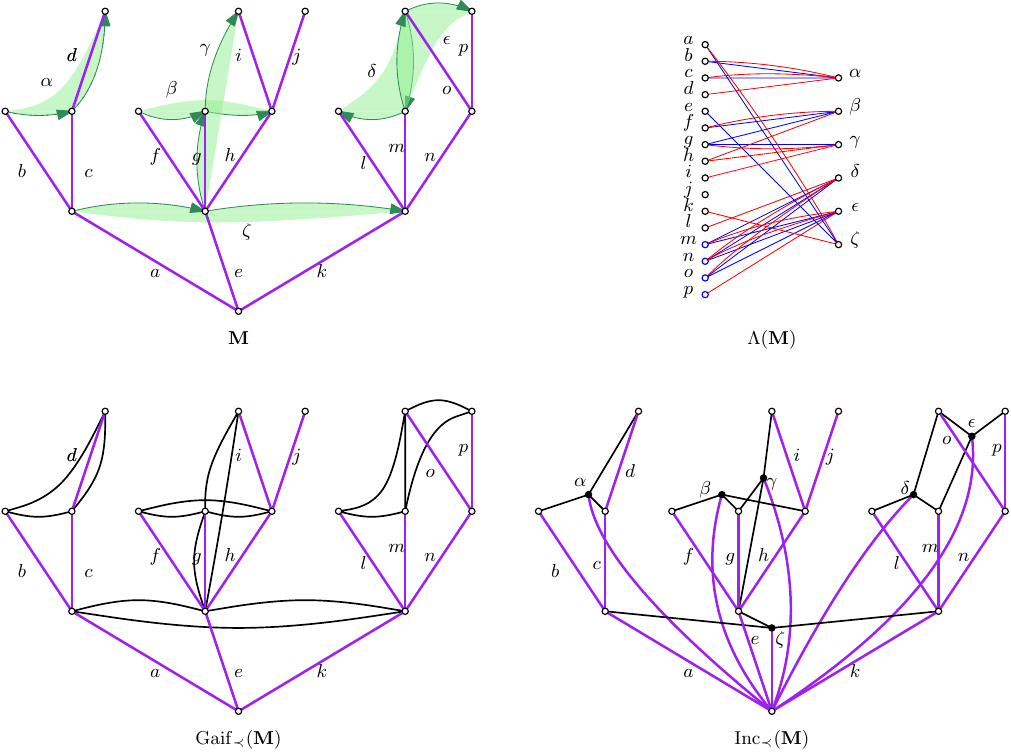}
    \caption{Constructions derived from a tree-ordered $\sigma$-structure. Top left, a tree-ordered $\sigma$-structure $\mathbf M$ is composed of a tree-order (Hasse diagram in purple fat) and a weakly sparse set of triples (green, with arrows indicating the triple order). Top right: the generalized fundamental graph $\Lambda(\mathbf M)$ of~$\mathbf M$. (If $k$ is the maximum arity of the relations in $\sigma$, then $\Lambda(\mathbf M)$ has $k-1$ types of edges). Bottom left: the tree-ordered Gaifman graph of $\mathbf M$. Bottom right: the tree-ordered incidence graph of $\mathbf M$.} 
    \label{fig:all}
\end{figure}

Our main result establishes that, for tree-ordered weakly sparse structures, monadic dependence admits characterizations based on each of these constructions, including 
a purely sparsity-theoretical characterization. 

\getkeytheorem{thm:mainS}

Here, $\TGaif(\mathscr M)$ denotes the class of tree-ordered Gaifman graphs of $\Mm$, 
$\TInc(\mathscr M)$ denotes the class of tree-ordered incidence graphs of $\Mm$, 
$\Lambda(\mathscr M)$ denotes the class of generalized fundamental graphs of $\Mm$ (see \zcref{def:TFun}), $\Minor(\mathscr M)$ is the class of tree-ordered minors of $\Mm$, and $\Minor(\mathscr M)^\sigma$ is its $\sigma$-reduct.

Before we describe some applications derived in this paper, we take the time for a few comments on \zcref{thm:mainS}. When considering TOWS structures, all the complexity involving density or unstability relies on interplay of the tree-order component. At first glance, we would not expect to be able to ``sparsify'' a TOWS structure while keeping these aspects (as any sparsification would be stable). However, \zcref{thm:mainS} shows that if each structure is sparsified into a finite set of sparse structures (the $\sigma$-reducts of the tree-ordered minors) we actually keep the complexity of the structure to a large extent.
This unexpected behavior opens an interesting bridge to sparsity theory, which remains to be explored.

It also follows from \zcref{thm:mainS} that TOWS structures basically reduce to TOWS graphs, as witnessed by the equivalence of the monadic dependence of a class~$\mathscr M$ and of the class $\TGaif(\mathscr M)$ of its tree-ordered Gaifman graphs. This implies that we can use arbitrarily complicated signatures on the unordered component without complicating  the condition under which these structures form a monadically dependent class. Moreover, 
the possibility to consider both the tree-ordered Gaifman graphs and the tree-ordered incidence graphs (which cannot be obtained as a transduction of the structure) is also a key property, which relies on the sparsity-theoretical characterization.

Finally, the equivalence of the last two items of \zcref{thm:mainS} demonstrates that the complexity of tree-ordered minor closed classes of TOWS $\sigma$-structures is essentially encoded in their $\sigma$-reduct.
This, as well as the characterization by generalized fundamental graphs (which are instrumental in matroid theory), suggests to revisit the theory of graph minors in this new setting.

\medskip

We now list some applications of our results.
First, when the involved tree-orders have  cover graphs with bounded degree, we prove a connection with twin-width.

\getkeytheorem{thm:tows_tww}

Our new structural characterizations open the way to systematically transfer techniques and results from sparsity theory to the study of monadically dependent tree-ordered weakly sparse structures.
We present several applications, which are listed  at the end of the paper. 
Let us highlight a few of them.

The equivalence of (1) and (6) in  \zcref{thm:mainS}, leads to  the following particular characterization of classes of graphs excluding a minor:
Let~$G$ be a graph. 
A \emph{spanning-tree-ordering} of $G$ is the expansion of $G$ by a tree-ordering of~$G$ defined by the setting of a global root and a rooted forest (as a subgraph of $G$) on the remaining vertices. 
The \emph{spanning-tree-ordering expansion} $\mathscr C_\Upsilon$ of a class of graphs $\mathscr C$ is the class of all the spanning-tree-orderings of all the graphs in $\mathscr C$.

\getkeytheorem{thm:excl_minor}

A slight variation of the graph version of \zcref{thm:mainS} involves the following \emph{sparsifying} transduction $\mathsf{Sp}$, which adds the cover graph of the tree-order to the edge set and then takes the $E$-reduct of all {\tim}s.  
We prove that $\mathsf{Sp}$ has the following striking properties (here ${\rm cw}, {\rm lcw}, {\rm sd}, {\rm tw}, {\rm pw}$, and ${\rm td}$ denote clique-width, linear clique-width, shrub-depth, tree-width, path-width, and tree-depth, respectively):

\getkeytheorem{thm:sp}

Using their structure theorem, Dreier \emph{et al.} \cite{dreier2024flipbreakability} proved that the algorithmic problem
of \FO-model checking is not tractable (precisely, {\FPT}) on independent hereditary classes of graphs (under the standard assumption that $\AW[*]\neq\FPT$).
This result most probably extends to binary structures where all the relations are symmetric, but the case where some relation is not symmetric is less clear. For this reason, it does not seem to follow from known results that \FO-model checking is not {\FPT} on independent hereditary classes of TOWS graphs. 
However, this follows from our study of unavoidable definable {\tim}s in hereditary independent classes of TOWS graphs. 

\getkeytheorem{thm:intractability}

We conjecture that this property extends to every independent hereditary class of relational structures.

\subsection{Structure of the paper}

An \zcref[noref,nocap]{sec:index} 
 is given at the end of the paper.
The remaining of the paper is organized around three parts.

\noindent $\vartriangleright$ The first part is devoted to the study of hereditary classes of 
TOWS graphs. Some general background from graph theory, order theory and model theory is first recalled in \zcref{sec:prelim}.
The notion of tree-ordered weakly sparse graphs is introduced in \zcref{sec:TOWS_G}, and some important related constructions are defined, including (induced) tree-ordered minors and fundamental graphs. 
The (long and technical) \zcref{sec:dep_bin} is devoted to the characterization of independent hereditary classes of TOWS-graphs in terms of induced substructures. 
The main result of this section (\zcref{thm:lics_twists}) is derived from the characterization of independent hereditary class of binary structures given in \cite{dreier2024flipbreakability} in terms of transformers, by considering substructures more suited to our framework, including twisters, clean twisters, and eventually twists. 
From this characterization we deduce in \zcref{sec:Iminor} a characterization in terms of (induced) tree-ordered minors, \zcref{thm:mainG}, which establishes an unexpected bridge with sparsity theory, as well as characterization in terms of  fundamental graphs (\zcref{sec:fund_dep}). This theorem is the tree-ordered graph version of \zcref{thm:mainS}, which is generalized in the second part.

\noindent $\vartriangleright$ In the second part, we extend our study of hereditary classes of TOWS-structures. 
In \zcref{sec:Gstdep}, we extend the theorem of Adler and Adler relating nowhere denseness to dependence, stability, and their monadic version in monotone classes of binary structures (\zcref{thm:Adler}) to hereditary classes of weakly sparse relational structures (\zcref{cor:NDG}). In \zcref{sec:TOWS}, we generalize the constructions introduced in \zcref{sec:TOWS_G} to general structures and extend in \zcref{sec:timS} the notions of (induced) tree-ordered minors. With these tools in hand, we generalize in \zcref{sec:depS} the characterization \zcref{thm:mainG} to general  relational structures, thus obtaining  \zcref{thm:mainS}, the main result of this paper.

\noindent $\vartriangleright$ The third part is devoted to applications. In \zcref{sec:sp}, we  introduce a notion of sparsification and show how it captures not only the dependence, but also the boundedness of clique-width, linear clique-width, and shrubdepth (\zcref{thm:sp}). A key aspect of this sparsification is that it doesn't map a structure to a sparse structure, but to a (finite) set of sparse structures.
In a second application (\zcref{sec:model checking}), we use the possibility to efficiently encode graphs in twists (\zcref{cor:interpret}) to prove the hardness of model checking on general TOWS graphs (\zcref{thm:intractability}).
The third application (\zcref{sec:minor}) is maybe the first model theoretical characterization of classes of graphs excluding a minor (\zcref{thm:excl_minor}). Finally, in a fourth application (\zcref{sec:tows_tww}), we consider the special case of TOWS graphs whose cover graph has bounded degree and prove that, in this case, monadic dependence is equivalent to the boundedness of twin-width (\zcref{thm:tows_tww}).

\newpage

\part{Monadic dependence of tree-ordered weakly sparse graphs}
\addtocontents{toc}{\vspace{2pt}}

\section{Preliminaries}
\label{sec:prelim}

\subsection{General model theory}

A \ndef{relational signature} $\sigma$ is a set of relation symbols, each 
with an associated non-negative integer, called its \ndef{arity}.
A \ndef{binary signature} is a signature with only unary and binary relation symbols. 
In the following, whenever we speak of a signature, we mean a finite relational signature. 
A \ndef{$\sigma$-structure} $\struc M$ consists of a \ndef{universe} $M$, 
which is a non-empty, possibly infinite set, and \emph{interpretations} of the symbols from the
signature: each relation symbol $R$ of arity $k$ is interpreted as
a $k$-ary relation $R(\struc M)\subseteq M^k$. 
We often do not distinguish between relation symbols and their interpretations. The logical values `true' and `false' are denoted by~$\top$ and~$\bot$.

A \emph{graph} is a finite structure over the signature consisting of a binary
relation symbol~$E$, interpreted as a symmetric and irreflexive edge relation.

First-order logic over a signature $\sigma$ is defined in 
the usual way.
We usually write~$\bar x$ for tuples $(x_1,\ldots, x_k)$
of variables, $\bar a$ for tuples $(a_1,\ldots, a_\ell)$ of elements and leave it to the context to determine the length of the tuple.
For a formula $\varphi(\bar x;\bar y)$, a structure $\mathbf M$ and a tuple $\bar a$ with $|\bar a|=|\bar y|$, we define
$$\varphi(\mathbf M,\bar a)=\{\bar b\in M^{|\bar x|}\colon \mathbf M\models \varphi(\bar b;\bar a)\}.$$

We call a set $X$ in $\mathbf M$ \emph{definable (with parameters $\bar a$)} if there exists $\phi(\bar x,\bar y)$ such that $X=\varphi(\mathbf M,\bar a)$. 
An \ndef{interpretation} of a structure $\mathbf M$ in a structure $\mathbf N$ is a surjective map from a subset of some (finite) power of the domain of $\mathbf N$ to the domain of $\mathbf M$, such that the preimage of a definable set is a definable set.
In the special case 
where the mapping identifies the domain of $\mathbf M$ with a subset of the domain of $\mathbf N$, and
 considering classes instead of structures, 
we get the notion of simple interpretations.
\subsubsection{Interpretations}

A \ndef{simple interpretation} $\mathsf I$ of $\tau$-structures in $\sigma$-structures consists  of a $\sigma$-formula $\nu(x)$ and a $\sigma$-formula $\rho_R(\bar x)$ for each  $R\in \tau$ (with arity~$|\bar x|$). For a $\sigma$-structure $\mathbf M$, 
$\mathsf I(\mathbf M)$ is the $\tau$-structure $\mathbf N$ with domain $\nu(\mathbf M)$, such that $R(\mathbf N)=\{\bar a\subseteq\mathbf N\colon \mathbf M\models\rho_R(\bar a)\}$ for every $R\in\tau$.
For a class $\Mm$ of $\sigma$-structures we define $\mathsf I(\Mm)=\bigcup_{\struc M\in \Mm} \mathsf I(\struc M)$. 
We say that a class $\Nn$ can be interpreted in a class~$\Mm$ if there exists an interpretation~$\mathsf I$ such that $\Nn\subseteq \mathsf I(\Mm)$. 

\begin{ex}
\label{ex:reduct}
Let $\sigma$ be a relational signature and let $\tau\subseteq \sigma$. The $\tau$-\ndef{reduct} is the simple interpretation defined by $\nu(x):=\top$ and $\rho_R(\bar x):=R(\bar x)$ for every $R\in\tau$. For the sake of simplicity, we denote by
$\mathbf M^\tau$ the $\tau$-reduct of a structure $\mathbf M$.
\end{ex}

Following \zcref{ex:reduct}, reducts are denoted using a superscript. For instance, 
for a class~$\mathscr C$ of $\sigma$-structures, we denote by extension $\mathscr C^\tau$ the class of all the $\tau$-reducts of the structures in~$\mathscr C$.
Moreover, for a $\sigma$-structure $\mathbf M$ and $R\in\sigma$, it will be convenient to use the term of $R$-reduct instead of $\{R\}$-reduct, and to denote the $R$-reduct of~$\mathbf M$ by $\mathbf M^R$.
\subsubsection{Definition, stability, and dependence}
Close to the notion of interpretation is the notion of definition.
Let $\Dd$ be a class of finite $\sigma$-structures and let $d\in\mathbb N$. The class 
$\Dd$ is \emph{defined} in a class $\Cc$ (by formulas $\rho_R(\bar x^1;\dots;\bar x^k)$ for $R\in\sigma$ with arity $k$ and $|\bar x^i|=d$)  if, for every  $\mathbf M\in\Dd$ there exists $\mathbf N\in\Cc$ and an injective map $f:M\rightarrow N^d$ such that
$$\mathbf M\models R(a_1,\dots,a_k)\quad\iff\quad
\mathbf N\models \rho_R(f(a_1);\dots;f(a_k))$$ holds for every $R\in\sigma$ (with arity $k$).

A class $\Cc$ is \ndef{stable} if the class of all finite linear orders cannot be defined in $\Cc$; it is \ndef{dependent} if the class of all finite graphs cannot be defined in $\Cc$.

\subsubsection{Transductions, monadic stability, and monadic dependence}

For $k\in\mathbb N$, the \ndef{copy operation} $\mathsf C_k$ maps a $\sigma$-structure $\mathbf M$ to the $\sigma\cup\{E\}$-structure~$\mathsf C_k(\mathbf M)$ obtained by taking $k$ disjoint copies of $\mathbf M$ and making the clones of an element of $\mathbf M$ adjacent in $E$. (Note that $\mathsf C_1$ is the identity mapping.)

A unary predicate is also called a \emph{color}. 
A \ndef{monadic expansion} or \emph{coloring} of a \mbox{$\sigma$-structure~$\struc M$} is a \mbox{$\sigma^+$-structure~$\struc M^+$}, where $\sigma^+$
is obtained from $\sigma$ by adding unary relations, such 
that $\struc M$ is the $\sigma$-reduct of $\struc M^+$.
For a set $\Sigma$ of colors, the \ndef{coloring operation} $\Gamma_\Sigma$ maps a $\sigma$-structure $\struc M$ to the set~$\Gamma_\Sigma(\struc M)$ of all its $\Sigma$-colorings.

A \ndef{transduction} $\mathsf T$ is a composition
of copy operations, monadic expansions, and simple interpretations.
Every transduction $\mathsf T$ is equivalent to the composition
$\mathsf I\circ\Gamma_\Sigma\circ \mathsf C_k$ of a copy operation $\mathsf C_k$, a
coloring operation~$\Gamma_\Sigma$, and a simple interpretation~$\mathsf I$ of $\tau$-structures in $\Sigma$-colored \mbox{$\sigma$-structures} \cite{SBE_TOCL}. 
Hence, for every $\sigma$-structure we have
$\mathsf T(\struc M)=\{\mathsf I(\struc M^+): \struc M^+\in\Gamma_\Sigma(\mathsf C_k(\struc M))\}$. 
(When we define a transduction, if we do not say anything about the copy operation, this means that it is not present, i.e.\ it is $\mathsf C_1$.)

For a class $\Mm$ of $\sigma$-structures we define $\mathsf T(\Mm)=\bigcup_{\struc M\in \Mm} \mathsf T(\struc M)$. 
We say that a class $\Nn$ can be transduced in a class~$\Mm$ if there exists a transduction $\mathsf T$ such that $\Nn\subseteq \mathsf T(\Mm)$. 

A class $\Cc$ is \ndef{monadically stable} if all its monadic expansions are stable, and it is \ndef{monadically dependent} if all its monadic expansions are dependent. 
As shown by Baldwin and Shelah~\cite{baldwin1985second}, a class $\Mm$ is monadically stable/dependent if and only if one cannot transduce all finite linear orders/all graphs from $\Mm$. 
The following result is a  consequence of the simple observation that interpretations and transductions compose. 

\begin{fact}[label=lem:preserve-by-interpret]
    When $\Mm$ is stable/dependent, then all its interpretations (simple or not) are stable/dependent, and when $\Mm$ is monadically stable/dependent, then all its transductions are monadically stable/dependent. 
\end{fact}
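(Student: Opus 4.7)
The plan is to reduce both halves of the fact to the observation that interpretations compose to interpretations, and transductions compose to transductions, and then argue by contraposition against the definitions of stability, dependence, and their monadic variants.

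For the first claim, suppose $\mathsf I$ is a (simple or general) interpretation with $\Nn\subseteq \mathsf I(\Mm)$, and assume toward contradiction that the class $\mathscr L$ of all finite linear orders (resp.\ all finite graphs) is defined in $\Nn$ by formulas $\rho_R^{\mathscr L}$ via injections $f_{\mathbf L}\colon L\to N^d$ for $\mathbf L\in\mathscr L$ and some $\mathbf N\in\Nn$. Each such $\mathbf N$ has the form $\mathsf I(\mathbf M)$ for some $\mathbf M\in\Mm$, and elements of $\mathbf N$ correspond to tuples of elements of $\mathbf M$ via the interpretation formulas. Substituting these interpretation formulas into the $\rho_R^{\mathscr L}$ yields formulas $\rho_R'$ in the signature of $\Mm$ together with an injection $L\to M^{d'}$ (for $d'$ determined by $\mathsf I$ and $d$) witnessing that $\mathscr L$ is defined in $\Mm$, contradicting stability/dependence of $\Mm$.

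For the second claim, the crux is the standard composition lemma for transductions. Using the normal form $\mathsf I\circ \Gamma_\Sigma\circ \mathsf C_k$ recalled earlier, and observing that (i) copy operations commute, up to adjusting the interpretation, with simple interpretations; (ii) a monadic expansion applied \emph{after} an interpretation can be pulled through by enriching the palette, replacing each freshly added unary predicate by its preimage along the interpretation formulas, which is itself a definable (hence colorable) subset; and (iii) two simple interpretations compose to a simple interpretation---one obtains that $\mathsf T'\circ \mathsf T$ is (equivalent to) a single transduction. Hence if $\Nn\subseteq \mathsf T(\Mm)$ and some transduction $\mathsf T'$ transduces the class of all finite linear orders (resp.\ all finite graphs) from $\Nn$, then $\mathsf T'\circ \mathsf T$ transduces the same class from $\Mm$, which by the Baldwin--Shelah theorem contradicts monadic stability (resp.\ monadic dependence) of $\Mm$.

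The only genuine obstacle is the pull-back step (ii) inside the composition lemma: one must verify that moving a monadic coloring across a simple interpretation does not escape the transduction framework. This is routine once one unfolds the normal form cited from \cite{SBE_TOCL}---the new colors on the source side are exactly the guess-and-check of the colors chosen on the image---and everything else is a syntactic substitution. No hypothesis on $\Mm$ is used beyond closure of interpretations and transductions under finite composition.
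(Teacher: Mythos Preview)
Your proposal is correct and takes the same approach as the paper, which simply states that the fact ``is a consequence of the simple observation that interpretations and transductions compose'' without giving a proof. You spell out the composition argument in more detail than the paper does, including the pull-back of colorings through the simple interpretation in the normal form, but the underlying idea is identical.
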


The simple interpretation\footnote{Strictly speaking, we should define an interpretation for each signature~$\sigma$, but we choose to denote all these interpretations $\Gaif$ to avoid unnecessarily overloading notations.} $\Gaif$ of graphs in $\sigma$-structures is defined by the formula $\nu(x):=\top$ and the formula $\rho_E(x,y)$ expressing that $x$ and $y$ are distinct and belong together to some tuple in $R(\mathbf M)$ for some $R\in\sigma$. 
The graph $\Gaif(\mathbf M)$ is the \ndef{Gaifman graph} of~$\mathbf M$. 
Note that $\Gaif(\Mm)$, being a simple interpretation, is  a transduction of~$\Mm$, and hence by \zcref{lem:preserve-by-interpret}, if $\Mm$ is monadically stable, then $\Gaif(\Mm)$ is monadically stable. 

The \ndef{incidence graph}\footnote{Note that, instead of the incidence graph, one can define the incidence multigraph, where an edge $(u,(R,\bar v))$ has  multiplicity $k$ if there exists $k$ distinct indices $1\leq i_1<\dots<i_k\leq |\bar v|$ such that $u=v_{i_j}$ for all $j\in[k]$. In our setting, it will be more convenient not to consider multiplicities.} $\Inc(\mathbf{M})$ of a $\sigma$-structure is the bipartite graph, whose parts are the universe $M$ of $\mathbf M$ and the pairs $(R,\bar v)$ with $R\in\sigma$ and $\bar v\in R(\mathbf M)$, where $u$ is adjacent to $(R,\bar v)$ if $u\in\bar v$. 
Note that the incidence graph of a graph is obtained by subdividing every edge exactly once.
The incidence graph of a structure $\mathbf M$ is an interpretation of $\mathbf M$ (but not a simple interpretation\footnote{This is the reason why we typeset it as $\Inc$ and not as $\mathsf{Inc}$ like simple interpretations and transductions.}), and hence, if $\Mm$ is stable, then $\Inc(\Mm)$ is stable. The converse however is not true in general. 
For example, the class of linear orders is not stable, but the class of its incidence graphs is the class of all $1$-subdivided cliques, which is stable (but not monadically stable). 

\pagebreak
\subsection{Graph Theory}
\label{sec:GT}

\subsubsection{Weakly sparse classes}
A class~$\mathscr C$ of graphs is \ndef[graph class]{weakly sparse} (or \ndef{biclique-free}) if there exists an integer $t$ such that no graph in $\mathscr C$ contains the balanced bipartite graph $K_{t,t}$ with parts of size $t$ as a subgraph.  
More generally, a class $\mathscr M$ or relational structures is \ndef[class of structures]{weakly sparse} if 
$\Gaif(\mathscr M)$ is weakly sparse.
Sometimes, it will be useful to distinguish the special binary adjacency relation $E$ and to consider the $E$-reduct~$\mathbf M^E$ of a structure $\mathbf M$; in this setting, we introduce the following notation.
\begin{nota}[note=\ndef{biclique number},store=nota:bomega]
\IfRestatingF{Let $\mathbf M$ be a $\sigma$-structure, and let $E\in\sigma$ be a distinguished binary relation. We define }
\[
\bomega(\mathbf M)=\max\,\{t\colon K_{t,t}\subseteq \mathbf M^E\}.
\]
\IfRestatingF{
By extension, if  $\mathscr C$ is a class of tree-ordered  graphs, we define
\[
\bomega(\mathscr C)=\sup\,\{\bomega(\mathbf M)\colon  \mathbf M\in\mathscr C\}.
\]}
\end{nota}

The next (standard) lemma is an easy variation of Ramsey's theorem and shows that (informally speaking) the union of two weakly sparse graphs is again weakly sparse. 

\begin{lem}
Let $G_1=(V,E_1)$ and $G_2=(V,E_2)$ be two graphs on the same vertex set, and let $G=(V,E_1\cup E_2)$.
If $K_{t_1,t_1}$ is not a subgraph of $G_1$ and $K_{t_2,t_2}$ is not a subgraph of $G_2$, then $K_{t_1+t_2-1,2^{t_1+t_2-1}\max(t_1,t_2)}$ is not a subgraph of $G$.
\end{lem}
\begin{proof}
Let $t=2^{t_1+t_2-1}\max(t_1,t_2)$. 
Assume $K_{t_1+t_2-1,t}$ is a subgraph of $G$ and let $\{a_1,\dots,a_{t_1+t_2-1}\}$ and $\{b_1,\dots,b_t\}$ be the vertex sets of such a biclique in $G$.
We consider the vertices $a_1,\dots,a_{t_1+t_2-1}$. 
Let $B_1=\{b_1,\dots,b_t\}$. For $i=1,\dots,t_1+t_2-1$,
we consider the edges incident to $a_i$ and $B_i$ and let $c_i$ be $1$ if there are more edges in~$E_1$ connecting $a_i$ and $B_i$ than edges in $E_2$, and $c_i=2$, otherwise. Then, $B_{i+1}$ is the subset of $B_i$ 
connected to $a_i$ by an edge in $E_{c_i}$. Note that $B_{t_1+t_2-1}$ contains at least $\max(t_1,t_2)$ vertices. Moreover, either there exists $t_1$ values $c_i$ equal to $1$, or $t_2$ values $c_i$ equal to $2$. In the first case, we exhibit a $K_{t_1,t_1}$ in $G_1$, while in the second case we get a $K_{t_2,t_2}$ in $G_2$.
\end{proof}
\begin{cor}
\label{cor:ws}
    Let $\sigma$ be a finite signature, whose elements $R_1,\dots,R_k$ are binary relation symbols, and let $\mathscr M$ be a class of $\sigma$-structures. If, for each $i\in [k]$, the $R_i$-reduct of~$\mathscr M$ is weakly sparse, then $\mathscr M$ is weakly sparse.
\end{cor}

Let $G$ be a graph and let $t$ be a non-negative integer. 
A \ndef{subdivision} of $G$ is a graph~$J$ obtained from $G$ by replacing the edges of $G$ by (internally vertex-disjoint) paths. A vertex of $J$ corresponding to a vertex of $G$ is a \ndef{principal vertex} of $J$, while an internal vertex of a path corresponding to an edge of $G$ is a \ndef{subdivision} vertex. The $t$-\ndef[$t$-subdivision]{subdivision} of a graph $G$ is the subdivision of $G$ obtained by replacing each edge of $G$ by a path with $t$ internal vertices; a $(\le t)$-\ndef[$(\le t)$-subdivision]{subdivision} of a graph $G$ is a subdivision of $G$ obtained by replacing each edge of $G$ by a path with at most $t$ internal vertices.

\subsubsection{Shallow minors}

Let $r$ be a positive integer. 
A graph $H$ is a \emph{depth-$r$ minor} of a graph $G$ if there is a map $M$ that assigns to every vertex $v\in V(H)$ a connected subgraph $M(v) \subseteq G$ of radius at most $r$ 
and to every edge $e\in E(H)$ an edge  $M(e)\in E(G)$ such that $M(u)$ and $M(v)$ are vertex disjoint for distinct vertices $u,v\in V(H)$, and
if $e = uv \in E(H)$, then $M(e) = u'v'\in E(G)$ for vertices $u'\in M(u)$ and $v'\in M(v)$.
We denote by $G\shm r$ the class of all the minors of $G$ at depth $r$. In particular, $G\shm 1$ denotes the class of all the minors of $G$ obtained by contracting a star forest.
By extension, if $\mathscr C$ is a class of graphs, we define
$\mathscr C\shm r=\bigcup_{G\in\mathscr C}G\shm r$.
A class $\mathscr C$ of graphs is \ndef[class of graphs]{nowhere dense} if, for every integer $r$, the class $\mathscr C\shm r$ is not the class of all graphs.
More generally, a class~$\mathscr M$ of relational structures is \ndef[class of structures]{nowhere dense} if 
$\Gaif(\mathscr M)$ is nowhere dense.

The following characterization of nowhere dense classes of graphs will be useful.
\begin{theorem}[\cite{DVORAK2018143}]
\label{thm:indND}
    A weakly sparse class of graphs $\mathscr C$ is nowhere dense if, and only if,  for every non-negative integer $t$,  there exists an integer $n_t$ such that no graph in $\mathscr C$ contains the $t$-subdivision of $K_{n_t}$ as an induced subgraph.
\end{theorem}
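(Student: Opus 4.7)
The plan is to prove both directions of this characterization: the forward direction by an elementary minor construction from subdivisions, and the backward direction by a Ramsey argument that crucially exploits weak sparsity.

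For the forward direction I would start from the hypothesis that $\mathscr C$ is nowhere dense, so that for every $r$ the clique number of $\mathscr C \shm r$ is finite. If some $G \in \mathscr C$ contained the $t$-subdivision of $K_n$ as a subgraph (induced or not), then from each subdivision path $v_i = y_0 - y_1 - \cdots - y_{t+1} = v_j$ I would assign $y_1,\dots,y_{\lceil t/2\rceil}$ to the branch set of $v_i$ and the remaining internal vertices to that of $v_j$. This exhibits $K_n$ as a depth-$\lceil t/2 \rceil$ minor of $G$, so setting $r := \lceil t/2\rceil$, any $n_t$ exceeding the clique number of $\mathscr C\shm r$ works.

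For the backward direction I argue the contrapositive: if $\mathscr C$ is not nowhere dense, then for some fixed $t$ the class contains the $t$-subdivision of $K_m$ as an \emph{induced} subgraph for arbitrarily large $m$. Fix $r$ such that $K_N\in\mathscr C\shm r$ for every $N$. A standard unfolding of the corresponding branch sets produces, in some $G_N\in\mathscr C$, principal vertices $v_1,\dots,v_N$ together with internally vertex-disjoint paths $P_{ij}$ in $G_N$ of length bounded by $2r+1$ connecting each pair. Applying Ramsey's theorem to the coloring of pairs $\{i,j\}$ by the length of $P_{ij}$ yields, for $N$ sufficiently large in terms of $m$, a sub-clique of $m$ principal vertices on which every joining path has the same length $t+1$ for some $t\le 2r$, giving the $t$-subdivision of $K_m$ as a subgraph of $G_N$.

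The main obstacle is the final step: upgrading this to an induced subdivision. The idea is to refine the coloring so as to record, on each pair $\{i,j\}$, the chord pattern of $P_{ij}$ — namely which internal positions of $P_{ij}$ send $G_N$-edges outside $P_{ij}$ and to what relative type of target — and, on each triple $\{i,j,k\}$, the pattern of edges between $P_{ij}$ and $P_{ik}$. Iterated Ramsey then leaves a large sub-clique on which every such pattern is constant. Weak sparsity forces every such pattern to be trivial: a uniform non-trivial chord would, assembled across the sub-clique, produce a common neighbor structure delivering a balanced biclique $K_{b,b}$ of arbitrarily large size in $G_N$, contradicting $\bomega(\mathscr C)<\infty$. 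On the final sub-clique all chords have disappeared, so the induced subgraph on the principal and internal vertices is exactly the $t$-subdivision of $K_m$.
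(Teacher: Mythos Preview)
The paper does not prove this theorem: it is stated with a citation to \cite{DVORAK2018143} and used as a black box, so there is no ``paper's own proof'' to compare against. Your task is therefore just to supply a correct proof, and your outline is the standard one, but it has two gaps worth flagging.

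First, in the backward direction, your Ramsey scheme only colours pairs (for chords inside a single $P_{ij}$) and triples (for edges between $P_{ij}$ and $P_{ik}$). You never handle edges between $P_{ij}$ and $P_{k\ell}$ with $\{i,j\}\cap\{k,\ell\}=\emptyset$; these are precisely the chords that prevent the subdivision from being induced in general, and they require a Ramsey argument on $4$-element subsets. The fix is routine --- colour each $4$-subset by the bipartite edge pattern between the relevant paths, and observe that a uniform nontrivial pattern again assembles into a biclique --- but it must be said. Your vague ``chord pattern \dots to what relative type of target'' on pairs does not cover this and should be tightened: on pairs you only need the internal chord pattern of $P_{ij}$ itself.

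Second, your ``standard unfolding'' in step~2 is not quite standard as stated: a depth-$r$ minor model of $K_N$ does not immediately give $N$ principal vertices joined by internally vertex-disjoint short paths, because the routing through a single branch set may reuse vertices. What is true (and well known, e.g.\ via the characterisation of somewhere-dense classes in \cite{Sparsity}) is that unbounded shallow clique minors force, for some fixed $t$, arbitrarily large $(\le t)$-subdivisions of cliques as subgraphs; since you only need $m$ principal vertices out of arbitrarily large $N$, this suffices, but you should either cite this or sketch the extraction.
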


A class $\mathscr C$ of graphs has \ndef{bounded expansion} if , for every integer $t$, there exists an integer $d$ such that $G^{(t)}$ is a subgraph of no graph in $\mathscr C$ if $G$ has average degree at least $d$.

For an in-depth study of shallow minors, nowhere dense classes, and related notions, we refer the interested reader to \cite{Sparsity}.

\subsection{Order Theory}
A (strict) \ndef{partial order} $\prec$ is an irreflexive, asymmetric and transitive binary relation. A \ndef{poset} (or \ndef{partially ordered set)} is a set equipped with a partial order.
In this paper, all posets will be finite unless explicitely stated.
When a partial order is clear from the context, we denote by $\parallel$ the associated \ndef{incomparability} relation: $a\parallel b$ if neither $a\prec b$, nor $a=b$, nor $a\succ b$ holds. A subset of a poset is a \ndef{chain} (resp.\ an  \ndef{antichain}) if all its elements are pairwise comparable (resp.\ non-comparable).
A chain is \ndef{saturated} if  no element can be added between two of its elements without losing  the property of the set of being a chain.
An element~$b$ \emph{covers} an element $a$, and we note $a\prec:b$, if $a\prec b$ and there exists no $z$ with 
$a\prec z\prec b$. If $a\prec: b$, we say that the pair $(a,b)$ is a \ndef{cover}. 
The \ndef{cover graph} of a strictly partially ordered set $(V,\prec)$ is the graph with vertex set~$V$, where~$a$ and~$b$ are adjacent if either $(a,b)$ or $(b,a)$ is a cover. 
The \ndef{Hasse diagram} of a strictly partially ordered set $(V,\prec)$ is an upward 
drawing of the cover graph of $\prec$ in the plane.

A strict partial order $\prec$ is a \ndef{tree-order} if it has a (unique) minimum $r$ (the \ndef{root}) and, for every element $x\neq r$ the set of the elements smaller than $x$ forms a chain. (Note that the Hasse diagram of a tree-order is a tree.) The \ndef{infimum} (or \ndef{least common ancestor} $x\meet y$ of two elements $x$ and $y$ of a tree-order is the maximum element smaller than both $x$ and $y$. The \ndef{parent} of a non-root element $x$ is the (unique) element $\pi(x)$ such that $\pi(x)\prec:x$.

A \ndef{convex subset} of a poset $(V,\prec)$ is a subset $X$ of $V$ such that if $a\prec b$ and $a,b\in X$, then every $z\in V$ with $a\prec z\prec b$ also belongs to $X$. Note that a subset of a tree-order $\prec$ is convex if and only if it is the vertex set of a subtree of the cover graph of $\prec$. 
An \ndef{interval} of a finite poset is a convex subset of the form $\{x\colon a\preceq x\preceq b\}$. Note that on a finite tree-ordered set, intervals are the same as saturated chains.

In this paper, we shall use the notations $\prec$, $\prec:$, $\meet$ for the tree-orders and 
$<$ for the linear orders.

\subsection{More Model Theory}

\subsubsection{Atomic types and order types}
Let $\mathbf M$ be a relational structure with signature~$\sigma$.
Given a  tuple 
$\bar{v} = (v_1, \ldots, v_k)$ of elements of $\mathbf M$. 
The \ndef{atomic type} of  
$\bar{v}$ in $\mathbf M$, denoted as $\atp_{\mathbf M}(v_1, \ldots, v_k)$,  is the quantifier-free formula 
$\alpha(x_1, \ldots, x_k)$ defined as the conjunction of all literals 
$\beta(x_1, \ldots, x_k)$ (that is, formulas 
$x_i=x_j$, 
$R(x_{i_1}, \dots,x_{i_r})$ with $R\in\sigma$ with arity $r$ and $i_1,\dots,i_r\in[k]$)
 or their negations such that $\mathbf M \models \beta(\bar{v})$.
When clear from context, we write $\atp(\bar v)$ instead of $\atp_{\mathbf M}(\bar v)$. 

For elements $a,b$ of a linearly ordered set $(A, <) $, let $\otp(a, b) \in \{<, =, >\}$  indicate whether $a < b$, $a = b$, or $a > b$ holds. 
For $\ell \geq 1$  and an $\ell$-tuple of elements  $a_1, \ldots, a_\ell$  of a linearly ordered set $(A, <) $, we define the \ndef{order type} of  $(a_1, \ldots, a_\ell)$, denoted $\otp(a_1, \ldots, a_\ell)$, as the tuple  $(\otp(a_i, a_j))_{1 \leq i < j \leq \ell}$.

\subsubsection{Stability and dependence on weakly sparse classes of structures}
\label{sec:stdep}

\begin{theorem}[store=thm:Adler,note=\cite{Adler2013}]
	Let $\sigma$ be a finite binary relational signature.
	For a monotone class $\mathscr M$ of finite $\sigma$-structures, the following are equivalent:
	\begin{enumerate}
		\item $\mathscr M$ is dependent;
		\item $\mathscr M$ is monadically dependent;
		\item $\mathscr M$ is stable;
		\item $\mathscr M$ is monadically stable;
		\item $\mathscr M$ is nowhere dense.
	\end{enumerate}
\end{theorem}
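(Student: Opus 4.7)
The plan is to reduce the five-way equivalence to the two non-trivial implications $(1)\Rightarrow(5)$ and $(5)\Rightarrow(4)$. The remaining implications $(4)\Rightarrow(2)\Rightarrow(1)$ and $(4)\Rightarrow(3)\Rightarrow(1)$ are immediate from the definitions: monadic stability (resp.\ monadic dependence) is stability (resp.\ dependence) of every monadic expansion, and stability is in general stronger than dependence. Once the two harder directions are established, the cycle closes.

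For $(1)\Rightarrow(5)$, I would argue by contraposition. Suppose $\mathscr M$ is somewhere dense, so that there exists $r\in\mathbb N$ with $\Gaif(\mathscr M)\shm r$ containing every finite simple graph. By a standard passage from shallow minors to shallow topological minors, there is an integer $r'$ such that for every graph $G$, the Gaifman graph of some $\mathbf N\in\mathscr M$ contains a $({\leq}2r'+1)$-subdivision of $G$ as a subgraph; by monotonicity, such a subdivision itself belongs to $\mathscr M$. I would then fix a $\sigma$-formula $\rho_E(x,y)$ asserting that $x\neq y$ and that there is a path of length at most $2r'+2$ in the Gaifman graph from $x$ to $y$ whose internal vertices have exactly two Gaifman-neighbors. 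Taking $f$ to map each vertex of $G$ to its principal vertex in the chosen subdivision, $\rho_E$ defines $G$ from $\mathbf N$. Since $G$ was arbitrary, the class of all finite graphs is defined in $\mathscr M$, contradicting $(1)$.

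For $(5)\Rightarrow(4)$, the plan is to combine the structural characterization of nowhere dense classes via low tree-depth colorings with the monadic stability of classes of bounded tree-depth. Nowhere denseness guarantees that for every $d$ there exists $N=N(d)$ such that every graph in $\Gaif(\mathscr M)$ admits a vertex $N$-coloring in which any $d$ color classes induce a subgraph of tree-depth at most $d$. Crucially, a monadic expansion of $\mathscr M$ does not alter the Gaifman graph, so this coloring property is preserved after enriching the signature by any finite family of unary predicates. On the other hand, classes of bounded tree-depth are monadically stable: a structure of tree-depth $d$ interprets inside a colored tree of height $d$, and colored trees of bounded height are monadically stable by a classical Podewski--Ziegler style argument. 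Combining these via Gaifman's locality theorem---every $\sigma^+$-formula is a Boolean combination of basic local sentences and local formulas of some fixed radius $r$, whose truth depends only on $r$-neighborhoods---reduces the question of monadic stability to the local case, which is controlled by the low tree-depth coloring applied with parameter $d=2r+1$.

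The principal obstacle is $(5)\Rightarrow(4)$; it requires assembling three nontrivial ingredients---the low tree-depth decomposition of nowhere dense classes, the monadic stability of bounded-tree-depth classes through colored trees, and the Gaifman-locality reduction inside a monadic expansion---and carefully verifying their compatibility with the weak sparsity and monotonicity assumptions. The direction $(1)\Rightarrow(5)$, by contrast, is a clean subdivision-based definability argument enabled by monotonicity, while the remaining implications require no argument.
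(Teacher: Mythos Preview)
The paper does not prove this theorem: it is quoted from Adler--Adler and used as a black box, so there is no in-paper argument to compare your sketch against.

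Your contrapositive for $(1)\Rightarrow(5)$ is essentially correct; the one wrinkle is that in a $({\leq}2r'+1)$-subdivision of an arbitrary $G$, a principal vertex of degree at most $2$ in $G$ is indistinguishable from a subdivision vertex under your ``exactly two Gaifman neighbours'' test. This is harmless: restrict to subdivisions of cliques (or of graphs with minimum degree $\geq 3$), which already suffice to define all finite graphs.

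The real gap is in $(5)\Rightarrow(4)$. First, the colouring property you invoke---for every $d$ a \emph{uniform} bound $N(d)$ on the number of colours---characterises bounded expansion, not nowhere denseness; nowhere dense classes only satisfy $N=|G|^{o(1)}$, so you are assuming a strictly stronger hypothesis than $(5)$. Second, even granting such colourings, you do not explain how Gaifman locality meshes with them: an $r$-local formula depends on the entire $r$-ball around its free variables, and that ball may meet arbitrarily many colour classes, so there is no evident reduction to a bounded-tree-depth substructure. The actual Adler--Adler argument avoids both issues by going through (uniform) quasi-wideness, which is equivalent to nowhere denseness for monotone classes: from any long sequence one removes a bounded set of vertices and extracts a large $r$-scattered subsequence, which forces indiscernibility for $r$-local formulas in every monadic expansion and thus kills the order property. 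That quasi-wideness extraction is the missing mechanism in your sketch.
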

We shall see how this theorem extends to weakly sparse  hereditary classes of general relational structures in \zcref{sec:Gstdep}.

\section{Tree-ordered weakly sparse  
graphs}
\label{sec:TOWS_G}
In this section, we introduce tree-ordered graphs and derived notions that will be used all along this paper.
\begin{ndefi}[Tree-ordered graphs]
\label{def:TO_G}
Let $\sigma$ be a finite relational signature.
A \ndef[graph]{tree-ordered} graph is a $\{\prec, E\}$-structure, where $\prec$ is a tree-order and $E$ is a symmetric (binary) adjacency relation.
\end{ndefi}

\begin{ndefi}[TOWS]
A class $\mathscr C$ of tree-ordered graphs is \ndef[class]{tree-ordered weakly sparse} (or \ndef[class]{TOWS}) if the class 
$\mathscr C^E$ (that is, the $E$-reduct of $\mathscr C$)
is weakly sparse. By extension, when a bound of the maximal order of a balanced complete bipartite is assumed to be implicitly fixed, we shall speak about \ndef[graph]{tree-ordered  weakly sparse} graphs (or \ndef[graph]{TOWS} graphs).
\end{ndefi}

\subsection{The tree-ordered incidence graph}

\begin{ndefi}
\label{def:TInc_G}
The \ndef[of a tree-ordered graph]{tree-ordered incidence graph} $\TInc(\mathbf M)$ of a tree-ordered graph $\mathbf M$ is 
the tree-ordered graph, whose $E$-reduct is the $1$-subdivision of $\mathbf M^E$, and whose 
$\prec$-reduct is the tree-order obtained from $\mathbf M^\prec$ by adding all subdivision vertices as an antichain (comparable with the root only).
\end{ndefi}
\begin{lem}
    \label{lem:inc2tows_G}
    There exists a transduction~$\mathsf T$ such that, for every TOWS graph~$\mathbf M$, $\mathbf M\in\mathsf T(\TInc(\mathbf M))$.
\end{lem}

\begin{proof}
    The transduction $\mathsf T$ uses a unary predicate $V$,
  and  is defined by $\nu(x):=V(x)$ and $\rho_\prec(x,y):=(x\prec y)$ and $\rho_E(x_1,x_2):=(x_1\neq x_2)\wedge\bigl(\exists t \  E(x_1,t)\wedge E(x_2,t)\bigr)$.
    
    Let $\mathbf M$ be a tree-ordered graph.
    The vertices of $\TInc(\mathbf M)$ that belong to the domain  $\mathbf M$ are marked by $V$.
    With this marking, it is easily checked that $\mathbf M\in\mathsf T(\TInc(\mathbf{M}))$.
\end{proof}

\subsection{The fundamental graph}
\label{sec:fund}
Every graph $G$ defines a binary matroid $M(G)$ (its \ndef{cycle matroid}), whose independent sets are the edge sets of the forests of $G$. To every bipartition $(Y,Z)$ of $M(G)$ (that is, of the edge set of $G$) into a basis $Y$ (that is, a spanning forest of $G$) and its complement is associated a \ndef{fundamental graph} (also called \ndef{interlacement graph} or \ndef{dependence graph}) $\Lambda_{G}(Y,Z)$, which is a bipartite graph with parts $Y$ and $Z$, where $e\in Y$ is adjacent to $f\in Z$ if~$e$ belongs to the unique cycle in $Y\cup\{f\}$. See, for instance,
\cite{welsh} for an comprehensive introduction to matroid theory and
\cite{krogdahl1977dependence,bouchet2001multimatroids} for more details on fundamental graphs.

The main properties of $\Lambda_{G}(Y,Z)$ are as follows:
\begin{itemize}
\item For $e\in Y$ and $f\in Z$, $e$ is adjacent to $f$ in $\Lambda_{G}(Y,Z)$ if and only if $Y-e+f$ is a basis of $M(G)$. In such a case, we have
\[\Lambda_{G}(Y-e+f,Z-f+e)=\Lambda_{G}(Y,Z)\wedge ef,
\]
where $\wedge$ denotes here the pivoting operation.
\item if $e\in Y$, then 
\[\Lambda_{G/e}(Y-e,Z)=\Lambda_{G}(Y,Z)-e.
\]
\item if $f\in Z$, then 
\[\Lambda_{G\setminus f}(Y,Z-f)=\Lambda_{G}(Y,Z)-f.
\]
\end{itemize}

In a TOWS graph, the cover graph of the tree-order may naturally play the role of a spanning tree, hence suggesting the next definition.

\begin{ndefi}
    The \ndef{fundamental graph} $\Lambda(\mathbf M)$ of a tree-ordered graph $\mathbf M$ is defined as $\Lambda_G(Y,Z)$, where
    $G$ is the multigraph obtained by adding to $\mathbf M^E$ the edges of the cover graph of $\mathbf M^\prec$, $Y$ is the set of the covers of $\mathbf M^\prec$, and $Z=E(\mathbf M)$. 
\end{ndefi}

\begin{lem}
\label{lem:lambda_G}
There exists a transduction 
$\mathsf T_{\Lambda}$ such that, for every TOWS graph~$\mathbf M$, $\Lambda(\mathbf M)\in \mathsf T_\Lambda(\TInc\mathbf M))$.
\end{lem}
\begin{proof}
    The transduction $\mathsf{T}_\Lambda$ uses unary predicates $Y$ and $Z$.
    We define the transduction $\mathsf T_\Lambda$ by $\nu(x):=\top$ and
$$\rho_E(x,y):=Y(x)\wedge Z(y)\wedge 
    \bigl(\exists v,v'\ E(v,y)\\
    \wedge E(v',y)\wedge (v\meet v' \prec x)\wedge ((x\preceq v)\vee(x\preceq v'))
    \bigr),$$
where $v\meet v'$ denotes the infimum of $v$ and $v'$ in the tree-order, which is first-order definable.

    Let $\mathbf M$ be a tree-ordered graph. 
    We mark $Y$ the vertices of $\Inc(\mathbf M)$ that belong to the domain of $\mathbf M$ and by $Z$ the subdivision vertices.
    Identifying a cover $(x,y)$ with its large vertex $y$, we easily check that the interpreted graph is $\Lambda(\mathbf M)$.
\end{proof}

\subsection{Tree-ordered minors and {\tim}s}
\label{sec:minor_G}

Let $\mathbf M$ be a tree-ordered structure.
An \ndef[tree-ordered graph]{elementary $\prec$-contraction} of a tree-ordered graph $\mathbf M$ is obtained by identifying a vertex $v$ with the element $u$ it covers. A \ndef[tree-ordered graph]{$\prec$-contraction} is a sequence of elementary $\prec$-contractions.

\begin{ndefi}[\tim]
An \ndef[[tree-ordered graph]{\tim} of a tree-ordered graph $\mathbf M$ is a $\prec$-contraction of an induced tree-ordered subgraph of~$\mathbf M$ (including the root). 
We denote by $\Cont(\mathbf M)$ the set of all the {\tim}s of $\mathbf M$, and, by extension, define $\Cont(\mathscr C)=\bigcup_{\mathbf M\in\mathscr C}\Cont(\mathbf M)$.
\end{ndefi}

\begin{fact}
    An {\tim} of an {\tim} of a tree-ordered graph $\mathbf M$ is an {\tim} of $\mathbf M$, that is:
    $$\Cont(\Cont(\mathbf M))=\Cont(\mathbf M).$$
\end{fact}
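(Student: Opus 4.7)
The reverse inclusion $\Cont(\mathbf M)\subseteq\Cont(\Cont(\mathbf M))$ is immediate, since $\mathbf M$ is itself an {\tim} of $\mathbf M$ (take the full induced subgraph and an empty sequence of elementary contractions). The plan is therefore to establish $\Cont(\Cont(\mathbf M))\subseteq\Cont(\mathbf M)$. Fix $\mathbf P\in\Cont(\mathbf N)$ and $\mathbf N\in\Cont(\mathbf M)$, so there exist a subset $M_1\subseteq M$ containing the root of $\mathbf M$ together with a $\prec$-contraction $\pi_1\colon\mathbf M[M_1]\to\mathbf N$, and a subset $N_1\subseteq N$ containing the root of $\mathbf N$ together with a $\prec$-contraction $\pi_2\colon\mathbf N[N_1]\to\mathbf P$. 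The first step is to record each $\prec$-contraction as an equivalence relation whose classes are connected subtrees of the Hasse diagram of the underlying tree-order (each elementary contraction merges a vertex with its parent, so the composition keeps classes connected in the cover graph); call these $\sim_1$ on $M_1$ and $\sim_2$ on $N_1$.

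Next, I would lift $N_1$ back to $\mathbf M$ by setting $M^\star:=\pi_1^{-1}(N_1)\subseteq M_1$, i.e.\ the union of those $\sim_1$-classes that belong to $N_1$. Since $N_1$ contains the root of $\mathbf N$, the set $M^\star$ contains the root of $\mathbf M$; furthermore, the restriction of a tree-order to any subset containing the root is again a tree-order, so $\mathbf M[M^\star]$ is a valid induced tree-ordered subgraph of $\mathbf M$ including the root. Because $M^\star$ is a union of entire $\sim_1$-classes, the restriction $\sim_1^\star$ of $\sim_1$ to $M^\star$ is well defined, its classes are exactly the elements of $N_1$, and they remain connected in the Hasse diagram of $\mathbf M[M^\star]^\prec$ (covers internal to a $\sim_1$-class are preserved when passing from $\mathbf M[M_1]$ to $\mathbf M[M^\star]$, since the class sits inside both). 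Hence $\sim_1^\star$ corresponds to a genuine $\prec$-contraction of $\mathbf M[M^\star]$ with quotient canonically isomorphic to $\mathbf N[N_1]$.

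Finally, I would compose the two contractions: let $\sim$ be the equivalence on $M^\star$ whose classes are the preimages under $\pi_1$ of the $\sim_2$-classes. These classes are obtained by gluing along covers of $\mathbf N[N_1]$ certain $\sim_1^\star$-classes that are themselves subtrees, so $\sim$-classes are again connected subtrees of the Hasse diagram of $\mathbf M[M^\star]^\prec$, and $\sim$ therefore encodes a $\prec$-contraction whose quotient is $\mathbf P$ (the tree-order, universe, and $E$-relation all match through the two intermediate quotient bijections). This exhibits $\mathbf P$ as an {\tim} of $\mathbf M$. The main obstacle in this plan is bookkeeping: checking that taking an induced subgraph containing the root commutes (up to restricting the equivalence) with $\prec$-contraction, i.e.\ that covers inside the retained classes survive when intermediate vertices are deleted, and that no spurious edges appear in the composed quotient.
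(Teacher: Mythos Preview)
Your proposal is correct, and the bookkeeping you flag at the end goes through: the two points that need checking are that (i) covers internal to a retained $\sim_1$-class survive the passage from $\mathbf M[M_1]$ to $\mathbf M[M^\star]$ because $M^\star\subseteq M_1$, and (ii) a cover in the quotient $\mathbf N[N_1]$ lifts to a cover in $\mathbf M[M^\star]$ between the corresponding $\sim_1^\star$-classes (each class, being a subtree, has a minimum whose parent lies in the neighbouring class), so the composite $\sim$-classes are again subtrees.

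The paper takes a different, operational route. Rather than constructing $M^\star$ and $\sim$ globally, it reduces to the atomic case where the first {\tim} is a single elementary $\prec$-contraction of a cover $(u,v)$ and the second is a single vertex deletion of some $w\neq v$, and then checks a two-case commutation: if $w\neq u$ the two operations commute outright; if $w=u$ (the surviving endpoint of the contraction), the composite effect is obtained by deleting both $u$ and $v$ from $\mathbf M$. Your approach is more structural and produces the witnessing subset and partition in one shot, at the price of the verifications you mention; the paper's is terser but leaves the induction over the lengths of the two operation sequences implicit.
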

\begin{proof}
    It is enough to prove the property when the first {\tim} is obtained by an elementary $\prec$-contraction and the second by a deletion.
    Let $\mathbf M$ be a tree-ordered graph~$\mathbf M$, let $(u,v)$ be a cover of $\mathbf M^\prec$, let $w$ be a vertex of $\mathbf M$ different from $v$ (as the $\prec$-contraction identifies $v$ with $u$), and let $\mathbf N$ be the tree-ordered structure obtained from $\mathbf M$ by the $\prec$-contraction of $(u,v)$ followed by the deletion of $w$.
    
    If $w\neq u$, the two operations clearly commute; otherwise, $\mathbf N$ can be obtained from $\mathbf M$ by deleting both $u$ and $v$.
\end{proof}

An \ndef{elementary deletion} of a tree-ordered graph $\mathbf M$  is obtained by deleting an edge in $E(\mathbf M)$. A \ndef{deletion} is a sequence of elementary deletions.

\begin{ndefi}[Tree-ordered minor]
\label{def:TminorG}
A \ndef{tree-ordered minor} of a tree-ordered graph $\mathbf M$ is a tree-ordered graph obtained from an {\tim} of $\mathbf M$ by a sequence of elementary deletions. For a tree-ordered graph $\mathbf M$, we denote by $\Minor(\mathscr M)$ the class of all the tree-ordered minors of $\mathbf M$ and extend the notation to classes of tree-ordered graphs.
\end{ndefi}

It is easily checked that a tree-ordered minor of a tree-ordered minor of a tree-ordered graph $\mathbf M$ is a tree-ordered minor of $\mathbf M$.

Fundamental graphs are designed to encode tree-ordered minor operations in a purely structural way:
both $\prec$-contractions and deletions correspond to deleting vertices in the fundamental graph.
As a consequence, the minor order of a tree-ordered graphs can be studied as an induced-substructure order. 
This is captured by the following two facts.

\begin{fact}
    Let $\mathbf M$ be a tree-ordered graph. 
    Then,
    $\Minor(\mathbf M)$ (ordered by tree-ordered minor) is isomorphic to
    the poset of all induced subgraphs of $\Lambda(\mathbf M)$ (ordered by induced subgraph). 
\end{fact}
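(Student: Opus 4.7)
The plan is to match each tree-ordered minor of $\mathbf M$ with an induced subgraph of $\Lambda(\mathbf M)$, using the identification $V(\Lambda(\mathbf M)) = Y \cup Z$, where $Y$ is the set of covers of $\mathbf M^\prec$ and $Z = E(\mathbf M)$: a subset $X \subseteq Y\cup Z$ will correspond to the minor obtained by contracting the covers in $Y\setminus X$ and deleting the edges in $Z\setminus X$. First I would reduce the three elementary operations producing a tree-ordered minor (restriction to an induced tree-ordered subgraph, $\prec$-contraction of a cover, edge deletion) to just the last two. The key observation is that deleting a non-root vertex $v$ with parent $u$ produces the same tree-ordered graph as first deleting every $E$-edge incident to $v$ and then contracting the cover $(u,v)$: after the contraction the descendants of $v$ hang from $u$ exactly as in the restricted tree-order, while the prior edge deletions guarantee that no edge originally incident to $v$ survives as an edge incident to $u$. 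Iterating this simulation, every tree-ordered minor of $\mathbf M$ is obtainable by some choice of contracted covers $C\subseteq Y$ and deleted edges $D\subseteq Z$, i.e.\ by the subset $X:=(Y\cup Z)\setminus(C\cup D)$ of $V(\Lambda(\mathbf M))$.

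Writing $\mathbf N_X$ for the minor thus produced, I would then apply the two structural identities for fundamental graphs recalled in \zcref{sec:fund}---namely $\Lambda_{G/e}(Y-e,Z)=\Lambda_G(Y,Z)-e$ for $e\in Y$ and $\Lambda_{G\setminus f}(Y,Z-f)=\Lambda_G(Y,Z)-f$ for $f\in Z$---one step at a time along the sequence of contractions and deletions encoded by $X$. This yields $\Lambda(\mathbf N_X)=\Lambda(\mathbf M)[X]$, from which $X$ is recovered as the vertex set of $\Lambda(\mathbf N_X)$. Hence $X\mapsto \mathbf N_X$ is a bijection between induced subgraphs of $\Lambda(\mathbf M)$ and tree-ordered minors of $\mathbf M$. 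Order preservation is then immediate: shrinking $X$ amounts to contracting or deleting additional elements of $Y\cup Z$, which is exactly ``taking a further tree-ordered minor''.

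The main obstacle is the first step---the simulation of vertex deletion by edge deletions followed by a cover contraction---which must be checked carefully in order to ensure that descendants of the deleted vertex are correctly re-attached to its parent and that no edge originally incident to $v$ survives the contraction as a spurious edge at $u$. Once this reduction is in place, the two recursive identities of \zcref{sec:fund} do essentially all the remaining work.
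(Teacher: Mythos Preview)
The paper states this Fact without proof. Your reduction of vertex deletion to ``delete all incident $E$-edges, then contract the cover to the parent'' is correct and shows that $X\mapsto \mathbf N_X$ is surjective onto $\Minor(\mathbf M)$.

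The injectivity step, however, does not go through. The identities from \zcref{sec:fund} give $\Lambda_{(G/C)\setminus D}(Y\setminus C,Z\setminus D)=\Lambda(\mathbf M)[X]$ for the \emph{multigraph} minor $(G/C)\setminus D$, but $\mathbf N_X$ is a simple tree-ordered graph: since $E$ is a binary relation, a $\prec$-contraction that sends two distinct edges of $Z\cap X$ to the same pair of endpoints collapses them to a single edge, and then $\Lambda(\mathbf N_X)$ has strictly fewer vertices than $|X|$, so $X$ cannot be recovered from $\mathbf N_X$. Concretely, take $V(\mathbf M)=\{r,u,v,w\}$ with covers $(r,u),(r,w),(u,v)$ and $E$-edges $\{u,w\},\{v,w\}$: contracting $(u,v)$ and deleting nothing, deleting only $\{u,w\}$, or deleting only $\{v,w\}$ all yield the same tree-ordered minor, so three distinct subsets $X$ collapse. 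In particular the two posets do not even have the same cardinality on this example. The statement and your argument become correct if tree-ordered minors are read in the multigraph sense (retaining parallel edges and loops after contraction), which is presumably the intended reading given the matroidal framing; you should make that convention explicit.
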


\begin{fact}
Let $\mathbf M$ be a tree-ordered graph,
let $G$ be the graph whose edge set is the union of $E(\mathbf M)$ and the set $Y$ of the covers of $\mathbf M$. 
Then, $H\in\Minor(\mathbf M)^E$ if and only if 
 $H$ is a minor of $G$ obtained by contracting some edges in $Y$ and deleting some edges not in $Y$.
\end{fact}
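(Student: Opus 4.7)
The plan is to interpret the three tree-ordered minor operations on $\mathbf M$ (taking an induced tree-ordered subgraph containing the root, performing $\prec$-contractions, and deleting $E$-edges) as the three standard minor operations on $G$ (vertex deletion, contraction of an edge in $Y$, and deletion of an edge not in $Y$). The structural fact that makes this correspondence work is that $Y$ is a spanning tree of $G$: since $\mathbf M^\prec$ has a unique minimum $r$ and each non-root element has a unique parent, the cover graph is a tree on $V(\mathbf M)$, and in particular every subgraph of $Y$ is a forest whose components are subtrees of $Y$.

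For the forward implication, suppose $\mathbf N \in \Minor(\mathbf M)$ arises via $\mathbf M \to \mathbf M_0 = \mathbf M[U] \to \mathbf M_1 \to \mathbf N$, where $r \in U$, the map $\pi : U \to V(\mathbf M_1)$ records the $\prec$-contraction, and $\mathbf N$ is obtained from $\mathbf M_1$ by deleting a set of $E$-edges. For each $w \in V(\mathbf N)$ I would lift the fiber $\pi^{-1}(w) \subseteq U$ to a subset $S_w \subseteq V(\mathbf M)$ by adjoining, for every cover $(u,v)$ in $\mathbf M_0^\prec$ used inside $\pi^{-1}(w)$, the vertices of $V(\mathbf M) \setminus U$ lying strictly between $u$ and $v$ in $\mathbf M^\prec$. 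Since $Y$ is a tree, each $S_w$ is a subtree of $Y$, and distinct $S_w$ are disjoint: a shared intermediate vertex would place an element of $U$ strictly between the endpoints of some cover of $\mathbf M_0^\prec$, contradicting the cover property. Then in $G$, I would delete every vertex outside $\bigsqcup_w S_w$, contract every $Y$-edge lying inside some $S_w$, and delete every non-$Y$ edge that either has an endpoint in $V(\mathbf M) \setminus U$ or whose image in $\mathbf M_1$ is removed when forming $\mathbf N$. The resulting minor has vertex set $\{S_w : w \in V(\mathbf N)\}$ with edges matching $E(\mathbf N)$, hence equals $\mathbf N^E$.

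For the reverse implication, given a minor $H$ of $G$ obtained by contracting some $Y' \subseteq Y$, deleting some $E' \subseteq E(\mathbf M)$, and possibly removing a set $W$ of non-root vertices, I would let $S_1, \dots, S_n$ be the connected components of $(V(\mathbf M) \setminus W, Y')$. Each $S_i$ is a subtree of $Y$ (as a component of a subgraph of a tree), and after relabeling one can assume $r \in S_1$. Setting $U = \bigsqcup_i S_i$ and $\mathbf M_0 = \mathbf M[U]$ gives an induced tree-ordered subgraph of $\mathbf M$ containing $r$, in which each $S_i$ remains a subtree of the cover graph of $\mathbf M_0^\prec$, because the unique $\mathbf M^\prec$-cover-graph path between any two elements of $S_i$ stays inside $S_i \subseteq U$. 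Collapsing each $S_i$ to a single vertex via $\prec$-contractions produces $\mathbf M_1$, and then deleting the $E$-edges coming from $E'$ yields $\mathbf N \in \Minor(\mathbf M)$ with $\mathbf N^E = H$.

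The main technical point is the lifting step in the forward direction: passing to $\mathbf M_0$ suppresses intermediate vertices of $\mathbf M^\prec$ and collapses chains into single covers of $\mathbf M_0^\prec$, so these vertices must be reinserted into the $G$-side branch sets in order to obtain genuine $Y$-subtrees on which to perform the contractions; any spurious non-$Y$ edges incident to the reinserted vertices are then harmlessly cleaned up by the allowed non-$Y$ deletions. The rest is bookkeeping, enabled throughout by $Y$ being a spanning tree.
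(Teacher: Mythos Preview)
The paper states this as a fact without proof, so there is no argument to compare against. Your proof, however, has a genuine gap that on inspection reveals the statement itself to be inaccurate as written.

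In your forward direction, after contracting each $S_w$ and deleting the admissible non-$Y$ edges, every $Y$-edge joining two distinct branch sets $S_w,S_{w'}$ survives in the $G$-minor: it cannot be deleted (it lies in $Y$), it is not contracted (its endpoints lie in different $S_w$), and neither endpoint is removed (both lie in $\bigsqcup_w S_w$). Such an edge is present in your $G$-minor but is never present in $\mathbf N^E$, which records only the $E$-relation. Concretely, let $\mathbf M$ be the chain $r\prec a\prec b$ with $E(\mathbf M)=\emptyset$; then $\mathbf M^E=3K_1\in\Minor(\mathbf M)^E$, whereas your construction (with $U=\{r,a,b\}$ and trivial $\pi$) outputs the path $P_3$. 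Your reverse direction has the mirror defect: if $H$ carries an edge coming from a $Y$-edge between branch sets, the $\mathbf N$ you build has no corresponding $E$-edge, so $\mathbf N^E\neq H$.

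The same example in fact refutes the literal statement in both directions: $3K_1$ lies in $\Minor(\mathbf M)^E$ but cannot be obtained from $G=P_3$ by contracting $Y$-edges and deleting non-$Y$ edges (even allowing vertex deletion, three pairwise non-$Y$-adjacent nonempty subtrees of a $3$-vertex path do not exist), while $P_3$ is such a $G$-minor yet has edges and hence is not in $\Minor(\mathbf M)^E$. What the surrounding discussion of fundamental graphs actually supports is the version in which $\mathbf N^E$ is replaced by the \emph{flattening} of $\mathbf N$, i.e.\ the graph on $V(\mathbf N)$ with edge set $E(\mathbf N)$ together with the cover graph of $\mathbf N^\prec$. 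Under that correction, vertex deletion on the $G$-side is unnecessary, the correspondence is simply $Y'\leftrightarrow$\,contracted covers and $E'\leftrightarrow$\,deleted $E$-edges, and the bookkeeping is immediate.
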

\section{Characterization by induced substructures}
\label{sec:dep_bin}

Dreier,  M\"ahlmann  and  Toru\'nczyk in~\cite{dreier2024flipbreakability} characterized monadically dependent classes of binary structures by forbidden combinatorial patterns, called \ndef{transformers}, which generalize the notion of $h$-subdivided bicliques
(definitions will be given in a moment). 

\begin{theorem}[store=thm:dependent,note={\cite{dreier2024flipbreakability}}]
	A class $\mathscr C$ of binary structures is  monadically dependent if and only if for every $h\in\mathbb N$ there is some $n$ such that 
	no structure $\mathbf M\in\mathscr C$ contains a minimal transformer of length $h$ and order $n$.
\end{theorem}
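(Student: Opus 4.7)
The plan is to prove both implications of the equivalence, relying on the Baldwin--Shelah characterization that monadic dependence coincides with the non-transducibility of the class of all finite graphs, together with a Ramsey-type structure theory for first-order types.

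For the forward direction, I would argue by contrapositive: assume that for some fixed $h\in\mathbb N$, the class $\mathscr C$ contains minimal transformers of length $h$ and arbitrarily large order $n$. Since a transformer is, by design, a generalization of the $h$-subdivided biclique, it exhibits a first-order definable adjacency pattern of bounded quantifier-depth between two parameter sets (the ``sides'' of the pattern). The strategy is then to mark, via a monadic expansion, the two parameter sets with unary predicates together with an auxiliary coloring of the internal ``transformer paths,'' and to write a first-order formula $\varphi(x,y)$ of depth depending on $h$ only, which expresses ``there is a properly-colored length-$\le h$ path witnessing the transformer relation between $x$ and $y$.'' Because the transformer is minimal and of arbitrary order, the bipartite adjacency graph induced by $\varphi$ on the colored parameter sets can realize every finite bipartite graph; hence every finite graph is transducible from $\mathscr C$, contradicting monadic dependence.

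The converse is the harder direction and is where the bulk of the technical work lies. Assume that for every $h$, the minimal transformers of length $h$ inside $\mathscr C$ have bounded order. The plan is to extract from this uniform absence of dense patterns a \emph{flip-breakability} property: every sufficiently large set $A$ in a structure $\mathbf M\in\mathscr C$ admits, after performing a bounded number of symmetric-difference flips of its edge relations, a decomposition into pieces of bounded radius inside $A$. One obtains this by a Ramsey-type extraction: given any ``badly behaved'' configuration that would witness non-flip-breakability, one iteratively refines it, using the local behavior of first-order types with respect to monadic colorings, until it stabilizes into a canonical shape; this canonical shape is precisely a minimal transformer, whose length is bounded in terms of the quantifier rank and whose order grows with the size of the original configuration. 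The assumption that transformers have bounded order therefore forces flip-breakability, which by the main result of \cite{dreier2024flipbreakability} is equivalent to monadic dependence.

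The main obstacle is the Ramsey-style extraction of a minimal transformer from non-flip-breakability. The difficulty is twofold: first, one must control the length $h$ of the extracted transformer uniformly in the parameters (rather than merely producing \emph{some} transformer); second, one must guarantee minimality, so that the extracted object truly encodes a generic bipartite interaction and cannot be ``collapsed'' into a smaller pattern of the same flavor. This requires a careful combinatorial analysis of how first-order formulas of fixed quantifier rank carve out definable relations in colored binary structures, and is the reason the argument proceeds through the auxiliary notion of transformers rather than directly through the simpler, but less flexible, subdivided bicliques.
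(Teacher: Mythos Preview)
The paper does not prove this theorem at all: it is stated with the citation \cite{dreier2024flipbreakability} and used as a black box throughout (see how \zcref{thm:dependentTOWS} is derived from it via \zcref{lem:tr2tw,lem:cleantw}). There is therefore no ``paper's own proof'' to compare against; the result is imported wholesale from the cited work.

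As for your sketch itself: it is a plausible high-level summary of the architecture of \cite{dreier2024flipbreakability}, but it is not a proof. The forward direction is essentially right in spirit, though your description of a transformer as encoding adjacency via ``properly-colored length-$\le h$ paths'' is imprecise---transformers are sequences of meshes with conducting pairs, and the actual formula one writes chains through the mesh structure rather than following graph paths. For the converse, you correctly identify that the route is through flip-breakability, but you then simply assert that non-flip-breakability yields a transformer via ``Ramsey-style extraction'' and defer to \cite{dreier2024flipbreakability} for the equivalence of flip-breakability and monadic dependence. That is circular: the theorem you are trying to prove \emph{is} part of that equivalence package, so invoking the main result of the cited paper to establish it is not a proof but a restatement of the citation. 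If you intend to give an independent argument, the hard content---the Ramsey extraction of a bounded-length minimal transformer from a failure of flip-breakability, with uniform control on $h$---is precisely what you have labeled ``the main obstacle'' and left undone.
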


In the case of hereditary classes of graphs, this materializes in a
family of minimal forbidden induced subgraphs as proved in 
\cite{dreier2024flipbreakability}. Similarly, for tree-ordered weakly
sparse graph classes we derive an explicit family of substructure obstructions,
introducing \emph{clean twisters}, strengthening the notion of transformers.

The first main result of this section is the following characterization theorem.

\getkeytheorem{thm:dependentTOWS}

From this theorem, we will derive that every independent
class of TOWS graphs includes a class constructed from a finite tree-ordered structure, which we call a \emph{core}. This   result, \zcref{thm:lics_twists}, is the main result of this section. 

\subsection{Transformers and meshes in binary structures}
\label{sec:transformers}
Let us recall the needed definitions, introduce some new ones, and prove some preliminary lemmas.

\begin{defi}[mesh, submesh]
	Let $I,J$ be two non-empty totally ordered sets.

An  \emph{$(I,J)$-mesh} (or simply a \ndef{mesh} when $I$ and $J$ are clear from the context) in a binary structure $\mathbf M$ is an injective function $\mu:I\times J\rightarrow M$. 
We define $V(\mu):=\mu(I\times J)$. For a mesh $\mu$, we define $\mu^\top:J\times I\rightarrow M$ by $\mu^\top(j,i)=\mu(i,j)$. 

A \ndef{submesh} of $\mu$ is the restriction   $\mu|_{I'\times J'}$ of $\mu$ to $I'\times J'$, where $I'\subseteq I$ and $J'\subseteq J$.
\end{defi}

\begin{defi}[vertical and horizontal mesh]
Let $\mu$ be an $(I,J)$-mesh in a binary structure $\mathbf M$. Then $\mu$ is \ndef{vertical} in $\mathbf M$ if  there is a function $a:I\rightarrow M$ such that
\begin{itemize}
    \item $\atp(\mu(i,j),a(i'))$ depends only on $\otp(i,i')$ for all $i,i'\in I$ and $j\in J$, and
    \item $\atp(\mu(i,j),a(i'))$ is not the same for all $i,i'\in I$ and $j\in J$.
\end{itemize}
Any function $a$ satisfying the above is a \ndef{vertical guard} of $\mu$.

We say that $\mu$ is \ndef{horizontal} if $\mu^\top$ is vertical.
A \ndef{horizontal guard} of $\mu$ is a vertical guard of $\mu^\top$.
\end{defi}

\begin{defi}[pairs of meshes]
Let $\mu,\mu': I\times J\rightarrow M$ be meshes in a binary structure $\mathbf M$. We say that~$(\mu,\mu')$ is
\begin{itemize}
	\item \ndef[pair of meshes]{regular} if $\atp(\mu(i,j),\mu'(i',j'))$ depends only on 
	$\otp(i,i')$ and $\otp(j,j')$;
	\item \ndef[pair of meshes]{homogeneous} if $\atp(\mu(i,j),\mu'(i',j'))$ is the same for all $i,j,i',j'$;
	\item \ndef{conducting} if 
    either $|I|=|J|\leq 3$ or 
    $(\mu,\mu')$ is regular but not homogeneous.
\end{itemize}
\end{defi}

\begin{defi}[store=def:min_transfo, note=minimal transformer]
	Let $h$ be an integer. A \emph{minimal $(I,J)$-transformer} (or simply a \ndef{minimal transformer} when $I$ and $J$ are clear from the context) of length $h$ is a sequence
	$(\mu_1,\dots,\mu_h)$ of $(I,J)$-meshes, with $|I|, |J|\geq 4$, such that
	\begin{itemize}
		\item $\mu_s$ is vertical if and only if $s=1$,
		\item $\mu_s$ is horizontal if and only if $s=h$,
		\item $\mu_s\neq \mu_t$ if $s\neq t$,
        \item $(\mu_s, \mu_t)$ is regular for all $s, t \in [h]$ (including for $s=t$),
		\item $(\mu_s,\mu_t)$ is conducting if $|s-t|=1$,
		\item $(\mu_s,\mu_t)$ is homogeneous if $|s-t|>1$.
	\end{itemize}
	A \emph{minimal transformer of order $n$} is a minimal $(I,J)$-transformer with $|I|=|J|=n$. (See \zcref{fig:transformer} for an example.)
\end{defi}

\begin{figure}[h!t]
    \centering
    \includegraphics[width=\linewidth]{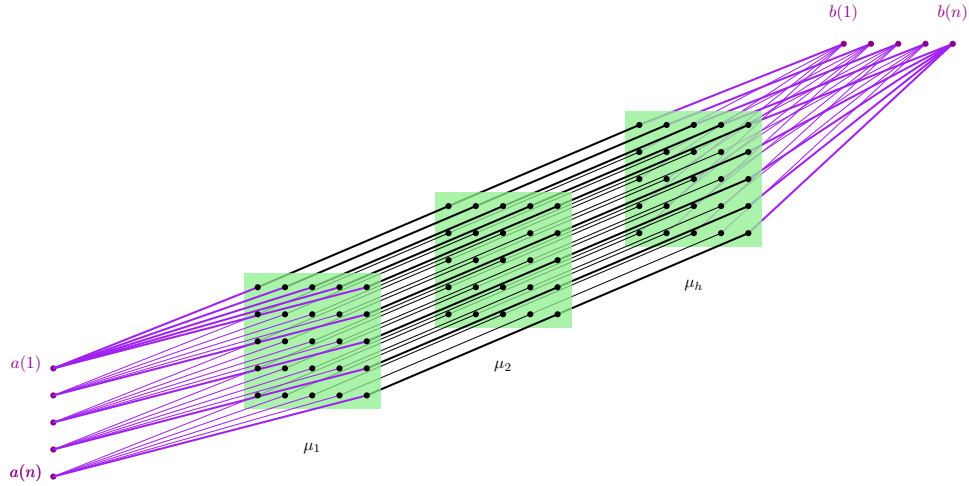}
    \caption{A minimal transformer of order $n$ (here, a subdivision of $K_{n,n}$).}
    \label{fig:transformer}
\end{figure}

The following fact shows that the condition $\mu_s\neq \mu_t$ is equivalent to the conditions that $V(\mu_s)$ and $V(\mu_t)$ are disjoint.
\begin{fact}
\label{fact:disjoint}
Let $\mu,\mu'$ be $(I,J)$-meshes with $|I|,|J|\geq 4$. 
	Assume $(\mu,\mu')$ is regular. Then  either $\mu=\mu'$ or $V(\mu)\cap V(\mu')=\emptyset$.
\end{fact}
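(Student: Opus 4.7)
The plan is to exploit regularity to promote a single coincidence $\mu(i_0,j_0)=\mu'(i_1,j_1)$ into a family of forced coincidences, and then either conclude $\mu=\mu'$ or derive a contradiction with the injectivity of $\mu'$.

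First I would assume $V(\mu)\cap V(\mu')\neq\emptyset$ and pick indices $(i_0,j_0),(i_1,j_1)\in I\times J$ witnessing this. Setting $o_1:=\otp(i_0,i_1)$ and $o_2:=\otp(j_0,j_1)$, the literal $x=y$ belongs to $\atp(\mu(i_0,j_0),\mu'(i_1,j_1))$; by regularity, it belongs to $\atp(\mu(i,j),\mu'(i',j'))$ for every $(i,i')$ and $(j,j')$ with $\otp(i,i')=o_1$ and $\otp(j,j')=o_2$. This yields the master implication: $\mu(i,j)=\mu'(i',j')$ whenever the order types match.

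From here the argument splits into two cases. If $o_1=o_2={=}$, the master implication reads $\mu(i,j)=\mu'(i,j)$ for every $(i,j)\in I\times J$, so $\mu=\mu'$. Otherwise, at least one of $o_1,o_2$ is a strict inequality; by the symmetric roles of $I$ and $J$, I may assume $o_1={<}$. Using $|I|\geq 4$, I pick $i_2<i_3<i_4$ in $I$, and pick $j,j'\in J$ with $\otp(j,j')=o_2$ (taking $j=j'$ if $o_2={=}$, and any strict pair otherwise, which is possible since $|J|\geq 4$). Applying the master implication to $(i_2,i_3)$ and to $(i_2,i_4)$ paired with $(j,j')$ gives
\[
\mu'(i_3,j')=\mu(i_2,j)=\mu'(i_4,j'),
\]
contradicting the injectivity of $\mu'$ since $(i_3,j')\neq(i_4,j')$.

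The argument is entirely elementary; no serious obstacle is expected. The only point requiring a bit of care is the case bookkeeping on the two order types $o_1,o_2$, and verifying that the hypothesis $|I|,|J|\geq 4$ (in fact $\geq 3$ would suffice) is enough to produce the three indices needed to contradict injectivity.
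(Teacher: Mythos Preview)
Your proposal is correct and follows essentially the same approach as the paper: use regularity to propagate the equality literal across all pairs with matching order types, conclude $\mu=\mu'$ when both order types are ``$=$'', and otherwise obtain two distinct preimages of the same value under $\mu'$ to contradict injectivity. Your write-up is in fact slightly cleaner than the paper's, which handles the strict case more tersely (fixing $(i,j)$ and varying $i''$ with $i<i''$).
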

\begin{proof}
	If $\mu(i,j)=\mu'(i,j)$ for some $i,j$, then $\mu(i,j)=\mu'(i,j)$ for all $i,j$, by regularity, and hence $\mu=\mu'$. 
	
	Otherwise, if $\mu(i,j)=\mu'(i',j')$ for some $i,j\neq i',j'$, say $i<i'$, then $\mu(i,j)=\mu(i'',j')$ for all $i<i''$ (we may assume that $i'$ is not maximal, as otherwise we can choose a different pair as $|I|,|J|\geq 4$). This is not possible as $\mu$ is injective. 
	Hence, in this case $V(\mu)\cap V(\mu')=\emptyset$.
\end{proof}

Note that if $(\mu_1,\dots,\mu_h)$ is a minimal $(I,J)$-transformer, then its \ndef{transpose} $(\mu_h^\top,\dots,\mu_1^\top)$ is a minimal $(J,I)$-transformer.

\medskip

In our study of tree-ordered weakly sparse graphs it will be convenient to introduce some variations of the above definitions.

\begin{ndefi}
	A pair $(\mu,\mu')$ of $(I,J)$-meshes in a binary structure $\mathbf M$  is
	\begin{itemize}
		\item 	 \ndef{quasi-homogeneous} if $(\mu,\mu')$ is regular and 
		$\atp(\mu(i,j),\mu'(i',j'))$ is the same for all $i,i'\in I$ and $j,j'\in J$ with $(i,j)\neq(i',j')$,
		\item	\ndef{matching} if $(\mu,\mu')$ is quasi-homogeneous but not homogeneous.
	\end{itemize}
\end{ndefi}
\begin{ndefi}	
	An $(I,J)$-mesh $\mu$ in a binary structure $\mathbf M$ is
\begin{itemize}
	\item \ndef[mesh]{regular} if $(\mu,\mu)$ is regular,
	\item \ndef[mesh]{homogeneous} if $(\mu,\mu)$ is quasi-homogeneous,
    \item \ndef{pseudo-vertical} if it is regular and the restriction of $\mu$ to\\ $I\times(J\setminus\{\min J,\max J\})$ is vertical;
    \item \ndef{inner-vertical} if it is pseudo-vertical, with vertical guard $a$ such that $a(i) := \mu(i,\min J)$ or $a(i) := \mu(i, \max J)$;
    \item \ndef{pseudo-horizontal} if $\mu^\top$ is pseudo-vertical.
   	\item \ndef{inner-horizontal} if $\mu^\top$ is inner-vertical;
\end{itemize}
\end{ndefi}

\begin{ndefi}
	Let $\mu$ be a mesh in a binary structure $\mathbf M$ and let $X$ be a subset of $M$. We say that the pair $(\mu,X)$ is \ndef[pair of a mesh and a set]{homogeneous} if 
	$\atp(u,v)$ is constant for all pairs $(u,v)\in V(\mu)\times X$ with $u\neq v$. We say that the pair $(X,\mu)$ is \emph{homogeneous} if $(\mu,X)$ is homogeneous.
\end{ndefi}

The next lemma determines the three possible types of regular pairs of regular meshes that can occur in a binary structure, and somehow justifies the introduction of the notions of pseudo-vertical, pseudo-horizontal, and quasi-homogeneous meshes.
   
\begin{lem}
\label{lem:quasi_hom}
Let $(\mu,\mu')$ be a regular pair of regular $(I,J)$-meshes in a binary structure $\mathbf M$ with $|I|,|J|\geq 2$. Then  one of the following holds:
    \begin{enumerate}
        \item $\mu$ and $\mu'$ are both pseudo-vertical;
        \item $\mu$ and $\mu'$ are both pseudo-horizontal;
        \item $(\mu,\mu')$ is quasi-homogeneous.
    \end{enumerate}
\end{lem}
\begin{proof}
We first show that the study of a regular pair of meshes can be reduced to the study of  a pair of $2\times 2$-submeshes.
\begin{claim}
  \label{cl:restr}
    Let $(\mu,\mu')$ be a regular pair of $(I,J)$-meshes, let $I'\subseteq I$ and $J'\subseteq J$ be such that $|I'|=|J'|=2$. Then  the mapping $(i,j,i',j')\mapsto \atp(\mu(i,j),\mu'(i',j'))$ (with $i,i'\in I$ and $j,j'\in J$) is fully determined by its restriction to $i,i'\in I'$ and $j,j'\in J'$.

    In particular, $(\mu,\mu')$ is homogeneous (resp.\ quasi-homogeneous) if and only if $(\mu|_{I'\times J'},\mu'|_{I'\times J'})$ is homogeneous  (resp.\ quasi-homogeneous).  
\end{claim}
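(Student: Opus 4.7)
The plan is to use only the definition of regularity, which asserts that $\atp(\mu(i,j),\mu'(i',j'))$ depends solely on the pair $(\otp(i,i'),\otp(j,j'))\in\{<,=,>\}^2$; there are thus at most nine values taken by this mapping, one per order-type pair.

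First I would fix an enumeration $I'=\{i_0,i_1\}$, $J'=\{j_0,j_1\}$ with $i_0<i_1$ and $j_0<j_1$, and observe that all nine order-type pairs are already realized on $I'\times I'\times J'\times J'$: the three choices $(i,i')\in\{(i_0,i_1),(i_0,i_0),(i_1,i_0)\}$ realize $\otp(i,i')$ equal to $<$, $=$, $>$ respectively, and the same analysis applies to $J'$. Since the mapping $(i,j,i',j')\mapsto\atp(\mu(i,j),\mu'(i',j'))$ factors through $(\otp(i,i'),\otp(j,j'))$ and every value of this order-type pair already occurs on the $2\times 2$ subgrid, the restriction to $I'\times J'\times I'\times J'$ records every value of the mapping on $I\times J\times I\times J$, and therefore determines it.

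The ``in particular'' statement follows directly. Homogeneity is constancy of the mapping, which passes between the full mesh and its restriction by the first part. For quasi-homogeneity, I would note that $(i,j)=(i',j')$ corresponds exactly to the order-type pair $(=,=)$, and that each of the remaining eight order-type pairs is realized off-diagonal on the $2\times 2$ subgrid---a short enumeration among the four elements $(i_0,j_0),(i_0,j_1),(i_1,j_0),(i_1,j_1)$. Hence the condition ``all off-diagonal atomic types agree'' holds for $(\mu,\mu')$ if and only if it holds for $(\mu|_{I'\times J'},\mu'|_{I'\times J'})$.

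No real obstacle arises in this argument: the whole proof reduces to enumerating the nine possible order-type pairs and checking which are off-diagonal. The only subtlety worth naming is the elementary verification that the eight non-$(=,=)$ order-type pairs are all attained by off-diagonal choices on the $2\times 2$ subgrid, which is needed to make the quasi-homogeneity equivalence go through.
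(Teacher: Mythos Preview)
Your proof is correct and follows exactly the same approach as the paper's own proof, which is a one-liner: ``The restriction to $I'\times J'$ covers all the possible pairs of order types $(\otp(i,i'),\otp(j,j'))$.'' You have simply spelled out the details the paper leaves implicit, including the verification for the quasi-homogeneous case.
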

\begin{clproof}
    The restriction to $I'\times J'$  covers all the possible pairs of order types $(\otp(i,i'),\otp(j,j'))$.
\end{clproof}

Now assume for contradiction that the conclusion of the lemma does not hold. First, we prove $\mu$  is pseudo-vertical or pseudo-horizontal. Since $(\mu,\mu')$ and  $\mu$ are regular, this will follow from the next claim.

\begin{claim}
\label{cl:quasi-hom0}
	Let $(\mu,\mu')$ be a regular pair of $(I,J)$-meshes in a binary structure~$\mathbf M$ with $|I|,|J|\geq 2$.

    Assume $\mu$ is regular.
	Then  either  $\mu$ is  pseudo-vertical or pseudo-horizontal, 	or else $(\mu,\mu')$ is  quasi-homogeneous.
\end{claim}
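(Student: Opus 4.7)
The plan is to encode both the quasi-homogeneity of $(\mu,\mu')$ and the pseudo-verticality/pseudo-horizontality of $\mu$ as constancy conditions on a single $3\times 3$ type table, and then to verify that denying all of the latter forces the former. Since $(\mu,\mu')$ is regular I can define
\[
  \tau(o_1,o_2)\;:=\;\atp(\mu(i,j),\mu'(i',j')) \qquad \text{for } \otp(i,i')=o_1 \text{ and } \otp(j,j')=o_2,
\]
and since $|I|,|J|\geq 2$ all nine entries (indexed by $\{<,=,>\}^2$) are realised. By definition, $(\mu,\mu')$ is quasi-homogeneous exactly when $\tau$ is constant on $\{<,=,>\}^2\setminus\{(=,=)\}$.

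The first step is to extract vertical/horizontal guards for $\mu$ from the extremal columns and rows of $\mu'$. Taking $a(i'):=\mu'(i',\min J)$, for every $j\in J\setminus\{\min J\}$ one has $\atp(\mu(i,j),a(i'))=\tau(\otp(i,i'),>)$; this depends only on $\otp(i,i')$ and is non-constant in $(i,i')$ precisely when the marginal $o_1\mapsto\tau(o_1,>)$ is non-constant, in which case $a$ witnesses that the restriction of $\mu$ to $I\times(J\setminus\{\min J,\max J\})$ is vertical, so $\mu$ is pseudo-vertical. Analogously, the guards $\mu'(i',\max J)$, $\mu'(\min I,j')$, and $\mu'(\max I,j')$ yield pseudo-verticality or pseudo-horizontality whenever one of the marginals $o_1\mapsto\tau(o_1,<)$, $o_2\mapsto\tau(>,o_2)$, or $o_2\mapsto\tau(<,o_2)$ is non-constant.

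The second step is a bookkeeping check. If all four of those marginals are constant, then the two outer rows $\tau(<,\cdot),\tau(>,\cdot)$ and the two outer columns $\tau(\cdot,<),\tau(\cdot,>)$ are each individually constant. Using the four corner entries $\tau(<,<),\tau(<,>),\tau(>,<),\tau(>,>)$, each of which lies simultaneously in one such row and in one such column, one forces the two row-constants and the two column-constants to coincide, collapsing all eight off-diagonal entries of $\tau$ to a single value; that is, $(\mu,\mu')$ is quasi-homogeneous. Contraposing the combined statement of the two steps gives the claim.

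The only delicacy I foresee is that the construction of the guards intrinsically uses $\mu'$ rather than $\mu$: this is precisely why the hypothesis involves the pair $(\mu,\mu')$ being regular rather than just $\mu$ being regular, and why one cannot prove the claim from $\mu$ alone. A minor boundary issue arises when $|J|=2$ (or $|I|=2$), in which case the inner set $J\setminus\{\min J,\max J\}$ is empty and pseudo-verticality is vacuous, making the claim immediate; the non-trivial content lies in the regime $|I|,|J|\geq 3$, where the guards above actually carry information.
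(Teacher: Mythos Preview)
Your proof is correct and follows essentially the same approach as the paper's: both extract candidate guards from the extremal rows and columns of $\mu'$, translate their failure to yield pseudo-verticality/pseudo-horizontality into constancy of the four outer marginals of the $3\times 3$ type table, and then use the corner entries to collapse all eight off-diagonal values. Your $\tau$ notation packages this more cleanly than the paper's case-by-case presentation, but the underlying argument is identical.
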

\begin{clproof}
Since  $(\mu,\mu')$ and $\mu$ are regular, the following four cases are easy to check:
	\begin{enumerate}[label=\emph{(\roman*)}]
		\item If  $\atp(\mu(i,j),\mu'(i',\min J))$ is not constant for $i,i'\in I$ and $j\in J\setminus\{\min J\}$, then $\mu$ is pseudo-vertical (witnessed by $a(i')=\mu'(i',\min J)$);
		\item if  $\atp(\mu(i,j),\mu'(i',\max J))$ is not constant for $i,i'\in I$ and $j\in J\setminus\{\max J\}$, then $\mu$ is pseudo-vertical (witnessed by $a(i')=\mu'(i',\max J)$);
		\item if $\atp(\mu(i,j),\mu'(\min I,j'))$ is not constant for $i\in I\setminus\{\min I\}$ and $j,j'\in J$, then $\mu$ is pseudo-horizontal 
		(witnessed by $b(j')=\mu'(\min I,j')$);
		\item if $\atp(\mu(i,j),\mu'(\max I,j'))$ is not constant for $i\in I\setminus\{\max I\}$ and $j,j'\in J$, then $\mu$ is pseudo-horizontal 
		(witnessed by $b(j')=\mu'(\max I,j')$).
	\end{enumerate}
   Now, assume that none of the above cases occurs. 
   Then   
   	\begin{mycases}
		\item\label{it:t1} 
      $\atp(\mu(i,\max J),\mu'(i',\min J))$ is the same  for  $i,i'\in I$ (by $\neg (i)$);
		\item\label{it:t2}    $\atp(\mu(i,\min J),\mu'(i',\max J))$ is the same  for  $i,i'\in I$ (by $\neg (ii)$);
		\item\label{it:t3} $\atp(\mu(\max I,j),\mu'(\min I,j'))$ is the same  for  $j,j'\in J$ (by $\neg (iii)$);
		\item\label{it:t4}  $\atp(\mu(\min I,j),\mu'(\max I,j'))$ is the same  for  $j,j'\in J$ (by $\neg (iv)$).
	\end{mycases}

Considering $\atp(\mu(\min I,\max J),\mu'(\max I, \min J))$, we get that the atomic types in \zcref{it:t1,it:t4}  are the same. 
Similarly, the atomic types in \zcref{it:t2,it:t3} are the same (consider $\atp(\mu(\max I,\min J),\mu'(\min I, \max J))$) and those in \zcref{it:t2,it:t4} are also the same (consider $\atp(\mu(\min I,\min J),\mu'(\max I, \max J))$).

We deduce that
$\atp(\mu(i,j),\mu'(i',j'))$  is constant for  $i,i'\in \{\min I, \max I\}$ and  $j,j'\in \{\min J, \max J\}$ with $(i,j)\neq (i',j')$, and so
the restriction of $(\mu,\mu')$ to 
$\{\min I, \max I\}$ and  $ \{\min J, \max J\}$ is quasi-homogeneous.
Therefore, by \zcref{{cl:restr}},  $(\mu,\mu')$ is also quasi-homogeneous.
\end{clproof}
 
 Likewise, it follows from \zcref{cl:quasi-hom0} that $\mu'$  is pseudo-vertical or pseudo-horizontal.
 Thus, we may further assume, without loss of generality, that
$\mu$ is  pseudo-vertical but not  pseudo-horizontal and 
  $\mu'$ is pseudo-horizontal but not pseudo-vertical.
  Then  we have the following  four properties:
	\begin{enumerate}
		\item $\atp(\mu'(i',j'),\mu(i,\min J))$ depends only  on $\otp(i,i')$ if $j'\in J'$ (by regularity of $(\mu,\mu')$) and is constant for $i,i'\in I$ and $j'\in J\setminus\{\min J\}$ (for otherwise~$\mu'$ would be pseudo-vertical);
		\item similarly, $\atp(\mu'(i',j'),\mu(i,\max J))$ is constant for $i,i'\in I$ and $j'\in J\setminus\{\max J\}$;
		\item $\atp(\mu(i,j),\mu'(\min I,j'))$ is constant for $i\in I\setminus \{\min I\}$ and $j,j'\in J$ (for otherwise $\mu$ would be pseudo-horizontal);
		\item similarly, $\atp(\mu(i,j),\mu'(\max I,j'))$  is constant for $i\in I\setminus\{\max I\}$ and \mbox{$j,j'\in J$};
	\end{enumerate}

It easily follows from the above four properties that 
$\atp(\mu(i,j),\mu'(i',j'))$  is constant for $i,i'\in \{\min I, \max I\}$ and  $j,j'\in \{\min J, \max J\}$ with $(i,j)\neq (i',j')$.
Therefore, the restriction of $(\mu,\mu')$ to 
$\{\min I, \max I\}$ and  $ \{\min J, \max J\}$ is quasi-homogeneous.
Thus, by \zcref{cl:restr},  $(\mu,\mu')$ is quasi-homogeneous, contradicting our assumptions.
\end{proof}

\subsection{Transformers and meshes in TOWS graphs}

In the following lemma, we see that sparsity greatly reduces the the
complexity of transformers, since a conducting pair must necessarily
be a matching.
Recall the definition of the biclique number of a tree-ordered graph (\zcref{nota:bomega}): \getkeytheorem[body]{nota:bomega}

In order to avoid unnecessary complicated notations, we say that a mesh $\mu$ in a TOWS graph is a \emph{chain} (resp.\ an \emph{antichain}) if $V(\mu)$ is a chain (resp.\ an antichain) in the partial order $\prec$.

\pagebreak
\begin{lem}
\label{lem:inde}
	Let $(\mu,\mu')$ be a regular pair of $(I,J)$-meshes in a tree-ordered graph~$\mathbf M$.
	
	If $\min(|I|,|J|)\geq 2\bomega(\mathbf M)+2$, then 
	\[
	(i,j)\neq (i',j')\quad\Longrightarrow
	\neg E(\mu(i,j),\mu'(i',j')).
	\]
	
	Hence, if additionally both  $\mu$ and $\mu'$ are regular, then either $V(\mu)\cup V(\mu')$ is an independent set or $V(\mu)\cup V(\mu')$ induces a perfect matching $E(\mu(i,j),\mu'(i,j))$.
\end{lem}
\begin{proof}
	Assume there exists $(i,j)\neq (i',j')$ in $I\times J$ such that 
	$E(\mu(i,j),\mu'(i',j'))$. Let $t=\bomega(\mathbf M)+1$. As $\min(|I|,|J|)\geq 2t$, 
	 there exist distinct pairs $(i_1,j_1),\dots,(i_t,j_t)$, $(i_1',j_1'),\dots,(i_t',j_t')\in I\times J$ such that for every 
	$k\neq l\in [t]$ we have $\otp(i_k,i_l')=\otp(i,i')$ and $\otp(j_k,j_l')=\otp(j,j')$. By regularity, we deduce that the vertices $\mu(i_k,j_k)$ \mbox{($k\in [t]$)} and $\mu'(i_l',j_l')$ ($l\in [t]$) bi-induce a $K_{t,t}$ in $\mathbf M$, contradicting \mbox{$t=\bomega(\mathbf M)+1$}.
\end{proof}

\begin{lem}
    \label{lem:anti-qh}
    Let $(\mu,\mu')$ be a regular pair of  $(I,J)$-meshes in a tree-ordered graph~$\mathbf M$ with $|I|,|J|\geq \max(2\bomega(\mathbf M)+2,5)$.

    If both $\mu$ and $\mu'$ are antichains, then $(\mu,\mu')$ is quasi-homogeneous.
\end{lem}
\begin{proof}
    Let $i_1<i_2<\dots<i_5$ be elements of $I$ and $j_1<j_2<\dots<j_5$ be elements of~$J$.
    Assume for contradiction that $\mu(i_3,j_3)\succ \mu'(i',j')$, where $(i',j')\neq (i_3,j_3)$. Then  there exists
    $(i'',j'')\neq (i',j')$ such that 
    $\otp(i_3,i'')=\otp(i_3,i')$ and
    $\otp(j_3,j'')=\otp(j_3,j')$. Then  $\mu(i_3,j_3)\succ \mu'(i'',j'')$, thus
    $\mu'(i',j')$ and $\mu'(i'',j'')$ are comparable, contradicting the assumption that $\mu'$ is an antichain.
    
    Assume now for contradiction that $\mu(i_3,j_3)\prec \mu'(i',j')$, where $(i',j')\neq (i_3,j_3)$. 
    Then  there exists $(i'',j'')\neq (i_3,j_3)$ such that $\otp(i'',i')=\otp(i_3,i')$ and
    $\otp(j'',j')=\otp(j_3,j')$. Then  $\mu(i_3,j_3)$ is comparable with $\mu(i'',j'')$, contradicting the assumption that $\mu$ is an antichain.

    Thus, $(\mu,\mu')$ is quasi-homogeneous in $\mathbf M^\prec$.
    According to \zcref{lem:inde}, $(\mu,\mu')$ is quasi-homogeneous in $\mathbf M^E$, hence
    in $\mathbf M$.
    \end{proof}

\begin{table}[h!t]
	\begin{center}
		\begin{tabular}{|Mc|Mc||Mc|Mc|}
			\hlx{hv[1,3]}
			\multicolumn{2}{|m{.45\columnwidth}||}{\centering Lexicographic type}&\multicolumn{2}{m{.45\columnwidth}|}{\centering Anti-lexicographic type}\\
			\hlx{v[1,3]hv}
			<_{I,J}&(i<i')\vee (i=i')\wedge(j<j')&<_{J,I}&(j<j')\vee (j=j')\wedge(i<i')\\
			\hlx{vhv}
			<_{\bar I,J}&(i>i')\vee (i=i')\wedge(j<j')&<_{\bar J,I}&(j>j')\vee (j=j')\wedge(i<i')\\
			\hlx{vhv}
			<_{I,\bar J}&(i<i')\vee (i=i')\wedge(j>j')&<_{J,\bar I}&(j<j')\vee (j=j')\wedge(i>i')\\
			\hlx{vhv}
			<_{\bar I,\bar J}&(i>i')\vee (i=i')\wedge(j>j')&<_{\bar J,\bar I}&(j>j')\vee (j=j')\wedge(i>i')\\
			\hlx{vh}
			\vgap[0,1,2,3,4]{12pt}
			\hlx{hv[1,3]}
			\multicolumn{2}{|c||}{Vertical type}&\multicolumn{2}{c|}{Horizontal type}\\
			\hlx{v[1,3]hv}
			<_{J}&(i=i')\wedge (j<j')&<_{I}&(j=j')\wedge(i<i')\\
			\hlx{vhv}
			<_{\bar J}&(i=i')\wedge (j>j')&<_{\bar I}&(j=j')\wedge(i>i')\\
			\hlx{vh}
			\vgap[0,1,2,3,4]{6pt}
		\end{tabular}
	\end{center}
	\caption{Different types of partial orders defined on $I\times J$. The entries correspond to a partial order $<_*$ and to the formula defining $(i,j)<_*(i',j')$.}
	\label{tab:orders}
\end{table}

In the following, we may as well assume that $|I|, |J| \ge 2\bomega(\mathbf M)+2$, so that in particular a regular
$(I,J)$-mesh must induce an independent set in $\mathbf M^E$.

In the following series of lemmas, we use the characteristics of
tree orders to gain more information on the structure of regular pairs.

\begin{lem}
	\label{lem:reg}
	Let $\mu$ be a regular $(I,J)$-mesh in a tree-ordered graph $\mathbf M$, \mbox{$|I|,|J|\ge 3$}. 
	
	Then  either $\mu$ is homogeneous (and then $\mu$ is an antichain of $\mathbf M^\prec$), or the restriction of $\prec$ to $V(\mu)$ is one of the orders shown in \zcref{tab:orders}. Then  $\mu$ is inner-vertical (if the order belongs to the left column) or inner-horizontal (if the order belongs to the right column).
\end{lem}

\begin{proof}
If $\mu$ is homogeneous, then  $V(\mu)$ is an antichain, as $\prec$ is antisymmetric.
	
	Assume $\mu$ is not homogeneous. Then  there exist $a,a',b,b'$ such that $\mu(a,b)\prec\mu(a',b')$ (with $(a,b)\neq (a',b')$).
	We consider the set $\mathcal P$ of all pairs $(\otp(a,a'),\otp(b,b'))$ for which $\mu(a,b)\prec\mu(a',b')$.

	First assume, that up to reversal of the order on $I$ and/or $J$, we have ${{(<,<)}\in\mathcal P}$.
	In the rest of this proof, let us assume, up to isomorphism, that $I = [n]$ and $J=[m]$ for $n=|I|,
    m=|J|$, with the
    natural orderings.
    Then  we have $\mu(i,j)\prec\mu(n,m)$ for all $i,j\leq 2$. As a consequence, 
	all the $\mu(i,j)$ with $i,j\leq 2$ are pairwise comparable (as $\prec$ is a tree-order). 
	For every $(i,j),(i',j')\in I\times J$ there exists
	$(a,b),(a',b')\in[2]^2$  such that $\otp(a,a')=\otp(i,i')$ and
	$\otp(b,b')=\otp(j,j')$. Hence, $\mu(i,j)$ and $\mu(i',j')$ are compared in $\prec$ the same way as $\mu(a,b)$ and $\mu(a',b')$
    in $\prec$. In other words, $V(\mu)$ is a chain for $\prec$.
	
	Then  looking at the relations between the $\mu(i,j)$ with $i,j\in [2]$ and using the fact that $V(\mu)$ is a chain, we get the  possibilities shown in \zcref{fig:case1}, leading to different types of lexicographic or anti-lexicographic orders.
	\begin{figure}[h!t]
	\begin{center}
	\includegraphics[width=\columnwidth]{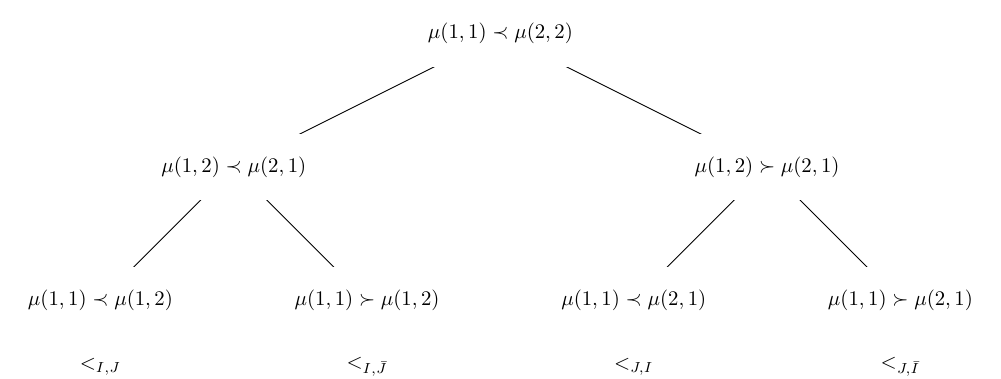}
\end{center}
	\caption{First case of the proof of \zcref{lem:reg}.}
	\label{fig:case1}
	\end{figure}
	
Otherwise, we can assume that every pair in $\mathcal P$ contains the equality, and we are left with an order of horizontal or vertical type (See \zcref{fig:case2}).
		\begin{figure}[h!t]
	\begin{center}
\includegraphics[width=.75\columnwidth]{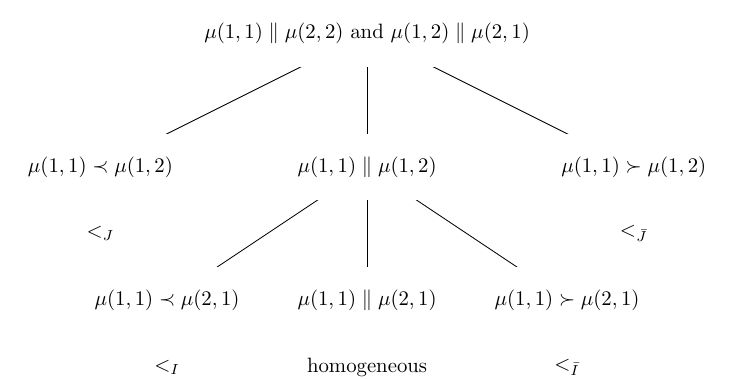}
	\end{center} 
		\caption{Second case of the proof of \zcref{lem:reg}.}
	\label{fig:case2}
	\end{figure}
	\end{proof}

\begin{lem}\label{lem:matching}
 For every tree-ordered graph $\mathbf M$, 
    every conducting pair $(\mu,\mu')$ of $(I,J)$-meshes in $\mathbf M$ 
     with $\min(|I|,|J|)\geq 2\bomega(\mathbf M)+2\geq 4$, $\mu$ inner-vertical, and $\mu'$ inner-horizontal is matching (more precisely: matching in $\mathbf M^E$ and homogeneous in~$\mathbf M^\prec$).
\end{lem}
\begin{proof}
    Assume for contradiction that $(\mu,\mu')$ is not homogeneous in $\mathbf M_m^\prec$. 
    Without loss of generality, there exists $i,i'$ in $I$ and $j,j'$ in $J$ such that $\mu(i,j)\prec \mu'(i',j')$.
    Let $i_1<i_2<i_3<i_4$ in $I$ and $j_1<j_2<j_3<j_4$ in $J$.
    While keeping $\otp(i,i')$ and $\otp(j,j')$ unchanged, we can assume $i=i_3$ and $j=j_2$. Hence, for every $x\in I$ and every $y\in J$ we have
    $$\mu(i_2,y)\prec \mu(i_3,j)\prec \mu'(i',j_2)\prec\mu'(x,j_3).$$
    
    As all the possible order types of $(i_2,y)$ and $(x,j_3)$ can be independently achieved, we deduce $V(\mu)\prec V(\mu')$, contradicting the hypothesis that $(\mu,\mu')$ is not homogeneous in $\mathbf M_m^\prec$. 

    As $(\mu,\mu')$ is conducting, if follows from \zcref{lem:inde} that $(\mu,\mu')$ is matching in~$\mathbf M^E$.
\end{proof}

\subsection{Twisters}
\label{sec:twisters}
We now introduce the notion of twister, which is a
technical definition allowing the iterative transformation of a transformer into clean twisters, which will be introduced in \zcref{sec:clean_twistsers}.

\begin{ndefi}[Twister] \label{def:twisters}
	Let $\mathbf M$ be a tree-ordered graph. 
	
	An  $(I,J)$-\ndef{twister} of length $h$ in $\mathbf M$ is  a sequence $(A,\mu_1,\mu_2,\dots,\mu_{h},B)$, where
	$\mu_1,\dots,\mu_h$ are $(I,J)$-meshes, and
	\begin{twprop}
    	\item\label{it:tw1} $\mu_1$ is vertical or inner-vertical;  
		\item\label{it:tw3} $\mu_h$ is  horizontal or inner-horizontal;
		\item\label{it:tw2} $\mu_s$ is an independent set in $\mathbf M^E$ and an antichain in $\mathbf M^\prec$ for every $1<s<h$;
		\item\label{it:tw4} $(\mu_s,\mu_t)$ is matching if $|s-t|=1$, with the possible following two exceptions, for which $(\mu_s,\mu_t)$ may be only conducting:
        \begin{itemize}
            \item $\{s,t\}=\{1,2\}$, $\mu_1$ is a chain, and $\mu_2$ is an antichain\footnote{\label{fn:1}Note that this condition follows from \zcref{it:tw2} when $h>2$.},
            \item or $\{s,t\}=\{h-1,h\}$, $\mu_h$ is a chain, and $\mu_{h-1}$ is an antichain\footnote{See  \zcref[nocap]{fn:1}.};
        \end{itemize}
		\item\label{it:tw5} $(\mu_s,\mu_t)$ is homogeneous if $|s-t|>1$;
		\item\label{it:tw6} $A$ is empty if $\mu_1$ is inner-vertical, and $A$ is the range of a vertical guard of $\mu_1$, otherwise;
		\item\label{it:tw7} $B$ is empty if $\mu_h$ is inner-horizontal, and $B$ is the range of a horizontal guard of $\mu_h$, otherwise;
		\item\label{it:tw8} the sets $A,V(\mu_1),\dots,V(\mu_h),B$ are disjoint;
		\item \label{it:tw9} the pairs $(A,\mu_s)$ are homogeneous if $s>1$;
		\item\label{it:tw10}  the pairs $(\mu_s,B)$ are homogeneous if $s<h$;
    \item \label{it:tw11} the pair $(A,B)$ is homogeneous,
		\item\label{it:tw12} each of $A$ and $B$ is independent (in $\mathbf M^E$), and either a chain or an antichain (in $\mathbf M^\prec$);
         \item\label{it:tw13} the root of $\mathbf M$ does not belong to  $A\cup B\cup\bigcup_{1\le s\le h} V(\mu_s)$;
    \item\label{it:tw14} only $\mu_1$ can be inner-vertical and only  $\mu_h$ can be inner-horizontal.
	\end{twprop}   
\end{ndefi}

Our aim is to extract large twisters in large transformers. Towards this end, we recall the following product version of Ramsey theorem.

\begin{lemma}[Ramsey theorem, product version]
	\label{lem:BCR}
	For every $k,\ell_1,\ell_2,m_1,m_2$, there exist  $n_1,n_2$ such that, for every coloring
	\[
	c:[n_1]^{\ell_1}\times [n_2]^{\ell_2}\rightarrow [k]
	\]
	there are sets $I_1\in\binom{[n_1]}{m_1}$ and $I_2\in\binom{[n_2]}{m_2}$ such that $c(\bar a,\bar b)$ depends only on $\otp(\bar a)$ and $\otp(\bar b)$, for all $\bar a\in I_1^{\ell_1}$ and $\bar b\in I_2^{\ell_2}$. \qed
\end{lemma}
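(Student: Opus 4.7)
The plan is to derive the product version from two applications of the single-coordinate version of the order-type Ramsey theorem, which itself follows from classical Ramsey by a standard argument: given $k', \ell, m$ there is $N=N(k',\ell,m)$ such that every coloring $f\colon [N]^\ell \to [k']$ admits $I\in\binom{[N]}{m}$ on which $f(\bar a)$ depends only on $\otp(\bar a)$. One proves this by enumerating the finitely many order types $\tau_1,\dots,\tau_T$ of an $\ell$-tuple (an ordered set partition of $[\ell]$), and applying classical Ramsey once for each $\tau_i$ in turn, shrinking the set each time so that $f$ restricted to tuples of order type $\tau_i$ becomes constant. The numbers $N(k',\ell,m)$ produced are of standard iterated-Ramsey type.

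Armed with the single-coordinate version, the product case proceeds in two stages. First, view the given coloring $c\colon [n_1]^{\ell_1}\times[n_2]^{\ell_2}\to[k]$ as a coloring
\[
c_2\colon [n_2]^{\ell_2}\longrightarrow [k]^{[n_1]^{\ell_1}},\qquad c_2(\bar b)=\bigl(\bar a\mapsto c(\bar a,\bar b)\bigr).
\]
Choosing $n_2$ large enough (depending on $k$, $\ell_2$, $m_2$, $n_1$, and $\ell_1$), the single-coordinate version supplies $I_2\in\binom{[n_2]}{m_2}$ such that $c_2(\bar b)$, and hence $c(\bar a,\bar b)$ for every $\bar a\in[n_1]^{\ell_1}$, depends only on $\otp(\bar b)$ for all $\bar b\in I_2^{\ell_2}$.

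Second, after fixing $I_2$ we obtain an induced map $\tilde c\colon [n_1]^{\ell_1}\times T_2\to[k]$, where $T_2$ is the (finite) set of order types of $\ell_2$-tuples from $I_2$. Re-reading this as a coloring $[n_1]^{\ell_1}\to[k]^{T_2}$ and taking $n_1$ large enough, a second application of the single-coordinate version produces $I_1\in\binom{[n_1]}{m_1}$ on which $\tilde c$, and thus $c$, depends only on $\otp(\bar a)$. The pair $(I_1,I_2)$ then witnesses the lemma.

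There is no genuine obstacle here: the statement is a routine iteration of classical Ramsey. The only care points are the bookkeeping of order types of tuples that may have repeated coordinates (indexed by ordered set partitions of $[\ell_i]$, so their number is a fixed finite $T_i$ depending only on $\ell_i$), and the order of the two applications: one must pick $n_2$ before $n_1$, so that when the second stage is performed the auxiliary colour set $[k]^{T_2}$ has size depending only on $k$ and $\ell_2$, and not on $n_2$ or $n_1$. With this ordering the bounds $n_2$, then $n_1$, are chosen successively, giving concrete values of iterated-Ramsey type.
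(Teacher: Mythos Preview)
The paper does not prove this lemma: the statement ends with \qed, signalling that it is taken as a known fact. Your argument is correct and entirely standard; it is exactly the kind of iteration of classical Ramsey one would expect.

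One small slip in the final paragraph: you write that the bounds are chosen ``$n_2$, then $n_1$'', but the dependency runs the other way. In your first stage the auxiliary colour set is $[k]^{[n_1]^{\ell_1}}$, whose size depends on $n_1$, so $n_2$ must be chosen \emph{after} $n_1$ is fixed. Concretely: first set $n_1 = N(k^{|T_2|},\ell_1,m_1)$ (which depends only on $k,\ell_1,\ell_2,m_1$), then set $n_2 = N(k^{n_1^{\ell_1}},\ell_2,m_2)$. The order in which you \emph{apply} the two Ramsey steps in the body of the argument (coordinate~2 first, then coordinate~1) is correct; only the bookkeeping sentence at the end is reversed.
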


We recall the definition of minimal transformers for convenience:

\getkeytheorem{def:min_transfo}

\begin{lem}
\label{lem:tr2tw}
	For every integer $h,m,n,w$ there exist integers $p,q$ such that if a tree-ordered graph $\mathbf M$ with $\bomega(\mathbf M)\leq w$ contains a minimal $(I,J)$-transformer of length $h$ with $|I|\geq p$ and $|J|\geq q$, then $\mathbf M$ contains an $(I',J')$-twister of length at most $h$ with $|I'|=n$ and $|J'|=m$.
\end{lem}

\begin{proof}
First, note that we can assume $\min(n,m)\geq \max(2w+2,5)$. 
As we shall obtain a twister from a transformer,
it will be useful to locally introduce a new definition encompassing these two notions.

Let $I$ and $J$ be two index sets, let $\mathbf M$ be a tree-ordered graph, and let $h,n$ be integers. An \emph{$(I,J)$-sequence} of $\mathbf M$ with length $h$ and order $n$ is a sequence $(A,\mu_1,\dots,\mu_h,B)$, where 
$\min(|I|,|J|)\geq n$, 
\begin{itemize}
    \item $(\mu_1,\dots,\mu_h)$ is an $(I,J)$-transformer,
    \item $\mu_1$ is inner-vertical or vertical (\zcref{it:tw1}),
    \item $\mu_h$ is inner-horizontal or horizontal (\zcref{it:tw3}),
    \item $(\mu_s,\mu_t)$ is homogeneous if $|s-t|>1$ (\zcref{it:tw5}),
    \item $A$ is either empty or the set of vertices of a vertical guard $a$ of $\mu_1$ (indexed by $I$),
    \item and $B$ is either empty or the set of vertices of a horizontal guard $b$ of $\mu_h$ (indexed by $J$).
\end{itemize}
The functions $a$ and $b$ are the \emph{indexing functions} of $A$ and $B$.

Moreover, if $I'\subseteq I$ and $J'\subseteq J$, the \ndef{restriction} of $(A,\mu_1,\dots,\mu_h,B)$ to $I'\times J'$ is the $(I',J')$-sequence defined by the restrictions of $\mu_1,\dots,\mu_h$ to $I'\times J'$, as well as the restrictions of $a$ to $I'$ and of $b$ to $J'$, if applicable. The restriction of a twister $\mathfrak T$ is called a \ndef{sub-twister} of $\mathfrak T$.

\pagebreak

The next claim is immediate from the definitions.

\begin{claim}\label{fact:restr2}
Let $(A,\mu_1,\dots,\mu_h,B)$ be an $(I,J)$-sequence, where $\mu_1,\dots,\mu_h$ are $(I,J)$-meshes in a tree-ordered graph $\mathbf M$ and $A$ and $B$ are subsets of vertices of $\mathbf M$ indexed by $I$ and $J$, respectively.

Let 
\[
    A'=\begin{cases}
        \emptyset&\text{if $\mu_1$ is inner-vertical,}\\
        A&\text{otherwise;}
        \end{cases}
        \quad\text{and}\quad
    B'=\begin{cases}
        \emptyset&\text{if $\mu_h$ is inner-horizontal,}\\
        B&\text{otherwise.}
    \end{cases}
\]
Then  $(A',\mu_1,\dots,\mu_h,B')$ is an $(I,J)$-sequence on $\mathbf M$ and
if some of \zcref{it:tw1,it:tw2,it:tw3,it:tw4,it:tw5,it:tw6,it:tw7,it:tw8,it:tw9,it:tw10,it:tw11,it:tw12,it:tw13,it:tw14} are satisfied by $(A,\mu_1,\dots,\mu_h,B)$, then they are also satisfied by the $(I,J)$-sequence
$(A',\mu_1,\dots,\mu_h,B')$. 
\hfill$\vartriangleleft$
\end{claim}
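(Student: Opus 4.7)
The plan is to verify the claim by a direct, essentially monotone argument. The key observation is that the modification only replaces $A$ by a subset $A' \in \{A, \emptyset\}$ and likewise $B$ by $B' \in \{B, \emptyset\}$, and that every property among \zcref{it:tw1,it:tw2,it:tw3,it:tw4,it:tw5,it:tw6,it:tw7,it:tw8,it:tw9,it:tw10,it:tw11,it:tw12,it:tw13,it:tw14} that mentions $A$ or $B$ is either vacuous when the relevant set is empty or inherited by passing to a subset.

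First, I would check that $(A', \mu_1, \dots, \mu_h, B')$ is still an $(I,J)$-sequence on $\mathbf M$. The meshes are unchanged, so the underlying transformer $(\mu_1, \dots, \mu_h)$, the verticality/horizontality at the ends, and the homogeneity between non-consecutive meshes are unaffected. By the definition of an $(I,J)$-sequence the set $A$ is permitted to be either empty or the range of a vertical guard of $\mu_1$; hence $A' = \emptyset$ is allowed in the inner-vertical case, and in the non inner-vertical case we have $A' = A$, which is still the range of a vertical guard by hypothesis. The symmetric argument handles $B'$.

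Next, I would go through \zcref{it:tw1,it:tw2,it:tw3,it:tw4,it:tw5,it:tw6,it:tw7,it:tw8,it:tw9,it:tw10,it:tw11,it:tw12,it:tw13,it:tw14} one by one. Properties \zcref{it:tw1,it:tw2,it:tw3,it:tw4,it:tw5,it:tw13,it:tw14} do not involve $A$ or $B$ and are therefore trivially preserved. Properties \zcref{it:tw6,it:tw7} now hold by construction of $A'$ and $B'$: if $\mu_1$ is inner-vertical then $A' = \emptyset$, and otherwise $A' = A$ is the range of a vertical guard of $\mu_1$ (and symmetrically for $\mu_h$ and $B'$). For the remaining properties \zcref{it:tw8,it:tw9,it:tw10,it:tw11,it:tw12}, it suffices to note that disjointness of a finite family of sets, homogeneity of a pair of vertex sets, independence in $\mathbf M^E$, and being a chain or an antichain in $\mathbf M^\prec$ are all conditions inherited by passing to subsets, and each is trivially true when the set in question is empty. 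Hence any of these properties satisfied by $(A, \mu_1, \dots, \mu_h, B)$ continues to hold for $(A', \mu_1, \dots, \mu_h, B')$.

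I do not anticipate any real obstacle: the argument is a piece of bookkeeping expressing the monotonicity of the twister conditions under shrinking the guard components. Its role in the surrounding proof is to allow one to assume, after any extraction step, that the guards $A$ and $B$ are retained exactly in the cases where they are needed, namely when $\mu_1$ is not inner-vertical and $\mu_h$ is not inner-horizontal, so that \zcref{it:tw6,it:tw7} can be enforced without effort.
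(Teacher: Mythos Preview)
Your proof is correct and follows exactly the approach the paper intends: the paper gives no proof at all (the claim ends immediately with the end-of-claim symbol), treating it as immediate from the definitions, and your direct monotonicity check is the obvious way to unpack that. One tiny slip: \zcref{it:tw13} does mention $A$ and $B$ (the root must avoid $A\cup B\cup\bigcup_s V(\mu_s)$), so it belongs with the subset-inherited properties rather than with those not involving $A,B$; this does not affect your argument.
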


We now consider properties obtained by restricting $I$ and $J$ to some large subsets~$I'$ and $J'$.

\begin{claim}
\label{cl:regtw}
There exists a function $f_1:\mathbb N\times \mathbb N\rightarrow\mathbb N$ with the following property:

For all integers $h,n$ and every $(I,J)$-sequence $(A,\mu_1,\dots,\mu_h,B)$ in $\mathbf M$, with indexing functions $a$ and $b$ and $\min(|I|,|J|)\geq f_1(h,n)$,  
    there exists $I'\subseteq I$ and $J'\subseteq J$ with $\min(|I'|,|J'|)\geq n$, such that for every $1\leq s\leq h$, every $i,i'\in I'$, and every $j,j'\in J'$, we have
\begin{rprop}
    \item\label{it:r1} $\atp(\mu_s(i,j),a(i'))$ depends only on $\otp(i,i')$;
    \item\label{it:r2} $\atp(\mu_s(i,j),b(j'))$ depends only on $\otp(j,j')$;
    \item\label{it:r3} $\atp(a(i),a(i'))$ depends only on $\otp(i,i')$;
    \item\label{it:r4} $\atp(b(j),b(j'))$ depends only on $\otp(j,j')$;
    \item\label{it:r5} $\atp(a(i),b(j))$ is constant.
\end{rprop}    
\end{claim}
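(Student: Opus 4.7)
The plan is to apply the product version of Ramsey's theorem (\zcref{lem:BCR}) iteratively, once (or a constant number of times) for each of the five conditions \zcref{it:r1,it:r2,it:r3,it:r4,it:r5}. At each step, $I$ and $J$ are replaced by large subsets on which one additional property holds, and the successive restrictions preserve all previously obtained properties. The key observation is that the signature $\{\prec, E\}$ is finite and binary, so there are only a bounded number $k$ of possible atomic types on pairs of elements of $\mathbf{M}$, and \zcref{lem:BCR} can be applied with $[k]$ as colour set. If $A$ (resp. $B$) is empty, then the indexing function $a$ (resp. $b$) is undefined and the conditions involving it are vacuous; I proceed under the assumption that both $A$ and $B$ are non-empty, the remaining cases being trivial specialisations of the same argument.

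First, for each $s \in [h]$, I would treat \zcref{it:r1} by applying \zcref{lem:BCR} to the colouring \(c^{(s)}_1 : I^2 \times J \to [k]\) defined by \(c^{(s)}_1(i, i', j) = \atp(\mu_s(i,j), a(i'))\), with parameters $\ell_1 = 2$, $\ell_2 = 1$. This yields subsets $I_s \subseteq I$ and $J_s \subseteq J$ of any prescribed size on which the colour depends only on $\otp(i, i')$ and $\otp(j)$; since the order type of a single index is trivial, this is precisely \zcref{it:r1} on $I_s \times J_s$. Symmetrically, \zcref{it:r2} is obtained by applying \zcref{lem:BCR} to $c^{(s)}_2: I \times J^2 \to [k]$ with $\ell_1 = 1$, $\ell_2 = 2$. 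For \zcref{it:r3} I apply the usual Ramsey theorem to the colouring $I^2 \to [k]$ given by $(i,i') \mapsto \atp(a(i), a(i'))$, and analogously for \zcref{it:r4} on $J^2$. Finally, \zcref{it:r5} is obtained by applying \zcref{lem:BCR} to $c(i, j) = \atp(a(i), b(j))$ with $\ell_1 = \ell_2 = 1$, which forces the colouring to be constant on suitable large subsets.

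Chaining these $2h + 3$ applications in any fixed order produces subsets $I' \subseteq I$ and $J' \subseteq J$ with $\min(|I'|, |J'|) \geq n$ satisfying all five conditions simultaneously. The bound $f_1(h, n)$ is therefore obtained by iterating the Ramsey function from \zcref{lem:BCR} a total of $2h + 3$ times, starting from the target size $n$. I do not anticipate a significant obstacle: the proof amounts to routine bookkeeping of Ramsey parameters and does not use any structural property of the tree-ordered graph $\mathbf{M}$ beyond the finiteness of the set of binary atomic types. The subsequent claims in the proof of \zcref{lem:tr2tw} will, in contrast, require the stronger tree-order and sparsity machinery developed in \zcref{lem:inde,lem:anti-qh,lem:reg,lem:matching}.
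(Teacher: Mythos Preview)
Your proposal is correct and follows essentially the same approach as the paper: iterated applications of the product Ramsey theorem \zcref{lem:BCR} to colourings encoding the relevant atomic types, chained together to obtain all five properties simultaneously. The paper's proof is in fact less detailed than yours, simply defining the colouring $\gamma_s^a(i,i',j):=\atp(\mu_s(i,j),a(i'))$ for \zcref{it:r1} and then saying ``by repeating a similar argument on appropriate mappings'' for the remaining properties.
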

\begin{clproof}
        For $1\leq s\leq h$, define the coloring $\gamma_s^a$ of $I^2\times J$ by
\[
\gamma_s^a(i,i',j):=
\atp(\mu_s(i,j),a(i')).
\]

According to \zcref{lem:BCR} we can assume, by taking sufficient large sets $I$ and~$J$, that there exist arbitrarily large subsets $I'\subseteq I$ and $J'\subseteq J$ such 
 that $\gamma_s^a(i,i',j)$ depends only on $\otp(i,i')$ for $i,i'\in I'$ and $j\in J'$, that is \zcref{it:r1}. 
 By repeating a similar argument on appropriate mappings we obtain arbitrarily large sets $I'$ and~$J'$ satisfying all the required properties. The function $f_1$ is defined accordingly.
\end{clproof}

\begin{claim}
\label{cl:bigclean}
    For all integers $h,n$ and every $(I,J)$-sequence $(A,\mu_1,\dots,\mu_h,B)$ in~$\mathbf M$ with $\min(|I|,|J|)\geq f_1(h,n+1)$ and $2\bomega(\mathbf M)< n$,  
    there exists $I'\subseteq I$ and $J'\subseteq J$ with $\min(|I'|,|J'|)\geq n$, such that the restriction $(A',\mu_1',\dots,\mu_h',B')$ (with indexing functions still denoted $a$ and $b$) of  $(A,\mu_1,\dots,\mu_h,B)$ to $I'\times J'$ satisfies \zcref{it:r1,it:r2,it:r3,it:r4,it:r5} and \zcref{it:tw1,it:tw3,it:tw5,it:tw8,it:tw11,it:tw12,it:tw13}.
\end{claim}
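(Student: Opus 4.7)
The plan is to apply \zcref{cl:regtw} with parameter $n+1$ to obtain subsets $I_0 \subseteq I$ and $J_0 \subseteq J$ with $\min(|I_0|,|J_0|) \geq n+1$ such that the restriction of the $(I,J)$-sequence to $I_0 \times J_0$ already satisfies the regularity properties \zcref{it:r1,it:r2,it:r3,it:r4,it:r5}. From there I would verify each of the listed twister properties in turn, possibly discarding a bounded number of indices to reach a final restriction $I' \subseteq I_0$, $J' \subseteq J_0$ of size at least $n$. Properties \zcref{it:tw1,it:tw3,it:tw5} are inherited directly from the definition of an $(I,J)$-sequence, and \zcref{it:tw11} is immediate from \zcref{it:r5}. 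For \zcref{it:tw13}, since each mesh $\mu_s$ is injective and $A, B$ are the images of two single-variable functions, only boundedly many indices can produce the root of $\mathbf M$, so those can be removed without harming the size requirement.

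For \zcref{it:tw12}, by \zcref{it:r3} the atomic type $\atp(a(i),a(i'))$ for $i < i'$ in $I_0$ is a fixed type $\tau_<$, which simultaneously determines the $E$-adjacency and the $\prec$-comparability of $a(i)$ and $a(i')$. The comparability encoded in $\tau_<$ (namely $\prec$, $\succ$, or $\parallel$) forces $a(I_0)$ to be a chain in one of the two orientations or an antichain in $\mathbf M^\prec$. As for $E$-adjacency, if $\tau_<$ encoded an edge then $a(I_0)$ would induce a clique of size at least $n+1$; after tuning $f_1$ so the effective size is at least $2\bomega(\mathbf M)+2$, such a clique would contain $K_{\bomega(\mathbf M)+1,\bomega(\mathbf M)+1}$ as a subgraph, contradicting the definition of $\bomega(\mathbf M)$. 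Hence $a(I_0)$ is $E$-independent. The same argument via \zcref{it:r4} handles $B$.

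For \zcref{it:tw8}, the pairwise disjointness of the sets $V(\mu_s)$ follows from \zcref{fact:disjoint} applied to each regular pair $(\mu_s,\mu_t)$ with $s\neq t$, using that the underlying meshes of the transformer are distinct (a hypothesis preserved by restriction). For disjointness of $A$ (resp.\ $B$) from each $V(\mu_s)$ I would use \zcref{it:r1} (resp.\ \zcref{it:r2}): any coincidence would force an atomic-type collapse along an entire row or column, which is incompatible with the injectivity of $\mu_s$ for sufficiently many indices, and a constant-size adjustment of $I_0$, $J_0$ removes any residual overlaps. Disjointness $A \cap B = \emptyset$ is handled identically using \zcref{it:r5}.

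The main obstacle I anticipate is \zcref{it:tw12}: the threshold $2\bomega(\mathbf M) < n$ must be used precisely to upgrade the regularity-driven dichotomy ``$a(I_0)$ is either a clique or an $E$-independent set'' into ``$a(I_0)$ is $E$-independent''. This requires care with the definition of $f_1$ so that, after all the shrinkings performed for \zcref{it:tw8,it:tw13}, the surviving $I'$ is still large enough to extract a forbidden balanced biclique should independence fail; this can be accommodated either by choosing $f_1(h,n+1)$ a little larger at the outset, or by an auxiliary Ramsey pigeonhole on $\atp(a(i),a(i'))$ carried out just before invoking \zcref{cl:regtw}. Once that bookkeeping is settled, all remaining verifications follow routinely from the regularity furnished by \zcref{cl:regtw} combined with \zcref{fact:disjoint} and the defining properties of an $(I,J)$-sequence.
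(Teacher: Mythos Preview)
Your approach is the paper's, and it is correct. The bookkeeping you worry about is simpler than you think, provided you establish \zcref{it:tw8} \emph{before} \zcref{it:tw13}. No shrinking is needed for \zcref{it:tw8}: once \zcref{it:r1,it:r2,it:r3,it:r4,it:r5} hold and $|I'|,|J'|\geq 2$, any coincidence such as $a(i')=\mu_s(i,j)$ forces (via \zcref{it:r1}) $a(i')=\mu_s(i,j'')$ for every $j''$, contradicting injectivity of $\mu_s$; the cases $B'\cap V(\mu_s')$ and $A'\cap B'$ are analogous, and $V(\mu_s')\cap V(\mu_t')=\emptyset$ is \zcref{fact:disjoint}. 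With disjointness in hand, the root lies in at most one of the sets $A',V(\mu_1'),\dots,V(\mu_h'),B'$, so \zcref{it:tw13} costs the removal of at most one index from $I'$ or $J'$, and the single $+1$ of slack in the hypothesis $\min(|I|,|J|)\geq f_1(h,n+1)$ is exactly sufficient---you cannot ``tune'' $f_1$, which is fixed by \zcref{cl:regtw}, nor do you need to. For \zcref{it:tw12}, the hypothesis $2\bomega(\mathbf M)<n$ already gives $|I'|\geq n+1\geq 2\bomega(\mathbf M)+2$, which is precisely what the clique-to-biclique contradiction requires.
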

\begin{clproof}
     According to \zcref{cl:regtw}, there exists subsets $I'\subseteq I$ and $J'\subseteq J$ with $\min(|I'|,|J'|)\geq f_1(h,\min(|I|,|J|))$ such that \zcref{it:r1,it:r2,it:r3,it:r4,it:r5} hold.
Note that \zcref{it:r5} is nothing but \zcref{it:tw11}.

Since $2\bomega(\mathbf M)< n$ and $\min(|I|,|J|)\geq n+1$, then 
 $\mathbf M^E[A'\cup B']$ is empty. By \zcref{it:r3,it:r4}, $\mathbf M^\prec[A']$ and $\mathbf M^\prec[B']$ are either chains or antichains.
 Thus, \zcref{it:tw12} holds.

 Then  the sets $A',V(\mu_1'),\dots,V(\mu_h'),B'$ are disjoint (by \zcref{fact:disjoint},
as equalities are part of the atomic type) and the pair $(A',B')$ is homogeneous (by \zcref{it:r5}). \zcref{it:tw8} follows.

\zcref{it:tw13} is also easily obtained by reducing $I'$ or $J'$ by at most $1$, hence leading to $\min(|I'|,|J'|)\geq n$.

Also note that \zcref{it:tw1,it:tw3,it:tw5} are satisfied by the $(I,J)$-sequence $(A,\mu_1,\dots,\mu_h,B)$ (by definition of an $(I,J)$-sequence).
\end{clproof}
\medskip

With these claims at hand, we are now ready to prove the following claim by induction on $h$, from which the lemma immediately follows.
\begin{claim}
\label{cl:seq2tw}
Let $g:\mathbb N\times\mathbb N\rightarrow\mathbb N$ be inductively defined by
\[
    g(n,h)= 
    \begin{cases}
        f_1(1,n+1)&\text{if $h=1$,}\\
        g(f_1(h,n+3),h-1)&\text{otherwise,}
    \end{cases}
\]
    where $f_1$ is the function introduced in \zcref{cl:bigclean}.
    Then  every tree-ordered graph~$\mathbf M$ and every integer $h$ and $n$ with
    $n\geq \max(2\bomega(\mathbf M)+2,5)$, the following holds:

    If $\mathbf M$ admits an $(I,J)$-sequence $(A,\mu_1,\dots,\mu_h,B)$ of length at most $h$ and order at least $g(h,n)$, then $\mathbf M$ contains a twister of length at most $h$ and order $n$.
\end{claim}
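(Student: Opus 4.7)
The plan is to proceed by induction on $h$, mirroring the two cases in the definition of $g$. For the base case $h=1$, I would invoke \zcref{cl:bigclean} with parameters $(1,n+1)$, exploiting the identity $g(1,n)=f_1(1,n+1)$. This directly produces a restricted sub-sequence $(A',\mu_1',B')$ of order at least $n$ satisfying \zcref{it:r1,it:r2,it:r3,it:r4,it:r5} together with the structural twister conditions \zcref{it:tw1,it:tw3,it:tw5,it:tw8,it:tw11,it:tw12,it:tw13}. When $h=1$, the remaining conditions \zcref{it:tw2,it:tw4,it:tw9,it:tw10,it:tw14} are vacuous, and \zcref{it:tw6,it:tw7} are secured by \zcref{fact:restr2}, which lets us discard $A'$ (resp.\ $B'$) exactly when $\mu_1'$ is inner-vertical (resp.\ inner-horizontal).

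For the inductive step $h\ge 2$, I would first apply \zcref{cl:bigclean} to obtain a restricted sub-sequence of order at least $f_1(h,n+3)$ still satisfying \zcref{it:r1,it:r2,it:r3,it:r4,it:r5} and the structural conditions above. The remaining twister properties hinge on the nature of the middle meshes $\mu_s$, $1<s<h$. Each such $\mu_s$ is regular by \zcref{it:r1} specialised at $s$, so \zcref{lem:reg} forces it to be either homogeneous---in which case $V(\mu_s)$ is an antichain of $\mathbf M^\prec$, and is automatically independent in $\mathbf M^E$ by \zcref{lem:inde} applied to $(\mu_s,\mu_s)$---or to realise one of the order types of \zcref{tab:orders}, making $\mu_s$ inner-vertical or inner-horizontal.

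If every middle mesh is homogeneous, then \zcref{it:tw2,it:tw14} hold outright; \zcref{lem:anti-qh} upgrades each adjacent pair of antichains to a quasi-homogeneous pair, and sparsity via \zcref{lem:inde} makes it matching, yielding \zcref{it:tw4}; the permitted conducting boundary exceptions at $\{1,2\}$ and $\{h-1,h\}$ are handled by \zcref{lem:matching}. Properties \zcref{it:tw9,it:tw10} follow from \zcref{it:r1,it:r2}, and \zcref{it:tw6,it:tw7} by a last application of \zcref{fact:restr2}. Otherwise some middle mesh $\mu_s$ is inner-vertical or inner-horizontal, and I would truncate the sequence by promoting $\mu_s$ to an endpoint of a strictly shorter sub-sequence, using its extremal rows (or columns) as the new vertical (or horizontal) guard. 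After a further restriction, the truncated sequence has length at most $h-1$ and order at least $g(h-1,n)$, exactly what the ``$+3$'' buffer in the recursion $g(h,n)=g(f_1(h,n+3),h-1)$ is designed to supply; the inductive hypothesis then delivers the desired twister.

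The main obstacle will be the truncation step: one must verify that the adjacent pairs inherited from the original transformer remain matching (or fall into the permitted conducting boundary exceptions of \zcref{it:tw4}), that the new guard indexed on the restricted $I$ or $J$ really is a vertical or horizontal guard of the new endpoint, and that the discarded rows/columns---at most three per induction level (one for indexing the new guard, one for reinstating \zcref{it:tw13}, one for excising the former endpoint)---are absorbed by the buffer. This accounting is exactly what dictates the appearance of $f_1(h,n+3)$ in the inductive formula for $g$.
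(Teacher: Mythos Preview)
Your induction scheme parallels the paper's, but there is a genuine gap: \zcref{it:tw9} and \zcref{it:tw10} do \emph{not} follow from \zcref{it:r1,it:r2}. Property \zcref{it:r1} only asserts that $\atp(\mu_s(i,j),a(i'))$ is a function of $\otp(i,i')$; it does not force this function to be constant. When it is non-constant for some $s>1$, the map $a$ is by definition a vertical guard of $\mu_s'$, so $(A',\mu_s')$ is not homogeneous and \zcref{it:tw9} fails outright. The paper resolves this not by a direct argument but by a further invocation of the inductive hypothesis: the tuple $(a(I'),\mu_s',\dots,\mu_h',B')$ is then itself a valid $(I',J')$-sequence of length at most $h-1$. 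Your proposal only truncates when a \emph{middle} mesh is inner-vertical or inner-horizontal, and so provides no route to a shorter sequence when \zcref{it:tw9} or \zcref{it:tw10} fails.

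There is a related gap in your treatment of \zcref{it:tw4}. Invoking \zcref{lem:anti-qh} requires both meshes of the pair to be antichains, and \zcref{lem:matching} requires one inner-vertical and the other inner-horizontal; neither covers the boundary pair $(\mu_1',\mu_2')$ when $\mu_1'$ carries the vertical order $<_J$ (a disjoint union of chains, hence neither a chain nor an antichain) while $\mu_2'$ is an antichain---and the conducting exception in \zcref{it:tw4} is unavailable here since $\mu_1'$ is not a chain. The paper instead appeals to \zcref{lem:quasi_hom}: either the adjacent pair is quasi-homogeneous (hence matching, being conducting), or both meshes are pseudo-vertical or both pseudo-horizontal, in which case one trims two elements from each of $I',J'$ to obtain a genuine vertical or horizontal mesh and shortens the sequence by induction. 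These additional truncation modes---for \zcref{it:tw9,it:tw10} and for the pseudo-vertical/horizontal case of \zcref{lem:quasi_hom}---together with the single deletion for \zcref{it:tw13}, are what the ``$+3$'' in the recursion actually absorbs, not the bookkeeping you describe.
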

\begin{clproof}
According to \zcref{cl:bigclean}, we can find large subsets $I'\subseteq I$ and $J'\subset J$ such that the restriction of  $(A,\mu_1,\dots,\mu_h,B)$ to $I'\times J'$ satisfies \zcref{it:r1,it:r2,it:r3,it:r4,it:r5} and \zcref{it:tw1,it:tw3,it:tw5,it:tw8,it:tw11,it:tw12,it:tw13,it:tw14}.

\bigskip

Assume $h=1$.  As
\zcref{it:tw2,it:tw4,it:tw6,it:tw7,it:tw9,it:tw10,it:tw14}
are vacuously satisfied when $h=1$, the lemma holds in this case with 
$g(1,n)=f_1(1,n+1)$.
\medskip

Now assume $h>1$ and that we have proved that the claim holds up to $h-1$.
Assume some pair $(A',\mu_s')$, $s>1$ is not homogeneous. Then  $a$ is a vertical guard of $\mu_s'$ (by \zcref{it:r1}). Considering the shorter $(I',J')$-sequence $(a(I'),\mu_s',\dots,\mu_h',B')$ and induction, we get a twister of length at most $h$ and order $n$, as $g(n,h)\geq g(f_1(h,n+1),h-1)$.
Hence, we can assume \zcref{it:tw9}.
Similarly, using \zcref{it:r2}, we can assume that all pairs $(B',\mu_s')$, $s<h$ are homogeneous, hence
\zcref{it:tw10}.

If $\mu_1'$ is inner-vertical, we let $A':=\emptyset$; otherwise, we let $A':=a(I')$.
Similarly, if~$\mu_h'$ is inner-horizontal, we let $B':=\emptyset$; otherwise, we let $B':=b(J')$. Then  according to \zcref{fact:restr2} satisfied properties are preserved. Moreover, by constrcution, 
\zcref{it:tw6,it:tw7} hold.

Let $1<s<h$. 
According to \zcref{lem:reg}, $\mu_s$ is either homogeneous, inner-vertical, or inner-horizontal. In the latter cases, the $(I,J)$-sequence can be shortened and the induction hypothesis applies as $g(n,h)\geq g(f_1(h,n+1),h-1)$.
Hence, we can assume that $\mu_s'$ is homogeneous.
According to \zcref{lem:inde} (applied to the pair $(\mu_s',\mu_s')$), $V(\mu_s')$ is an independent set in $\mathbf M^E$ and, by antisymmetry of $\prec$, it is also an antichain. Hence, \zcref{it:tw2} holds. 

Let $1\leq s< h$ and $t=s+1$.
If $\mu_t'$ is pseudo-horizontal or pseudo-vertical, we can delete $2$ elements in $I$ and $J$, consider a shorter sequence, and apply induction as $g(n,h)\geq g(f_1(h,n+3),h-1)$. Hence, according to \zcref{lem:quasi_hom}, the pair $(\mu_s',\mu_t')$ is quasi-homogeneous, thus matching.
Hence, \zcref{it:tw4} holds.
If $\mu_1$ is inner-horizontal, then $(A,\mu_1)$ is a twister of $\mathbf M$ of order $n$ and  if $\mu_h$ is inner-vertical, then $(\mu_h,B)$ is a twister of $\mathbf M$ of order $n$. 
Otherwise, \zcref{it:tw14} holds.

With this last proof, we eventually get
that \zcref{it:tw1,it:tw2,it:tw3,it:tw4,it:tw5,it:tw6,it:tw7,it:tw8,it:tw9,it:tw10,it:tw11,it:tw12,it:tw13,it:tw14} are all satisfied.
\end{clproof}

Now, the lemma directly follows from \zcref{cl:seq2tw} applied to the $(I,J)$-sequence defined by the $(I,J)$-minimal transformer.
\end{proof}

\begin{lem}\label{lem:incomparable}
    Let $(A,\mu_1,\dots,\mu_h,B)$ be a twister in a TOWS graph $\mathbf M$.
    If $1< s< h$ and $1\le t\le h$ and $|s-t|>1$, then no element of $V(\mu_s)$ is smaller than an element of $V(\mu_t)$. 

    Consequently, if $1<s<t< h$ and $|s-t|>1$, then no element of $V(\mu_s)$ is comparable to an element of $V(\mu_t)$.
\end{lem}
\begin{proof}
    Assume for contradiction that some element of $V(\mu_s)$ is smaller than some element of $V(\mu_t)$. By homogeneity of the pair $(\mu_s,\mu_t)$ (\zcref{it:tw5}), we get that all the elements of $\mu_s$ are smaller (with respect to $\prec$) to all the elements of $\mu_t$. It follows that $\mu_s$ is a chain, contradicting \zcref{it:tw2}.
\end{proof}

We now make precise the properties of exceptional pairs that can arise in the definition of a twister.

\begin{lem}
\label{lem:except}
    Let $(A,\mu_1,\dots,\mu_h,B)$ be an $(I,J)$ twister in a tree-ordered graph $\mathbf M$, where $\min(|I|,|J|)\geq 4$.
    Assume $\mu_1$ is a chain and $\mu_2$ is an antichain.
    Then  for every $i,i'\in I$ and $j,j'\in J$, 
        \begin{align*}
        \text{either}
        \quad& \mu_1(i,j)\parallel\mu_2(i',j'),\\
        \text{or}
        \quad&(i<i')\Longrightarrow \mu_1(i,j)\prec\mu_2(i',j')
        \quad\text{ and }\quad
        (i>i')\Longrightarrow \mu_1(i,j)\parallel\mu_2(i',j'),\\
        \text{or}        \quad&\mu_1(i,j)\prec\mu_2(i',j').
    \end{align*}
\end{lem}
\begin{proof}
    Let $I_0=I\setminus\{\min I,\max I\}$ and $J_0=J\setminus\{\min J,\max J\}$. Without loss of generality we can assume that the restriction of $\prec$ to $V(\mu_1)$ is $<_{I,J}$.

    Assume for contradiction that 
    there exists $i,i'\in I_0$ and $j,j'\in J_0$ such that $\mu_2(i',j')\prec\mu_1(i,j)$.
    As $\mu_1(i,j)\prec\mu_1(\max I,\max J)$ we get, by regularity of the pair $(\mu_1,\mu_2)$ we deduce $\mu_2(i',j')\prec\mu_1(\max I,\max J)$ holds for every $i'\in I_0$ and $j'\in J_0$. Thus, all the $\mu_2(i',j')$ for $i'\in I_0$ and $j'\in J_0$ are comparable, contradicting the assumption that $\mu_2$ is an antichain.

    If there exists no $i,i'\in I_0$ and $j,j'\in J_0$ such that $\mu_1(i,j)\prec\mu_2(i',j')$, then for every $i,i'\in I$ and every $j,j'\in J$ we have $\mu_1(i,j)\parallel \mu_2(i',j')$ (by \zcref{cl:restr}).

    So, we can assume that there exists $i,i'\in I_0$ and $j,j'\in J_0$ such that $\mu_1(i,j)\prec\mu_2(i',j')$.
    Hence, for every $j''\in J$ we have
    $\mu_1(\min I,j'')\prec \mu_1(i,j)\prec\mu_2(i',j')$.
    By regularity of the pair $(\mu_1,\mu_2)$ we conclude that for every $i<i'$ in $I$ and every $j,j'\in J$ we have $\mu_1(i,j)\prec \mu_2'(i',j')$
    (for every $j,j'\in J$).

Assume there exists no $i>i'\in I_0$ and $j,j'\in J_0$ such that $\mu_1(i,j)\prec\mu_2(i',j')$. 
Then  $(i<i')\Longrightarrow \mu_1(i,j)\prec\mu_2(i',j')$ and $(i>i')\Longrightarrow \mu_1(i,j)\parallel\mu_2(i',j')$ (by \zcref{cl:restr}).

Otherwise, there exists $i>i'\in I_0$ and $j,j'\in J_0$ such that     $\mu_1(i,j)\prec\mu_2(i',j')$.  
    By regularity,
    $\mu_1(i,j)\prec\mu_2(\min I,j')$. Hence, for every $j''\in J$, we have $\mu_1(i',j'')\prec\mu_1(i,j)\prec\mu_2(\min I, j')$. Thus, by regularity, for every $i>i'\in I$ and every $j,j'\in J$, we have $\mu_1(i,j)\prec\mu_2(i',j')$. 
    Thus, $(i\neq i')\Longrightarrow \mu_1(i,j)\prec\mu_2(i',j')$.
    Hence, for every $i,i'\in I_0$ and every $j,j'\in J$, we have
    $\mu_1(i,j)\prec\mu_1(\min I,j)\prec \mu_2(i',j')$. Hence, according to \zcref{cl:restr}, $\mu_1(i,j)\prec\mu_2(i',j')$ for every $i,i'\in I$ and every $j,j'\in J$.
\end{proof}

\subsection{Clean twisters}
\label{sec:clean_twistsers}
The aim of this section is to refine the notion of twister to three clean  configurations.

\begin{ndefi}
    Let $(\mu,\mu')$ be a conducting pair of regular meshes in a tree-ordered graph $\mathbf M$, where $\mu$ is a chain and $\mu'$ is an antichain.
    Then  the pair $(\mu,\mu')$ is
\begin{itemize}
    \item \ndef{simply vertical} if 
    $\mu(i,j)\prec \mu'(i',j')$ if and only if $i\leq i'$ and $V(\mu)\cup V(\mu')$ is independent;
\item \ndef{simply horizontal} if 
    $\mu(i,j)\prec \mu'(i',j')$ if and only if $j\leq j'$ and \mbox{$V(\mu)\cup V(\mu')$} is independent.
\end{itemize}
\end{ndefi}

\pagebreak
\begin{ndefi} \label{def:preclean}
    A \ndef{pre-clean} $(I,J)$-twister in a tree-ordered graph $\mathbf M$ is a  $(I,J)$-twister $(A,\mu_1,\dots,\mu_h,B)$ in $\mathbf M$ of one of the following types:
\begin{mytype}
    \item $A$ and $B$ are non-empty antichains (in $\mathbf M^\prec$); \label{typ1}
    \item $A=B=\emptyset$, the restriction of $\prec$ to $V(\mu_1)$ (resp.\ $V(\mu_h)$) is $<_{J}$ (resp.~$<_{I}$);\label{typ2}
        \item $A=B=\emptyset$, and the restriction of $\prec$ to $V(\mu_1)$ (resp.\ $V(\mu_h)$) is $<_{I,J}$ (resp.\ a linear order of anti-lexicographic type). \label{typ3}
\end{mytype}
\end{ndefi}
Pre-clean twisters have some additional derived properties. We prove below some of them that will be useful in forthcoming proofs.
\begin{lem}\label{lem:preclean_prop}
    Let $(A,\mu_1,\dots,\mu_h,B)$ be a pre-clean $(I,J)$-twister in a tree-ordered graph $\mathbf M$. 
    \begin{itemize}
        \item If the twister has type 1, then  $A\cup B$, $A\cup V(\mu_s)$ (for $s>1$), and  $B\cup V(\mu_s)$ (for $s<h$) are antichains;
        \item If the twister has type 2, then vertices of $\mu_1$ and $\mu_2$ (resp.\ of $\mu_{h-1}$ and~$\mu_h$) are incomparable.
    \end{itemize}
\end{lem}
\begin{proof}
    Assume the twister has type 1. 
    According to \zcref{it:tw11}, the pair $(A,B)$ is homogeneous. Hence, if some element of $A$ is comparable with some element of $B$, then  one of $A$ and $B$ has all its elements smaller than the elements of the other, hence is a chain, contradicting the assumptions. Thus, $A\cup B$ is an antichain.
    According to \zcref{it:tw2}, $\mu_s$ is an antichain for all $1<s<h$. 
    As~$A$ is not empty, according to \zcref{it:tw6}, $V(\mu_1)$ is not inner-vertical. Moreover, $\mu_1$ is not inner-horizontal (by \zcref{it:tw14} if $h>1$ and by \zcref{it:tw7} if $h=1$ as $B\neq\emptyset$).
    Hence, according to \zcref{lem:reg}, $\mu_1$ is an antichain. Similarly, $\mu_h$ is an antichain. Let $1<s\leq h$. According to \zcref{it:tw9}, the pair $(A,\mu_s)$ is homogeneous. If the elements of $A$ would be comparable with the elements of~$\mu_s$, then at least one of $A$ and $\mu_s$ would be a chain (as $\prec$ is a tree-order). Hence, $A\cup V(\mu_s)$ is an antichain. Similarly, for every $1\leq s<h$, $B\cup V(\mu_s)$ is an antichain.

    Assume the twister has type 2. According to \zcref{it:tw4}, the pair $(\mu_1,\mu_2)$ is matching. 
    Assume for contradiction that there exists $i,i'\in I$ and $j,j'\in J$ with $\mu_1(i,j)\prec\mu_2(i',j')$. 
    By regularity, we can assume that $j\neq \min J$.
    Then  we have $\mu_1(i,\min J)\prec\mu_1(i,j)\prec\mu_2(i',j')$.
    As at least one of $(i,\min J)$ and $(i,j)$ is different from $(i',j')$, we deduce from the  quasi-homogeneity of $(\mu_1,\mu_2)$ that $V(\mu_1)$ is a chain, contradicting the assumption. 
    Similarly, if there exists $i,i'\in I$ and $j,j'\in J$ with $\mu_2(i,j)\prec\mu_1(i',j')$, we can assume that $j'$ is not $\max J$ and we deduce as above that $V(\mu_2)$ is a chain, contradicting our assumptions. Thus, the vertices of $\mu_1$ and $\mu_2$ are incomparable. 
    Alike, we prove that the vertices of $\mu_{h-1}$ and $\mu_h$ are incomparable. 
\end{proof}

\begin{ndefi}\label{def:clean}
    A \emph{clean}\index{clean twister} $(I,J)$-twister in a tree-ordered graph $\mathbf M$ is  a pre-clean $(I,J)$-twister such that no vertex of $\mu_h$ is smaller than a vertex of $\mu_1$, the root $r$ is homogeneous to $A,B$ and each mesh of the twister, with the following additional  properties if the twister has type 3:
    \begin{itemize}
        \item  $(\mu_1,\mu_2)$ is either  matching or simply vertical;
        \item  $(\mu_h,\mu_{h-1})$ is either matching or simply horizontal.
    \end{itemize}
\end{ndefi}

The rest of this section will be devoted to the proof that every twister can be turned into a clean one. 

The next two lemmas will be useful.
\begin{lem}
\label{lem:useful}
	Let $I,J,I_1,J_1$ be totally ordered sets with $|I|\geq |I_1|+|J_1|$ and $|J|\geq |I_1|\cdot|J_1|$. 

    Identify the linear sum $\hat I=I_1\oplus J_1$ with a subset of $I$ and the lexicographic product $\hat J=I_1\bullet J_1$ with a subset of $J$.
	
	Let $(\mu,\mu')$ be a regular pair of $(I,J)$-meshes of a tree-ordered graph $\mathbf M$, and let 
    \begin{align*}
    \hat\mu(i,j)&:=\mu(i,(i,j)) \\\text{and }\hat\mu'(i,j)&:=\mu'(i,(i,j))
    \end{align*}
    for $i\in I_1$ and $j\in J_1$.

    Then the pair $(\hat\mu,\hat\mu')$ is regular.
    
    Moreover, if the pair $(\mu,\mu')$ is matching, then the pair $(\hat\mu,\hat\mu')$ is matching.
\end{lem}
\begin{proof}
	Let $i,i'\in I_1$ and $j,j'\in J_1$.
	Then
	\[	\atp(\hat\mu(i,j),\hat\mu'(i',j'))=\atp(\mu(i,(i,j)),\mu'(i',(i',j')))
	\]
    
	depends only on $\otp(i,i')$ and $\otp(j,j')$, hence $(\hat\mu,\hat\mu')$ is regular.
	Moreover, if $(\mu,\mu')$ is matching, then 
    $\atp(\mu(i,(i,j)),\mu'(i',(i',j')))$
     takes different values if $(i,(i,j))=(i',(i',j'))$ and 
	if $(i,(i,j))\neq (i',(i',j'))$, i.e. if $(i,j)=(i',j')$ and if $(i,j)\neq(i',j')$.
    Thus, $(\hat\mu,\hat\mu')$ is matching.
\end{proof}
\begin{lem}
\label{lem:useful2}
	Let $I,J,I_1,J_1$ be totally ordered sets with $|I|,|J|\geq  |I_1|\cdot|J_1|$. 

    Identify the lexicographic products $I_1\bullet J_1$ with a subset $\hat I$ of $I$ and $I_1\bullet J_1$ or $J_1\bullet I_1$ with a subset of $\hat J$ of $J$.
	
	Let $(\mu,\mu')$ be a regular pair of $(I,J)$-meshes of a tree-ordered graph $\mathbf M$, and let 
    \begin{align*}
    \text{either }
    &\begin{cases}
        &\hat\mu(i,j):=\mu((i,j),(i,j))\\
        &\qquad\qquad\text{and}\\
        &\hat\mu'(i,j):=\mu'((i,j),(i,j)),
    \end{cases}\\
    \\
  \text{or }
    &\begin{cases}
     &\hat\mu(i,j):=\mu((i,j),(j,i))\\
        &\qquad\qquad\text{and}\\
        &\hat\mu'(i,j):=\mu'((i,j),(j,i)),
    \end{cases}
    \end{align*}
    for every $i\in I_1$ and $j\in J_1$.

    Then the pair $(\hat\mu,\hat\mu')$ is regular.
    
    Moreover, if the pair $(\mu,\mu')$ is matching, then the pair $(\hat\mu,\hat\mu')$ is matching.
\end{lem}
\begin{proof}
	Let $i,i'\in I_1$ and $j,j'\in J_1$.
	Then $\atp(\hat\mu(i,j),\hat\mu'(i',j'))$
	depends only on $\otp((i,j),(i',j'))$ and/or on $\otp((j,i),(j',i'))$, that is, on 
    $\otp(i,i')$ and $\otp(j,j')$, hence $(\hat\mu,\hat\mu')$ is regular.
	Moreover, if $(\mu,\mu')$ is matching, then it takes different values if $(i,j)=(i',j')$ and 
	if $(i,j)\neq (i',j')$.
    Thus, $(\hat\mu,\hat\mu')$ is matching.
\end{proof}

\begin{lem}
	\label{lem:killChain}
	Let $n$ be an integer, let $|I|,|J|\ge n^2$, 
	let $(A,\mu_1,\dots,\mu_h,B)$ be an $(I,J)$-twister of a tree-ordered graph $\mathbf M$, where  $A$ and $B$ are both non-empty chains and $n\geq 2\bomega(\mathbf M)+2$.

    Then  $\mathbf M$ admits a pre-clean twister of Type $3$,
    with length $h+2$ and order $n$.
\end{lem}
\begin{proof}
    By restricting $I$ and $J$ if necessary, we can assume $|I|=|J|=n^2$.
    We consider two ordered sets $I_1,J_1$ with $|I_1|=|J_1|=n$  and identify $I$ with $I_1\bullet J_1$ and~$J$ with $J_1\bullet I_1$. Define
\[
    \mu_s'(i,j)=\begin{cases}
        a(i,j)&\text{if $s=1$,}\\
        \mu_{s-1}((i,j),(j,i))&\text{if $2\leq s\leq h+1$,}\\
        b(j,i)&\text{if $s=h+2$.}
    \end{cases}
\]

We now prove that $(\emptyset,\mu_1',\dots,\mu_{h+2}',\emptyset)$ is a Type $3$ pre-clean twister.

By construction, the restriction of $\prec$
to $V(\mu_1')$ (\/resp.\ $V(\mu_{h+2}')$\/) is $<_{I_1,J_1}$ (\/resp.\ $<_{J_1,I_1}$\/). 
Hence,
\zcref{it:tw1,it:tw3,it:tw6,it:tw7} hold.
Moreover, \zcref{it:tw8,it:tw13} obviously hold and \zcref{it:tw9,it:tw10,it:tw11,it:tw12} are vacuously true.

\zcref{it:tw2} for $\mu_s$ with
$1<s<h$ implies the same property for $\mu_{s+1}'$. As $A\neq\emptyset$, $\mu_1$ is not inner-vertical (by \zcref{it:tw6}).
As $B\neq\emptyset$, $\mu_1$ is also not inner-horizontal (by \zcref{it:tw14,it:tw7}).
According to \zcref{lem:inde,lem:reg}, $\mu_1$ is an independent antichain. Similarly, $\mu_h$ is an independent antichain. Hence, \zcref{it:tw2} holds.
By construction, $\mu_1'$ is inner-vertical (thus not inner-horizontal) and $\mu_{h+2}'$ is inner-horizontal (thus not inner-vertical). Hence, \zcref{it:tw14} holds.

\zcref{it:tw5} follows from \zcref{it:tw5,it:tw9,it:tw10} of the twister $(A,\mu_1,\dots,\mu_h,B)$.

As $\mu_1$ and $\mu_h$ are antichains (see above), 
 \zcref{it:tw4} follows from \zcref{lem:useful2} and \zcref{it:tw4} of $(A,\mu_1,\dots,\mu_h,B)$ for $1<s<h+1$ and $t=s+1$.
As $\mu_1'$ is a chain and $\mu_2'$ is an antichain, 
\zcref{it:tw4} holds for $s=1$ and $t=2$; similarly, as $\mu_{h+2}'$ is a chain  and $\mu_{h+1}'$ is an antichain,
\zcref{it:tw4} holds for $s=h+1$ and $t=h+2$. Hence, the sequence $(\emptyset,\mu_1',\dots,\mu_{h+2}',\emptyset)$ satisfies \zcref{it:tw4}.

Altogether, we deduce that $(\emptyset,\mu_1',\dots,\mu_{h+2}',\emptyset)$ is a twister. That it is pre-clean and has Type $3$ is clear from the construction.
\end{proof}

\begin{lem}
	\label{lem:killB}
	Let $n$ be an integer, let $|I|\ge 2n^2$, $|J|\geq n^4$, and
	let $(A,\mu_1,\dots,\mu_h,B)$ be an $(I,J)$-twister of a tree-ordered graph $\mathbf M$, where $B$ is a non-empty antichain and $n\geq 2\bomega(\mathbf M)+2$.

    Then  $\mathbf M$ admits a pre-clean twister with length at most $2h+3$ and order $n$.
\end{lem}
\begin{proof}
If $A$ is a non-empty antichain, then $(A,\mu_1,\dots,\mu_h,B)$ is a pre-clean twister of Type $1$. Indeed, notice
that since ancestors in a tree order induce chains, if $(X,Y)$ is an homogenous pair where $X$ and $Y$ are 
antichains, then $X\cup Y$ is an antichain.
Thus, we can assume that $A$ is either empty or a non-empty chain (according to \zcref{it:tw12}).

By restricting $I$ and $J$ if necessary, we can assume $|I|=2n^2$ and $|J|=n^4$.
    We consider two disjoint ordered sets $I_1,J_1$ with $|I_1|=|J_1|=n^2$ and identify $I$ with $I_1\oplus J_1$ and $J$ with $I_1\bullet J_1$. 

	As $B$ is not empty, $\mu_h$ is an antichain and thus $B$ is the range of a horizontal guard $b$ of $\mu_h$. 
    If $A$ is non-empty, we 
    let $a:I\rightarrow M$ be a vertical guard of $\mu_1$ with range $A$.
	We define, for $i\in I_1$ and $j\in J_1$, 
	\begin{align*}
        A'&=\begin{cases}
            a(I_1)&\text{if $A\neq\emptyset$,}\\
            \emptyset&\text{otherwise;}
        \end{cases}\\
        B'&=\begin{cases}
            a(J_1)&\text{if $A\neq\emptyset$,}\\
            \emptyset&\text{otherwise;}
        \end{cases}\\
		\mu_s'(i,j)&=\begin{cases}
		\mu_s(i,(i,j))&\text{ if }1\leq s\leq h,\\
		b((i,j))&\text{if }s=h+1,\\
		\mu_{2h+2-s}(j,(i,j))&\text{ if }h+2\leq s\leq 2h+1.\\	
	\end{cases}	
	\end{align*}

    \zcref{it:tw2} follows from \zcref{it:tw2,it:tw12}  of $(A,\mu_1,\dots,\mu_h,B)$.

    If $A=\emptyset$, $\mu_1'$ is inner-vertical and $\mu_{2h+1}'$ is inner-horizontal, hence \zcref{it:tw1,it:tw3,it:tw6,it:tw7,it:tw14} hold. Moreover, \zcref{it:tw9,it:tw10,it:tw11,it:tw12} vacuously hold.
    Otherwise, $\mu_1'$ is vertical and guarded by $A'$ and
	$\mu_{2h+1}'$ is horizontal and guarded by $B'$. Hence, \zcref{it:tw1,it:tw3,it:tw6,it:tw7,it:tw14} hold. \zcref{it:tw9,it:tw10} follows from the \zcref{it:tw9,it:tw11} of $(A,\mu_1,\dots,\mu_h,B)$.
    \zcref{it:tw11,it:tw12} follows \zcref{it:tw12} of $(A,\mu_1,\dots,\mu_h,B)$.
    Hence, we have proved \zcref{it:tw1,it:tw3,it:tw6,it:tw7,it:tw9,it:tw10,it:tw11,it:tw12}.

    \zcref{it:tw5} follows from \zcref{it:tw5,it:tw10} of $(A,\mu_1,\dots,\mu_h,B)$.
	As $V(\mu_s')$ and $V(\mu_{2h+2-s}')$ are disjoint subsets of $V(\mu_s)$ (for $1\le s\le h$), and $V(\mu'_{h+1})\subseteq B$,
    \zcref{it:tw8,it:tw13} hold.

    For $1\leq s<h$ or $h+1<s\leq 2h$ and for $t=s+1$, \zcref{it:tw4} follows from \zcref{it:tw4} of $(A,\mu_1,\dots,\mu_h,B)$ and \zcref{lem:useful}.
    For $s=h$ and $t=h+1$, the pair $(\mu_s',\mu_t')$ is regular as $B$ is the range of a horizontal guard of $\mu_h$ (by \zcref{it:tw7} of $(A,\mu_1,\dots,\mu_h,B)$). As $\mu_h$  is not inner-horizontal, it is an antichain and hence, so is $\mu_s'$. As $B$ (hence $\mu_t'$) is also an antichain, according  to \zcref{lem:anti-qh}, $(\mu_s',\mu_t')$ is quasi-homogeneous. 
    As $B$ is a horizontal guard of $\mu_h$, $(\mu_s',\mu_t')$ is not homogeneous, hence matching. Similarly, $(\mu_{h+1}',\mu_{h+2}')$ is matching. Thus, \zcref{it:tw4} holds.
    
	Altogether, we proved that 
    $(A',\mu_1',\dots,\mu_{2h+1}',B')$ is a twister of order $n^2$.
    
    If $A$ is empty, this twister has Type $2$ or Type $3$.
    Otherwise, $A'$ and $B'$ are both non-empty chains. By \zcref{lem:killChain}, we deduce that $\mathbf M$ admits a pre-clean twister of Type $3$ with length $2h+3$ and order $n$.
\end{proof}

\begin{lem}
    \label{lem:killH}
    Let $n$ be an integer, let $|I|\ge 2n$, $|J|\ge n^2$, 
	let $(\emptyset,\mu_1,\dots,\mu_h,\emptyset)$ be an $(I,J)$-twister of a tree-ordered graph $\mathbf M$, where $\mu_1$ is a chain, $\mu_h$ is not a chain and $n\geq 2\bomega(\mathbf M)+2$.

    Then  $\mathbf M$ admits a pre-clean twister of Type $3$,
    with length $2h$ and order $n$.
\end{lem}
\begin{proof}
    By restricting $I$ and $J$ if necessary, we can assume $|I|=2n$ and $|J|=n^2$. Without loss of generality, we can assume that the restriction of $\prec$ to $V(\mu_1)$ is $<_{I,J}$.
    We consider two disjoint ordered sets $I_1,J_1$ with $|I_1|=|J_1|=n$ and identify $I$ with $I_1\oplus J_1$ and $J$ with $I_1\bullet J_1$. 
Define
\[
    \mu_s'(i,j)=\begin{cases}
        \mu_{s}(i,(i,j))&\text{if $1\leq s\leq h$,}\\
        \mu_{2h+1-s}(j,(i,j))&\text{if $h+1\leq s\leq 2h$.}
    \end{cases}
\]
We now prove that $(\emptyset,\mu_1',\dots,\mu_{2h}',\emptyset)$
is a pre-clean twister of Type $3$.

For $i\in I_1$ and $j\in J_1$,
we have $(i,(i,j))<_{I,J} (i',(i',j'))$
if $i<_{I_1} i'$ or $i=i'$ and $j<_{J_1} j'$, i.e. if $(i,j)<_{I_1,J_1}(i',j')$.
Hence, the restriction of $\prec$ to $V(\mu_1')$ is $<_{I_1,J_1}$.
Similarly, the restriction of $\prec$ to $V(\mu_{2h}')$ is $<_{J_1,I_1}$. Thus, \zcref{it:tw1,it:tw3,it:tw6,it:tw7} hold. 
Moreover, \zcref{it:tw9,it:tw10,it:tw11,it:tw12} vacuously hold.
As $V(\mu_s')$ and $V(\mu_{2h+1-s})$ are disjoint subsets of $V(\mu_s)$, \zcref{it:tw8} holds.
Assume $\mu_h'(i,j)$ is comparable with $\mu_h'(i',j')$. Then  $\mu_h(i,(i,j))$ is comparable with $\mu_h(i',(i',j'))$, thus $(i,j)=(i',j')$.
Hence, $\mu_h'$ is an antichain. 
Similarly, $\mu_{h+1}'$ is an antichain.
Both are independent sets according to \zcref{lem:inde}.
Thus, \zcref{it:tw2} and \zcref{it:tw13} follows from the same properties of $(\emptyset,\mu_1,\dots,\mu_h,\emptyset)$.
As $\mu_1'$ is inner-vertical and $\mu_{2h}'$ is inner-horizontal, \zcref{it:tw14} holds.

We now prove \zcref{it:tw5}.
Let $1\leq s,t\leq h$.
If $|s-t|>1$, then the pairs 
$(\mu_s',\mu_t'),(\mu_s',\mu_{2h+1-t}')$, $(\mu_{2h+1-s}',\mu_t')$, and $(\mu_{2h+1-s}',\mu_{2h+1-t}')$ are homogeneous as
$(\emptyset,\mu_1,\dots,\mu_h,\emptyset)$ has \zcref{it:tw5}.
If $1\leq s< h$, then
$(\mu_s',\mu_{2h+1-s}')$ is homogeneous as
$(\emptyset,\mu_1,\dots,\mu_h,\emptyset)$ has \zcref{it:tw2} and $\mu_1$ is a chain.
If $1<s\leq h$, then
 $(\mu_s',\mu_{2h-s}')$ and
  $(\mu_{s+1}',\mu_{2h+1-s}')$
 are homogeneous as $(\mu_s,\mu_{s+1})$
 is matching (by \zcref{it:tw4} of 
 $(\emptyset,\mu_1,\dots,\mu_h,\emptyset)$).
 That the pairs $(\mu_1',\mu_{2h-1}')$ and $(\mu_2',\mu_{2h}')$ are homogeneous follows from \zcref{lem:except}. Thus, \zcref{it:tw5} holds.

 We now prove \zcref{it:tw4}. For $1<s<h$ and $h<s<2h-1$, the pair $(\mu_s',\mu_{s+1}')$ is matching according to \zcref{it:tw4} of $(\emptyset,\mu_1,\dots,\mu_h,\emptyset)$. By construction, $\mu_h'(i,j)$ is comparable to $\mu_{h+1}'(i',j')$
if and only if $(i,j)=(i',j')$. Hence, $(\mu_h',\mu_{h+1}')$ is matching.
As $(\mu_1,\mu_2)$ is conducting, according to \zcref{lem:except}, we have that for every $i\neq i'\in I_1$ and $j\neq j'\in J_1$ we have
\begin{align*}
\atp(\mu_1(i,(i,j)),\mu_2(i',(i',j')))&\neq \atp(\mu_1(i',(i',j')),\mu_2(i,(i,j)))\\    
\text{and }\atp(\mu_1(j,(i,j)),\mu_2(j',(i',j')))&\neq \atp(\mu_1(j',(i',j')),\mu_2(j,(i,j))). 
\end{align*}
Hence, $(\mu_1',\mu_2')$ and $(\mu_{2h}',\mu_{2h-1}')$ are not homogeneous. According to \zcref{lem:useful},
these pairs are also regular, hence conducting. Thus, \zcref{it:tw4} holds.
\end{proof}

\begin{lem}
	\label{lem:killCC}
	Let $n$ be an integer, let $|I|,|J|\ge n^2$, 
	let $(A,\mu_1,\dots,\mu_h,\emptyset)$ be an $(I,J)$-twister of a tree-ordered graph $\mathbf M$, where both $A$ and $\mu_h$ are non-empty chains, and $n\geq 2\bomega(\mathbf M)+2$.

    Then  $\mathbf M$ admits a pre-clean twister of Type $3$ with length $h+1$ and order $n$.
\end{lem}
\begin{proof}
    By restricting $I$ and $J$ if necessary, we can assume $|I|=|J|=n^2$. 
    Without loss of generality, we can assume that 
    the restriction of $\prec$ to $V(\mu_h)$ is $<_{J,I}$.
    As $A$ is not empty, $A$ is the range of a vertical guard $a$ of $\mu_1$ (by \zcref{it:tw6}).
    We consider two disjoint ordered sets $I_1,J_1$ with $|I_1|=|J_1|=n$ and identify $I$ with $I_1\bullet J_1$ and $J$ with $J_1\bullet I_1$. 
    Define
    \[
\mu_s'(i,j)=\begin{cases}
    a((i,j))&\text{if }s=1;\\
    \mu_{s-1}((i,j),(j,i))&\text{if }1<s\leq h+1.
\end{cases}
\]

Now, we aim to prove that the sequence $(\emptyset,\mu_1',\dots,\mu_{h+1}',\emptyset)$ is a twister.

By construction, the restriction of $\prec$ to $V(\mu_1')$ is $<_{I_1,J_1}$. Hence, \zcref{it:tw1,it:tw6} hold.
Moreover, for $i,i'\in I_1$ and $j,j'\in J_1$ we have
\begin{align*}
    \mu_{h+1}'(i,j)\prec\mu_{h+1}'(i',j')&\iff \mu_h((i,j),(j,i))\prec\mu_h((i',j'),(j',i'))\\
    &\iff (j,i)<(j',i')\quad\text{(w.r.t. the order of $J=J_1\bullet I_1$)}\\
    &\iff (i,j)<_{J_1,I_1}(i',j')
\end{align*}
Hence, the restriction of $\prec$ to $V(\mu_{h+1}')$ is $<_{J_1,I_1}$. Thus, \zcref{it:tw3,it:tw7} hold.
\zcref{it:tw8,it:tw13} obviously follow from the same properties of $(A,\mu_1,\dots,\mu_h,\emptyset)$ and
\zcref{it:tw9,it:tw10,it:tw11,it:tw12} vacuously hold.
\zcref{it:tw5} follows from \zcref{it:tw5,it:tw9} of $(A,\mu_1,\dots,\mu_h,\emptyset)$. 

That $\mu_s'$ is independent and an antichain for $2<s<h+1$ follows from \zcref{it:tw2} of $(A,\mu_1,\dots,\mu_h,\emptyset)$. That $\mu_2'$ is independent follows from \zcref{lem:inde}. 

Assume $h>1$. That $\mu_1$ is an antichain follows from \zcref{it:tw6,it:tw14} of $(A,\mu_1,\dots,\mu_h,\emptyset)$, by the fact that $A$ is not empty.
Hence, \zcref{it:tw2} holds. 
Otherwise, \zcref{it:tw2} holds vacuously.

As $\mu_1'$ is internally vertical and $\mu_{h+1}'$ is internally horizontal, we deduce  \zcref{it:tw14}.

That $(\mu_s',\mu_{s+1}')$ is matching for $2\leq s\leq h$ follows from \zcref{lem:useful2}, \zcref{it:tw4} of  $(A,\mu_1,\dots,\mu_h,\emptyset)$, and the fact that $\mu_1$ is not a chain (as $h>1$ in this case).

For $i_1,i_1',i_2'\in I_1$ and $j_1,j_1',j_2'\in J_1$, 
$\atp(a((i_1,j_1)),\mu_1((i_1',j_1'),(j_2',i_2')))$ only depends on $\otp((i_1,j_1),(i_1',j_1'))$ and is not constant (by definition of a vertical guard).
It follows that $\atp(\mu_1'(i,j),\mu_2'(i',j'))=\atp(a((i,j)),\mu_1((i,j),(j,i))$
depends only on $\otp((i,j),(i',j'))$ (hence only on $\otp(i,i')$ and $\otp(j,j')$) and is not constant.
It follows that the pair $(\mu_1',\mu_2')$ is conducting.
If $h>1$, then $\mu_1'$ is a chain  and~$\mu_2'$ is an antichain, hence \zcref{it:tw4} holds. 
Otherwise, $\mu_1'$ is inner-vertical and~$\mu_2'$ is inner-horizontal. According to \zcref{lem:matching}, $(\mu_1',\mu_2')$ is matching, hence \zcref{it:tw4} holds.

By all the above, we get that $(\emptyset,\mu_1',\dots,\mu_{h+1}',\emptyset)$ is a pre-clean twister of Type~$3$.
\end{proof}

\begin{lem}
	\label{lem:killCS}
    Let $n$ be an integer, let $|I|\geq 2n^2$,$|J|\ge n^2$, 
	let $(A,\mu_1,\dots,\mu_h,\emptyset)$ be an $(I,J)$-twister of a tree-ordered graph $\mathbf M$, where $A$ is a non-empty chain, $\mu_h$ is not a chain, and $n\geq 2\bomega(\mathbf M)+2$.

    Then $\mathbf M$ admits a pre-clean twister of Type $3$ with length $2h+2$ and order $n$.
\end{lem}
\begin{proof}
     By restricting $I$ and $J$ if necessary, we can assume $|I|=2n^2$ and $|J|=n^2$. 
    Without loss of generality, we can assume that 
    the restriction of $\prec$ to $V(\mu_h)$ is $<_{I}$.
    As $A$ is not empty, $A$ is the range of a vertical guard $a$ of $\mu_1$ (by \zcref{it:tw6}).
    We consider two disjoint ordered sets $I_1,J_1$ with $|I_1|=|J_1|=n$ and identify $I$ with $I_1\bullet J_1\oplus J_1\bullet I_1$ and $J$ with $J_1\bullet I_1$. For $i\in I_1$ and $j\in J_1$,
    define  
    \[
\mu_s'(i,j)=\begin{cases}
    a((i,j))&\text{if }s=1;\\
    \mu_{s-1}((i,j),(j,i))&\text{if }1<s\leq h+1;\\
    \mu_{2h+2-s}((j,i),(j,i))&\text{if }h+1<s\leq 2h+1;\\
    a((j,i))&\text{if }s=2h+2.
\end{cases}
\]

We now prove that the sequence $(\emptyset,\mu_1',\dots,\mu_{2h+2}',\emptyset)$ is an $(I_1,J_1)$-twister.

As the restriction of $\prec$ to $V(\mu_1')$ (resp.\ $V(\mu_{2h+2}'$) is $<_{I_1,J_1}$ (resp.\ $<_{J_1,I_1}$), \zcref{it:tw1,it:tw6,it:tw3,it:tw7} hold.

\zcref{it:tw9,it:tw10,it:tw11,it:tw12} vacuously hold. Moreover, \zcref{it:tw8,it:tw13} obviously follow from the same properties of $(A,\mu_1,\dots,\mu_h,\emptyset)$.

That all the $\mu_s'$ are independent follows from \zcref{lem:inde}.
By \zcref{it:tw6} of $(A,\mu_1,\dots,\mu_h,\emptyset)$, $\mu_1$ is an antichain. Hence, by \zcref{it:tw2} of $(A,\mu_1,\dots,\mu_h,\emptyset)$, $\mu_s'$ is an antichain for every $1<s<h+1$ and every $h+2<s<2h+2$. 
For every $i,i'\in I_1$ and every $j,j'\in J_1$, we have (according to our assumption on $\mu_h$) that
\begin{align*}
    \mu_{h+1}'(i,j)\parallel\mu_{h+1}'(i',j')&\iff \mu_h((i,j),(j,i))\parallel\mu_h((i',j'),(j',i'))\\
    &\iff (j,i)\neq (j',i').
\end{align*}
Hence, $\mu_{h+1}'$ is an antichain.
Similarly, $\mu_{h+2}'$ is an antichain, thus \zcref{it:tw2} holds. As $\mu_1'$ is internally vertical and $\mu_{2h+2}'$ is internally horizontal, we deduce \zcref{it:tw14}.

We now prove \zcref{it:tw5}.
As all the pairs $(\mu_s,\mu_t)$ with $|s-t|>1$ are homogeneous (\zcref{it:tw5}), it follows that for every $1<s,t\leq h+1$ with $|s-t|>1$, the pairs $(\mu_s',\mu_t')$, $(\mu_s',\mu_{2h+2-t}')$, $(\mu_{2h_2-s}',\mu_t')$, and 
$(\mu_{2h+2-s}',\mu_{2h+2-t}')$ are homogeneous. That the pairs $(\mu_s',\mu_{2h+2-s}')$ (for $1<s<h+1$) are homogeneous follows from \zcref{it:tw2,it:tw6} of $(A,\mu_1,\dots,\mu_h,\emptyset)$. That $(\mu_1',\mu_{2h+2}')$ is homogeneous is immediate from the definition.
For $1<s,t<h+1$ and $|s-t|=1$, that the pair $(\mu_s',\mu_{2h+2-t}')$ is homogeneous follows from the fact that pair $(\mu_s,\mu_t)$ is matching (which follows from \zcref{it:tw4} of $(A,\mu_1,\dots,\mu_h,\emptyset)$ and the fact that none of $\mu_1$ and $\mu_h$ is a chain).
For $2<s<2h$, that the pairs $(\mu_1',\mu_s')$ and $(\mu_{2h+2}',\mu_s')$ are homogeneous follow from \zcref{it:tw9} of $(A,\mu_1,\dots,\mu_h,\emptyset)$.
For all $i,i'\in I_1$ and $j,j'\in J_1$ we have $(i,j)<_I (j',i')$. It follows that the pair $(\mu_1',\mu_{2h+1}')$ is homogeneous. Similarly, the pair $(\mu_2',\mu_{2h+2}')$ is homogeneous. Altogether, the sequence $(\emptyset,\mu_1',\dots,\mu_{2h+2}',\emptyset)$ satisfies \zcref{it:tw5}.

We now prove \zcref{it:tw4}. According to \zcref{lem:useful2} and \zcref{it:tw4} of 
$(A,\mu_1,\dots,\mu_h,\emptyset)$, the pair $(\mu_s',\mu_{s+1}')$ are matching if $1<s<h+1$ or $h+1<s<2h+1$.
For every $i,i'\in I_1$ and $j,j'\in J_1$ we have
\begin{align*}
    \mu_{h+1}'(i,j)\parallel\mu_{h+2}'(i',j')&\iff \mu_h((i,j),(j,i))\parallel\mu_h((j',i'),(j',i'))\\
    &\iff (j,i)\neq (j',i').
\end{align*}
Hence, the pair $(\mu_{h+1}',\mu_{h+2}')$ is matching.
Let $i\neq i'\in I_1$ and $j\neq j'\in J_1$.
As $a$ is a guard of $\mu_1$, we have
\[\atp(a((i,j),\mu_1((i',j'),(j',i'))\neq
\atp(a((i',j'),\mu_1((i,j),(j,i)).\]
Hence,
$\atp(\mu_1'(i,j),\mu_2'(i',j'))\neq \atp(\mu_1'(i',j')\mu_2'(i,j))$. As the pair $(\mu_1',\mu_2')$ is regular, it is conducting. A similar argument shows that the pair $(\mu_{2h+2}',\mu_{2h+1}')$ is conducting. Altogether, we proved \zcref{it:tw4}.

It follows from the above that the sequence $(\emptyset,\mu_1',\dots,\mu_{2h+2}',\emptyset)$ is a pre-clean twister of Type $3$.
\end{proof}

\begin{lem}
\label{lem:r}
    For all integers $n,h$ there exists an integer $m$ such that if a tree-ordered graph $\mathbf M$ contains a pre-clean twister $(A,\mu_1,\dots,\mu_h,B)$ of order $m$ and length $h$, then it contains a pre-clean twister $(A',\mu_1',\dots,\mu_h',B')$ of order $n$ and length $h$ with the property that no vertex of $\mu_h$ is smaller than a vertex of $\mu_1$ and~$\{r\}$ is homogeneous to $A',B'$, and each of the $\mu_i'$.
\end{lem}
\begin{proof}
First note that if some vertex of $\mu_h$ is smaller than some vertex of $\mu_1$, then $h>1$ and we can consider the transpose $(B^\top,\mu_h^\top,\dots,\mu_1^\top,A^\top)$ of our twister.
We assume that the meshes of the twister are index by $I$ and $J$. By the pigeonhole principle, we can find subset $I_0\subseteq I$ and $J_0\subseteq J$ of size at least $m/2$ such that 
$E(a(i),r)$ is constant for $i\in I_0$ and 
$E(b(j),r)$ is constant for $j\in J_0$.
Consider the coloring $\gamma:I_0\times J_0\rightarrow \{\bot,\top\}^h$ given by
$\gamma(i,j):=(E(\mu_1(i,j),r),\dots,E(\mu_h(i,j),r))$. By Ramsey's theorem, if $m$ is sufficiently large, there exists $I'\subseteq I_0$ and $J'\subset J_0$ such that
$\gamma$ is constant on $I'\times J'$. The restriction of $I$ to $I'$ and $J$ to $J'$ yields the claimed twister.
\end{proof}

\begin{lem}
\label{lem:last_clean}
    Let $(\emptyset,\mu_1,\dots,\mu_h,\emptyset)$ be a pre-clean $(I,J)$-twister of type $3$, length~$h$, with order $2n$ ($n\ge 4$) in a tree-ordered graph $\mathbf M$ such that $r$ is homogeneous to all the meshes of the twister.
    Then  $\mathbf M$ contains a clean $(I',J')$-twister $(\emptyset,\mu_1',\dots,\mu_h',\emptyset)$  of type $3$, length $h$, and order $n$.
\end{lem}
\begin{proof}
    Reducing $I$ and $J$, we can assume $I=J=[2n]$. Define
    \begin{align*}
    \mu_1'(i,j)&=\begin{cases}
        \mu_1(2i,2j-1)&\text{if $(\mu_1,\mu_2)$ is not matching,}\\
        \mu_1(2i-1,2j-1)&\text{otherwise;}
    \end{cases}\\
    \mu_s'(i,j)&=\mu_s(2i-1,2j-1)\qquad\text{if $1<s<h$;}\\
        \mu_h'(i,j)&=\begin{cases}
        \mu_h(2i-1,2j)&\text{if $(\mu_h,\mu_{h-1})$ is not matching,}\\
        \mu_h(2i-1,2j-1)&\text{otherwise.}
    \end{cases}
    \end{align*}
    According to \zcref{lem:except},
    this process changes, \emph{ceteris paribus}, a non-matching pair $(\mu_1,\mu_2)$ to a simply vertical pair and a non-matching pair $(\mu_h,\mu_{h-1})$ to a simply horizontal pair.
\end{proof} 

\begin{lem}
\label{lem:cleantw}
    For all integers $n,h$ there exists an integer $m$ such that if a tree-ordered graph $\mathbf M$ contains a twister of order at least $m$
    and $n\geq 2\bomega(\mathbf M)+2$, then $\mathbf M$ contains a clean twister of order $n$ and length at most $2h+3$.
\end{lem}
\begin{proof}
    Let $(A,\mu_1,\ldots,\mu_h,B)$ be a twister of $\mathbf M$ of order $m$ and length at most $h$.

We consider the following cases:

\begin{mycases}
    \item\label{c:c} $A$ and $B$ are both non-empty.
    \begin{mycases}
        \item\label{c:cc}  $A$ and $B$ are both chains. Then  the result follows from \zcref{lem:killChain}.
        \item\label{c:nec} at least one of $A$ and $B$ is an  antichain (without loss of generality, $B$ is an antichain). Then  the result follows from  \zcref{lem:killB}.
    \end{mycases}
    \pagebreak
    \item\label{c:e} At least one of $A$ and $B$ is empty (without loss of generality, $B=\emptyset$).
    \begin{mycases}
        \item\label{c:ee} $A$ is empty.
        \begin{mycases}
            \item\label{c:eecc} $\mu_1$ and $\mu_h$ are both chains. Then  $(A,\mu_1,\dots,\mu_h,B)$ is a pre-clean twister of Type $3$. \item\label{c:eencnc} none of $\mu_1$ and $\mu_h$ are chains. Then  $(A,\mu_1,\dots,\mu_h,B)$ is a pre-clean twister of Type $2$.       \item\label{c:eecnc} $\mu_1$ is a chain and $\mu_h$ is not a chain. Then  the result follows from   \zcref{lem:killH}.
        \end{mycases}
        \item\label{c:ea} $A$ is a non-empty antichain. Then   the result follows from   \zcref{lem:killB}.
        \item\label{c:ec} $A$ is a non-empty chain.
       \begin{mycases}
            \item\label{c:ecc} $\mu_h$ is a chain. Then  the result follows from \zcref{lem:killCC}.
            \item\label{c:ecnc} $\mu_h$ is not a chain. Then   the result follows from \zcref{lem:killCS}.
        \end{mycases}
    \end{mycases}
\end{mycases} 
From the above, we obtain a pre-clean twister with length at most $2h+3$ and with order at least $m^{1/4}$.
Applying \zcref{lem:r,lem:last_clean} and choosing $m$ sufficiently large, we eventually get a clean twister with length at most $2h+3$ and with order at least $n$.
\end{proof}

We deduce a new version of~\zcref{thm:dependent} adapted to classes of tree-ordered weakly sparse graphs. 
\begin{thm}[store=thm:dependentTOWS]
	A class $\mathscr C$ of TOWS graphs is  monadically dependent if and only if for every $h\in\mathbb N$ there is some $n$ such that 
	no tree-ordered graph $G\in\mathscr C$ contains a clean twister of length $h$ and order $n$.
\end{thm}
\begin{proof}
This is a direct consequence of \zcref{thm:dependent,lem:tr2tw,lem:cleantw}.
\end{proof}

\subsection{Unavoidable induced substructures: Twists}
\label{sec:twists}
We now explicit the induced substructures of TOWS graphs (somehow) induced by clean twisters, which will be more suitable for our analysis.

\begin{ndefi}
A \ndef{twist} is 
a TOWS-graph $\mathbf M$ with a distinguished clean twister~$\mathfrak T$ of order at least $2$ covering all the elements of $\mathbf M$ but its root. 
For the sake of simplicity, we call $\mathfrak T$ the twister of $\mathbf M$, and implicitly denote by $I$ and $J$ the ordered sets used to index the meshes of $\mathfrak T$.
\end{ndefi}

The following fact is immediate and follows from the preservation of the properties defining clean twisters when restricting to an induced substructure.

 \begin{fact}\label{fact:restr}
    Let $\mathbf M$ be a TOWS graph
    and let $\mathfrak T=(A,\mu_1,\dots,\mu_h,B)$  be 
    a clean twister.
Let  $\mathbf N$ be the substructure of $\mathbf M$ induced by the union of $\{r\}$ and the domain of $\mathfrak T$.
    Then $\mathfrak T$ is a clean twister of $\mathbf N$. Hence, $\mathbf N$ is a twist.
\end{fact}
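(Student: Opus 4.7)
The plan is to verify that each defining clause of a clean twister (properties Tw1--Tw14 from \zcref{def:twisters}, the type distinction of \zcref{def:preclean}, and the extra cleanness requirements of \zcref{def:clean}) passes unchanged from $\mathbf M$ to the induced substructure $\mathbf N$. The key observation, which I would state first, is that all of these clauses are formulated purely in terms of atomic types of tuples drawn from $\{r\}\cup A\cup B\cup\bigcup_{s}V(\mu_s)$. Since the signature $\{\prec,E\}$ is relational and $\mathbf N$ is the induced substructure on exactly this vertex set, the atomic type $\atp_{\mathbf N}(\bar v)$ coincides with $\atp_{\mathbf M}(\bar v)$ for every tuple $\bar v$ of elements of $\mathbf N$. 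Consequently, predicates such as \emph{vertical}, \emph{inner-vertical}, \emph{regular}, \emph{matching}, \emph{quasi-homogeneous}, \emph{homogeneous}, \emph{chain}, \emph{antichain}, and \emph{independent}, all of which are defined through atomic types, have the same truth value in $\mathbf N$ as in $\mathbf M$.

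From this observation I would then proceed clause by clause. Properties \zcref{it:tw1,it:tw3,it:tw14} concerning the shape of $\mu_1$ and $\mu_h$, and \zcref{it:tw2} stating that the internal meshes are independent antichains, transfer directly. Properties \zcref{it:tw4,it:tw5,it:tw9,it:tw10,it:tw11} about pairwise interactions of meshes, guards, and $A,B$ all follow from type preservation. The disjointness and root-exclusion clauses \zcref{it:tw8,it:tw13} survive because $\mathbf N$ contains these vertices as distinct elements and the root is still identifiable as such. The clauses \zcref{it:tw6,it:tw7} about $A$ being a vertical guard (resp.\ $B$ a horizontal guard) of $\mu_1$ (resp.\ $\mu_h$) again transfer since the guard condition is about atomic types. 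Finally, the type distinction (Type~1/2/3) and the additional cleanness requirements---no vertex of $\mu_h$ below a vertex of $\mu_1$, homogeneity of $\{r\}$ with $A$, $B$, and each $\mu_s$, and the simply vertical/horizontal condition on $(\mu_1,\mu_2)$ and $(\mu_h,\mu_{h-1})$ in Type~3---are all statements about atomic types of tuples inside the chosen domain of $\mathbf N$, so they too remain valid.

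Having verified all clauses, I conclude that $\mathfrak T$ is a clean twister of $\mathbf N$. By construction, the underlying vertex set of $\mathbf N$ equals $\{r\}\cup A\cup B\cup\bigcup_s V(\mu_s)$, which is exactly $\{r\}$ together with the domain of $\mathfrak T$; hence $\mathfrak T$ covers every non-root element of $\mathbf N$, so $\mathbf N$ is a twist as defined just above the statement. There is essentially no obstacle: the whole argument is a bookkeeping check, and the only thing one needs to be careful about is that every clause has been expressed as a property of atomic types of tuples that all lie in the domain of $\mathbf N$; no clause of a clean twister refers to elements of $\mathbf M$ outside this set, so nothing can be lost in the restriction.
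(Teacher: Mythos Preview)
Your proposal is correct and matches the paper's own justification exactly: the paper states only that the fact ``is immediate and follows from the preservation of the properties defining clean twisters when restricting to an induced substructure,'' and your write-up is precisely a careful unpacking of that sentence. Nothing more is needed.
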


\begin{lem}
\label{lem:cov_twist}
    Let $\mathbf M$ be a twist with twister
    $\mathfrak T=(A,\mu_1,\dots,\mu_h,B)$. 

    Let $1\leq s<s+1=t\le h$.    
    If $(\mu_s,\mu_t)$ is matching in $\mathbf M^\prec$, then $(\mu_s(i,j),\mu_t(i,j))$ or $(\mu_t(i,j),\mu_{s}(i,j))$  is a cover in $\mathbf M^\prec$ for all $(i,j)\in I\times J$.
\end{lem}
\begin{proof}
    Let $I$ and $J$ be the index sets of the twister $\mathfrak T$.
    First note that $\mu_s$ and $\mu_t$ are both antichains, as otherwise the pair $(\mu_s,\mu_t)$ would not be matching in $\mathbf M^\prec$.
    Assume for contradiction that there exists $i,j\in I\times J$ and a vertex $z$ such that $\mu_s(i,j)\prec z\prec \mu_t(i,j)$.
    First note that $z\neq r$ as it is not minimum.
    Then $z$ does not belong to $\mu_s$ or $\mu_t$ as they are antichains. 
    Assume for contradiction that $z\in V(\mu_i)$. 
    Consider the pair $(\mu_s,\mu_i)$. According to \zcref{lem:incomparable}, 
    we have $s=1$ (hence $t=2$). It follows that $i>2$, hence $(\mu_s,\mu_i)$ is homogeneous (by \zcref{it:tw5}), implying that all the elements of $\mu_s$ are smaller than all the elements of $\mu_i$ (including $z$), hence all the elements of $\mu_s$ are smaller than $\mu_t(i,j)$, contradicting the assumption that $(\mu_s,\mu_t)$ is matching in $\mathbf M^\prec$.
    
    The vertex $z$ cannot be in $A$ as, according to \zcref{lem:preclean_prop}, $A\cup V(\mu_t)$ is an antichain.
    Similarly, $z$ cannot be in $B$ as, according to \zcref{lem:preclean_prop}, $B\cup V(\mu_s)$ is an antichain.
    
    Altogether, we proved by contradiction (using the symmetry between $\mu_s$ and $\mu_t$) that 
    $(\mu_s(i,j),\mu_t(i,j))$ or $(\mu_t(i,j),\mu_{s}(i,j))$  is a cover in $\mathbf M^\prec$ for all $(i,j)\in I\times J$.
\end{proof}

We define the following types of matching pairs in twists, where we define $x \mathbin{\Ddot{\parallel}} y:=\neg\bigl(x\prec:y)\vee (x:\succ y)\bigr)$ (see \zcref{fig:matchings}).
\begin{align*}
 M^{\rightarrow}&: &\mu_s(i,j)&\prec:\mu_{s+1}(i,j)& \text{ and }\qquad &\neg E(\mu_s(i,j),\mu_{s+1}(i,j));\\
M^{\leftarrow}&: &\mu_{s}(i,j)&:\succ\mu_{s+1}(i,j)&\text{ and }\qquad&\neg E(\mu_s(i,j),\mu_{s+1}(i,j));\\
M^{\underline{\rightarrow}}&: &\mu_s(i,j)&\prec:\mu_{s+1}(i,j)& \text{ and }\qquad &\phantom{\neg}E(\mu_s(i,j),\mu_{s+1}(i,j));\\
M^{\underline{\leftarrow}}&: &\mu_{s}(i,j)&:\succ\mu_{s+1}(i,j)& \text{ and }\qquad &\phantom{\neg}E(\mu_s(i,j),\mu_{s+1}(i,j));\\
M^{\underline{\phantom{\rightarrow}}}&: 
&\mu_s(i,j)&\mathbin{\Ddot{\parallel}}\mu_{s+1}(i,j)& 
\text{ and }\qquad &\phantom{\neg}E(\mu_s(i,j),\mu_{s+1}(i,j));\\
M^{\underline{\phantom{\rightarrow}}}_0&: 
&\max V(\mu_1)&\prec:\mu_{2}(i,j)& 
\text{ and }\qquad &\phantom{\neg}E(\mu_1(i,j),\mu_{2}(i,j));\\
M^{\underline{\phantom{\rightarrow}}}_1&: 
&\max V(\mu_1)&\prec:\min V(\mu_{2})& 
\text{ and }\qquad &\phantom{\neg}E(\mu_1(i,j),\mu_{2}(i,j));\\
M^{\underline{\phantom{\rightarrow}}}_2&: 
&\mu_{h-1}(i,j)&:\succ \max V(\mu_h)&
\text{ and }\qquad &\phantom{\neg}E(\mu_{h-1}(i,j),\mu_{h}(i,j)).
\end{align*}

\begin{figure}[ht]
    \centering
    \includegraphics[width=0.75\linewidth]{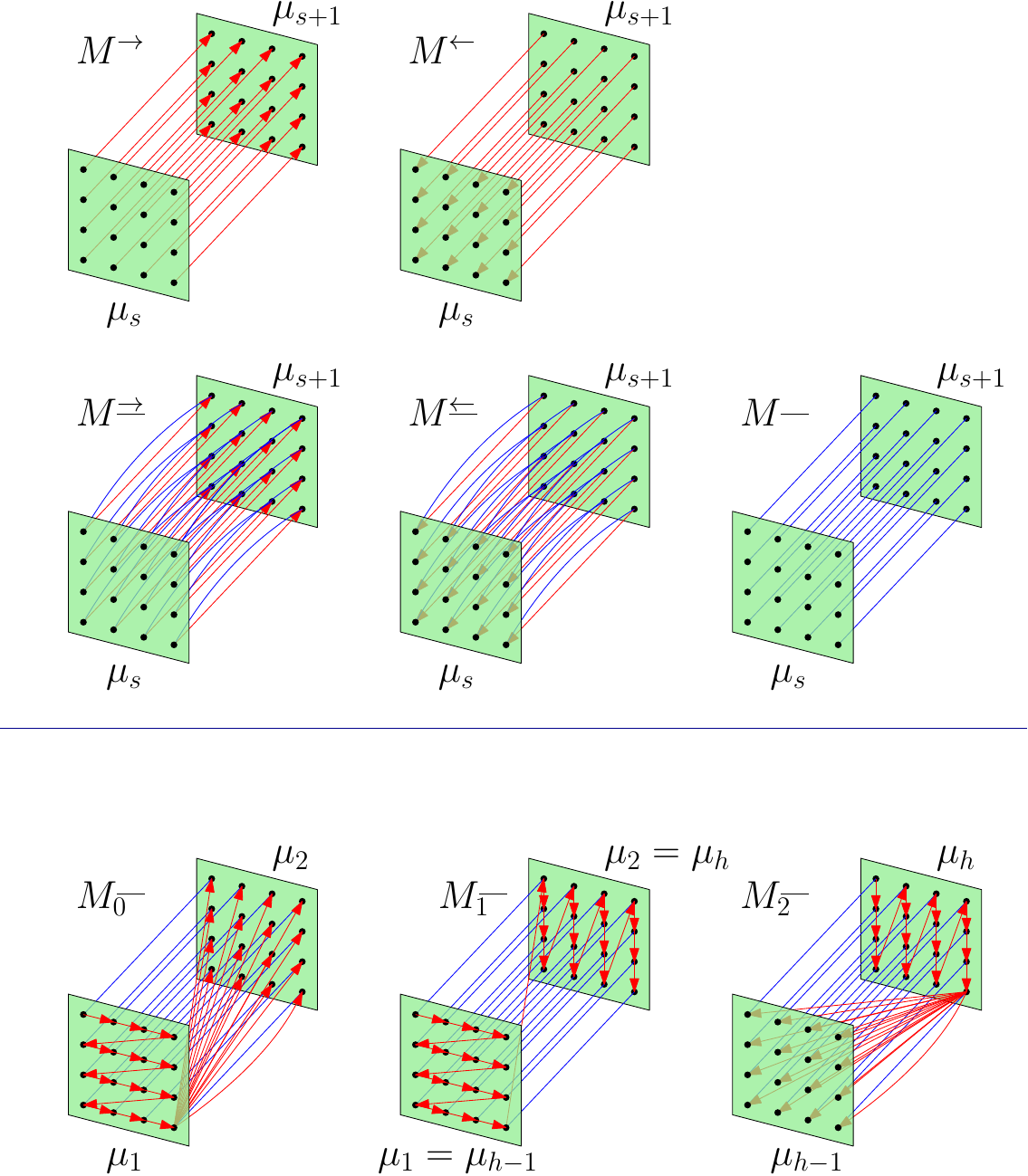}
    \caption{Matching pairs in twists. Red arrows correspond to covers in $\prec$, while blue edges correspond to $E$-relations.
    Three exceptional matching types (last line) may exist in type 3 twisters. For them, we also represented the cover relations inside the meshes.
    }
    \label{fig:matchings}
\end{figure}

The relevant notion for the relations between $\mu_1$ and its vertical guard (in a type 1 twister) or between $\mu_1$ and $\mu_2$ when the pair $(\mu_1,\mu_2)$ is not matching (in a type 3 twister) is the notion of a star.
A \ndef{star} is a type of relations between a set indexed by $I$ or $J$ and a mesh (indexed by $I\times J$) in a twister, which is of one of the following types, where $X=\{x(i)\colon i\in I\}$ stands for a set indexed by $I$ serving as vertical guard for $\mu_s$ and $Y=\{y(j)\colon j\in J\}$ is a set indexed by $J$ serving as a horizontal guard for $\mu_t$ (see \zcref{fig:stars}). As we shall see below, stars will only appear in clean twisters of type 1 
(with $x(i)=a(i)$ and $s=1$ or $y(j)=b(j)$ and $s=h$)
or  in clean twisters of type $3$
(with $x(i)=\mu_1(i,\max J)$ and $s=2$ or $y(j)=\mu_h(\max I,j)$ and $t=h-1$).

\begin{align*}
S^\rightarrow_v&: &x(i)&\prec: \mu_s(i,j)& \text{ and }\qquad &\neg E(x(i),\mu_s(i,j));\\
S^{\underline{\rightarrow}}_v&: &x(i)&\prec: \mu_s(i,j)& \text{ and }\qquad &\phantom{\neg}E(x(i),\mu_s(i,j));\\
S^{\underline{\phantom{\rightarrow}}}_v&: &x(i)&\parallel\mu_s(i,j)& \text{ and }\qquad &\phantom{\neg}E(x(i),\mu_s(i,j));\\
S^\leftarrow_h&: &\mu_t(i,j)&:\succ y(j)& \text{ and }\qquad &\neg E(\mu_t(i,j),y(j));\\
S^{\underline{\leftarrow}}_h&: &\mu_t(i,j)&:\succ y(j)& \text{ and }\qquad & \phantom{\neg}E(\mu_t(i,j),y(j));\\
S^{\underline{\phantom{\leftarrow}}}_h&: &\mu_t(i,j)&\parallel y(j)& \text{ and }\qquad &\phantom{\neg}E(\mu_t(i,j),y(j)).
\end{align*}

\begin{figure}[ht]
    \centering
    \includegraphics[width=0.75\linewidth]{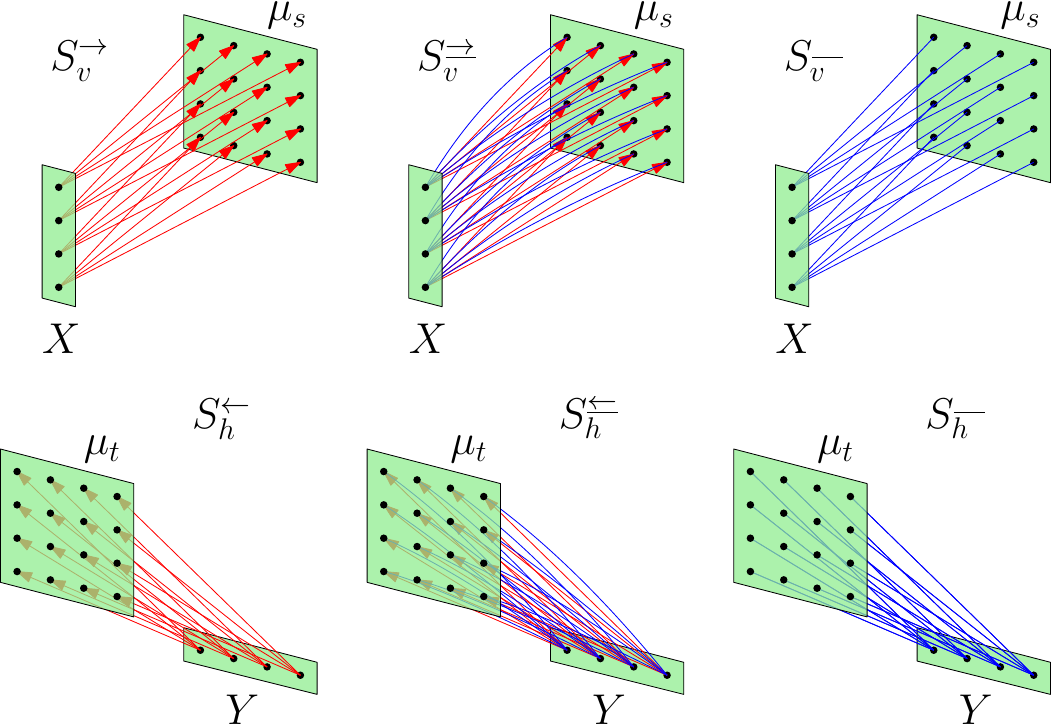}
    \caption{Different types of stars in a twist of type $1$ or $3$.}
    \label{fig:stars}
\end{figure}

\begin{lem}
\label{lem:twist1}
    Let $\mathbf M$ be a twist of type $1$
    with twister $\mathfrak T=(A,\mu_1,\dots,\mu_h,B)$ and order $n\geq 2\bomega(\mathbf M)+2$. 
    
    Then  $(A,\mu_1)$ is a star of one of the types $S^{{\rightarrow}}_v, S^{\underline{\rightarrow}}_v, S^{\underline{\phantom{\rightarrow}}}_v$ 
    and $(B,\mu_h)$ is a star of one of the types
    $S^{{\leftarrow}}_h, S^{\underline{\leftarrow}}_h, S^{\underline{\phantom{\leftarrow}}}_h$ 
    (See \zcref{fig:stars}).
\end{lem}
\begin{proof}
    As $A$ is a vertical guard of $\mathfrak T$, 
    $\otp(a(i),\mu_1(i',j'))$ depends only on 
    $\otp(i,i')$. 
    As $A$ is an antichain,
    no two distinct vertices of $A$ can be smaller than a same vertex of $\mu_1$.
    According to \zcref{lem:preclean_prop}, $\mu_1$ is also an antichain and thus no vertices of $\mu_1$ can be smaller than some vertex of $A$.
    Moreover, as the order of $\mathbf M$ is greater than $2\bomega(\mathbf M)+2$ we also cannot have $E(a(i),\mu_1(i',j'))$ for some $i\neq i'$. 

    Now assume that $A$ is a vertical guard of 
    $\mu_1$ in $\mathbf M^\prec$. In this case, we have $a(i)\prec \mu_1(i',j')$ if and only if $i=i'$. Assume for contradiction that there exists $z$ such that
    $a(i)\prec z\prec \mu_1(i,j)$.
    As $A$ and $\mu_1$ are antichains, $z\notin A\cup V(\mu_1)$. Also, $z\neq r$ as $r$ is minimum.
  Since $\mathbf M$ is a twist, we have  $z\in B$ or $z\in V(\mu_s)$ for $s>1$, hence a contradiction to \zcref{lem:preclean_prop}.
    Thus, the pair $(A,\mu_1)$ is a star of one of the types $S^{{\rightarrow}}_v, S^{\underline{\rightarrow}}_v, S^{\underline{\phantom{\rightarrow}}}_v$.
    The proof for the pair $(B,\mu_h)$ is alike.
\end{proof}

 Note that the union of the cover graph and the edge-relation of a clean twister of type $1$ with order $n$ and length $h$ is the $h$-subdivision of $K_{n,n}$, see
 \zcref{sfig:typ1} for an example.

 \begin{figure}[h!t]
	\begin{center}
	\includegraphics[width=.85\columnwidth]{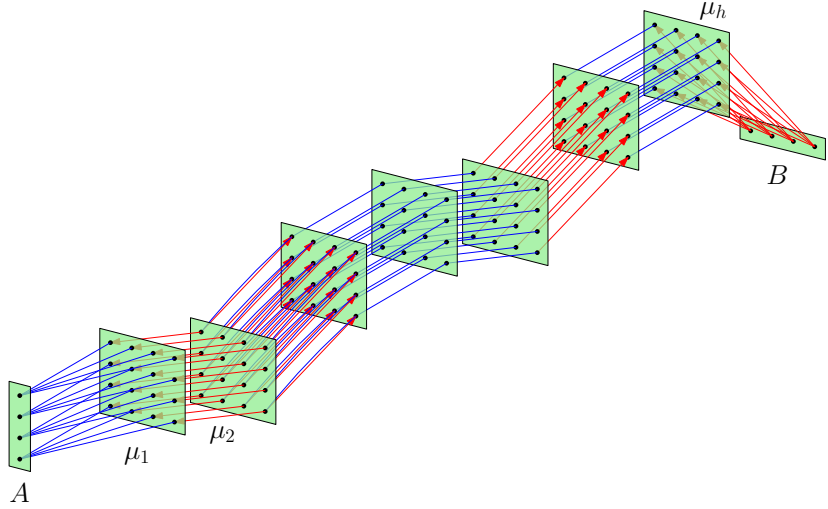}
\end{center}
\caption{The twister of a twist of type 1. 
In this example,  $(A,\mu_1)$ is a star of type $S^{\underline{\phantom{\rightarrow}}}_v$ and 
$(\mu_h,B)$ is a star of type $S^{{\leftarrow}}_h$.
\label{sfig:typ1}}
\end{figure}

\begin{lem}
\label{lem:twist2}
    Let $\mathbf M$ be a twist of type $2$, 
    with twister $\mathfrak T=(\emptyset,\mu_1,\dots,\mu_h,\emptyset)$ and  order $n\geq 2\bomega(\mathbf M)+2$. 
    
    Then  $(\mu_1,\mu_2)$ and
 $(\mu_{h-1},\mu_h)$ are matchings of type $M^{\underline{\phantom{\rightarrow}}}$.  
\end{lem}
\begin{proof}
By \zcref{lem:preclean_prop}, the vertices of $\mu_1$ and $\mu_2$ are incomparable. Hence, according to \zcref{it:tw4}, $(\mu_1,\mu_2)$ is matching in $\mathbf M^E$. Thus $(\mu_1,\mu_2)$ is a matching of of type $M^{\underline{\phantom{\rightarrow}}}$. The case of $(\mu_{h-1},\mu_h)$ is proved in the same way.
\end{proof}

\begin{lem}\label{lem:twist2_mu1}
Let $\mathbf M$ be a twist of type 2, with
twister $\mathfrak T = (\emptyset, \mu_1,
\ldots, \mu_h, \emptyset)$ and order
$n\ge 2\bomega(\mathbf M) + 2$.

Then for every $i\in I$ and $j<_J: j' \in J$,
$\mu_1(i,j) \prec\mu_1(i,j')$ is a cover
in $\mathbf M$.
Similarly, for every $j\in J$ and $i<:i'\in I$, $\mu_h(i,j)\prec\mu_h(i',j)$ is a cover
in $\mathbf M$.
\end{lem}

\begin{proof}
    Let $i\in I$ and $j<_J: j' \in J$.
    By \zcref{lem:twist2}, $(\mu_1, \mu_2)$
    is a matching of type $M^{\underline{\phantom{\rightarrow}}}$,
    so no vertex of $V(\mu_2)$ can be
    ordered between $\mu_1(i,j)$ and
    $\mu_1(i,j')$ in~$\prec$.
    Moreover, $\mu_1$ is homogeneous with
    $\mu_s$ for $s>2$ by \zcref{it:tw5}, so
    no vertex of such a mesh can be compared
    differently with $\mu_1(i,j)$ and
    $\mu_1(i,j')$. Hence, $\mu_1(i,j)\prec \mu_1(i,j')$ is a cover.

    The proof for $\mu_h$ goes \emph{mutatis mutandis}.
\end{proof}

\begin{figure}[h!t]
	\begin{center}
\includegraphics[width=.85\columnwidth]{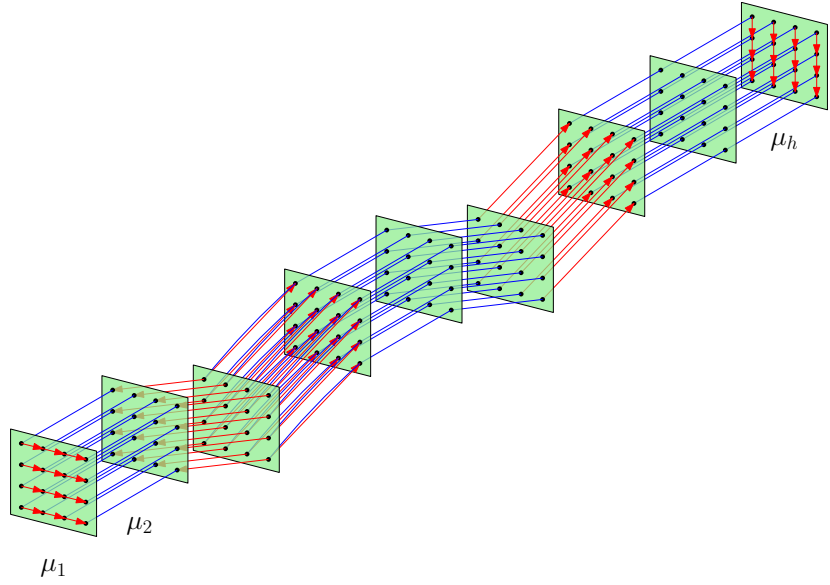}
\end{center}
\caption{The twister of a twist of type 2. The pairs $(\mu_1,\mu_2)$ and $(\mu_{h-1},\mu_h)$ are both matching of type $M^{\underline{\phantom{\rightarrow}}}$.}
\label{sfig:type2}
\end{figure}

\begin{lem}
\label{lem:twist3}
    Let $\mathbf M$ be a twist of type $3$ 
    with twister  $\mathfrak T=(\emptyset,\mu_1,\dots,\mu_h,\emptyset)$ and order $n\geq 2\bomega(\mathbf M)+2$. 

    Then
    \begin{itemize}
        \item either $(\mu_1,\mu_2)$ is a matching of type $M^{\underline{\phantom{\rightarrow}}}, M^{\underline{\phantom{\rightarrow}}}_0$, or
$M^{\underline{\phantom{\rightarrow}}}_1$, 
        \item or $(\mu_1(\cdot,\max J),\mu_2)$ is a star of type $S^{{\rightarrow}}_v$.
    \end{itemize}

    Let $i_{\text{max}}=\max I$ if the order type of the restriction of $\prec$ to $V(\mu_h)$ is $<_{I,J}$ or $<_{I,\bar J}$ and $i_{\text{max}}=\min I$, otherwise. Then
    
    \begin{itemize}
\item   either $(\mu_{h-1},\mu_h)$ is a matching of type $M^{\underline{\phantom{\rightarrow}}}$, $M^{\underline{\phantom{\rightarrow}}}_1$, or
$M^{\underline{\phantom{\rightarrow}}}_2$,
\item or $(\mu_{h-1},\mu_h(i_{\text{max}},\cdot))$ is a star of type $S^{{\leftarrow}}_h$.
\end{itemize}
\end{lem}
\begin{proof}
Assume $(\mu_1,\mu_2)$ is matching. Then,  either $V(\mu_1)$ and $V(\mu_2)$ are incomparable and $(\mu_1,\mu_2)$ has type 
$M^{\underline{\phantom{\rightarrow}}}$, 
or $V(\mu_1)$ is smaller than $V(\mu_2)$. In the latter case, 
if $h>2$, by regularity, either no pair $(\max V(\mu_1),\mu_2(i,j))$ is a cover (type 
$M^{\underline{\phantom{\rightarrow}}}$) or all of them are covers (type $M^{\underline{\phantom{\rightarrow}}}_0$). If $h=2$, then  $(\max V(\mu_1),\min V(\mu_2))$ is a cover and $(\mu_1,\mu_2)$ has type 
$M^{\underline{\phantom{\rightarrow}}}_1$.

Otherwise, $(\mu_1,\mu_2)$ is simply vertical (and thus $h>2$), meaning that $V(\mu_1)\cup V(\mu_2)$ is independent and 
\begin{equation}
\label{eq:1}
\mu_1(i,j)\prec \mu_2(i',j')\text{ if and only if } i\leq i'.
\end{equation}

In particular, $\mu_1(i,\max J)\prec\mu_2(i,j)$ for every $i\in I$ and $j\in J$.
Assume it is not a cover, that is, that there exists a vertex $z$ such that $\mu_1(i,\max J)\prec z\prec \mu_2(i,j)$.
Clearly, $z$ does not belong to $\mu_1$ 
(as $\mu_1(i,\max J)$ is, according to \zcref{eq:1}, the maximum element of $\mu_1$ that is smaller than $\mu_2(i,j)$) and $z$ does not belong to $\mu_2$ as it is an antichain. 

Assume for contradiction that $z=\mu_s(\alpha,\beta)$ for some $s>2$, $\alpha\in I$ and $\beta\in J$.
As $(\mu_1,\mu_s)$ is homogeneous, we deduce $\mu_1(i',j')\prec z$ (hence $\mu_1(i',j')\prec \mu_2(i,j)$) for all $i'\in I$ and $j'\in J$. Then  it follows from \zcref{eq:1} that $i=\max I$. 
So, we have $\mu_1(\max I,\max J)\prec \mu_s(\alpha,\beta)\prec\mu_2(\max I,j)$.
By regularity of the pair $(\mu_s,\mu_2)$, there exists some $\alpha'\leq \alpha$ and $i''<\max I$ such that $\mu_s(\alpha',\beta)\prec\mu_2(i'',j)$. Hence,
$\mu_1(\max I,\max J)\prec \mu_2(i'',j)$ though $\max I>i''$, contradicting \zcref{eq:1}. 

Thus, $\mu_1(i,\max J)\prec:\mu_2(i,j)$ for every $i\in I$ and $j\in J$ and thus $(\mu_1(\cdot,\max J),\mu_2)$ is a star of type $S^{{\rightarrow}}_v$.

The case of $(\mu_{h-1},\mu_h)$ is proved in a similar way (where $i_\text{max}$ plays the role of $\max J$). 
\end{proof}

\begin{figure}[h!t]
\begin{center}
\includegraphics[width=.85\columnwidth]{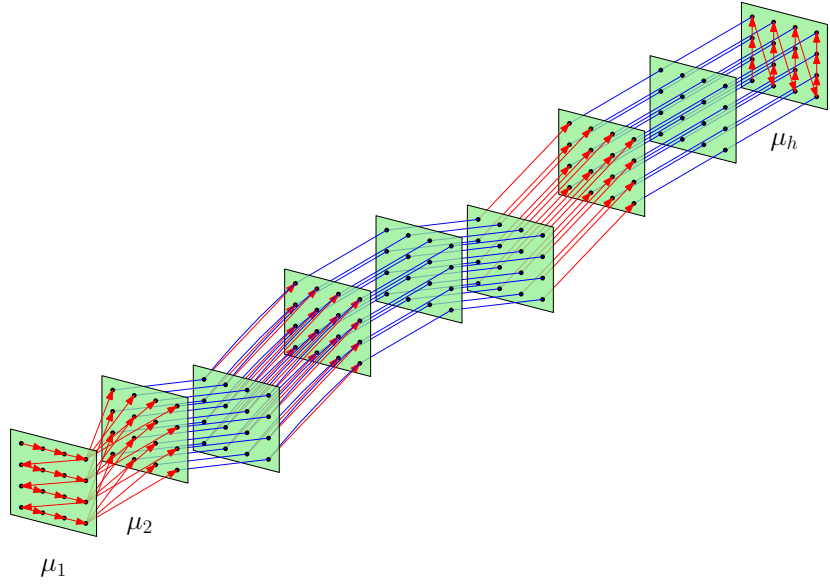}
\end{center}
\caption{The twister of a twist of type 3. In this example, the pair 
$(\mu_1(\cdot,\max J),\mu_2)$ is a star of type $S^{{\rightarrow}}_v$ and the pair $(\mu_{h-1},\mu_h)$ is a matching of type $M^{\underline{\phantom{\rightarrow}}}$.}  
\label{sfig:type3}
\end{figure}

According to the above lemmas, we can give a full descriptions of cores. 

A \ndef{core} $\tau$ with length $h$ is a tree-ordered graph with root $r$, with three possible types.

\begin{ndefi}[Core of type 1]
A \ndef[type 1]{core} $\tau$ of length $h$ with type $1$
has its domain partitioned into sets $\{r\}$, 
\begin{align*}
    A&=\{a(i)\colon i\in [2]\},\\
    V(\mu_s)&=\{\mu_s(i,j)\colon i,j\in [2]\} \qquad (1\le s\le h),\\
    \text{and }B&=\{b(j)\colon j\in [2]\}.
\end{align*}

The sets $A, V(\mu_1),\dots, V(\mu_h), B$ are the \ndef[of a core]{layers} of $\tau$ and are independent, and each is an antichain. Each layer is either included in the $E$-neighborhood of $r$ or does not intersect it. 

No two non-consecutive layers contain comparable vertices.
With this constraint, and the fact the order is a tree-order, 
the core is built by choosing to which kind  each pair of successive layers belongs to (see \zcref{fig:core1}).

The restriction of $\tau$ to the pair $(A,\mu_1)$ is a star of one of the types $S^{{\rightarrow}}_v, S^{\underline{\rightarrow}}_v, S^{\underline{\phantom{\rightarrow}}}_v$.

For $1\le s<h$, the restriction of $\tau$ to the pair $(\mu_s,\mu_{s+1})$
has type $M^{\rightarrow}, M^{\leftarrow}, M^{\underline{\rightarrow}}, M^{\underline{\leftarrow}}$, or $M^{\underline{\phantom{\rightarrow}}}$. 

The restriction of $\tau$ to the pair $(B,\mu_h)$ is a star of one of the types    $S^{{\leftarrow}}_h, S^{\underline{\leftarrow}}_h, S^{\underline{\phantom{\leftarrow}}}_h$. 
\end{ndefi}

\begin{figure}[ht]
    \centering
    \includegraphics[height=.27\textheight]{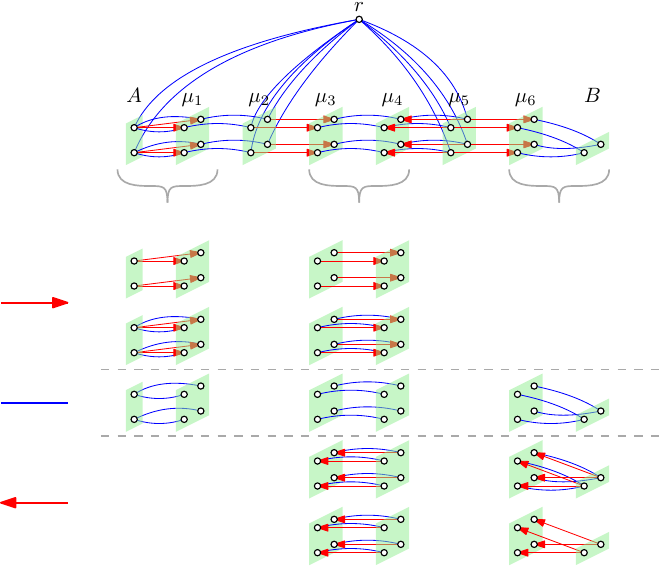}
    \caption{A type 1 core of length $6$. (For sake of readability, we did not draw the covers involving the root.)}
    \label{fig:core1}
\end{figure}

\begin{ndefi}[Core of type 2]
A \ndef[type 2]{core} $\tau$ of length $h$ with type $2$
has its domain partitioned into sets $\{r\}$ and 
    $V(\mu_s)=\{\mu_s(i,j)\colon i,j\in [2]\}$ (for $1\le s\le h$).

The sets $V(\mu_1),\dots$, and  $V(\mu_h)$ are the \emph{layers} of $\tau$ and are independent. The restriction of $\prec$ to $V(\mu_1)$ is $<_I$, $V(\mu_s)$ is an antichain for $1<s<h$, and the restriction of $\prec$ to $V(\mu_h)$ is $<_J$.
Each layer is either included in the $E$-neighborhood of $r$ or does not intersect it.

No two non-consecutive layers contain comparable vertices.
With this constraint, and the fact the order is a tree-order, 
the core is built by choosing to which kind  each pair of successive layers belongs to (see \zcref{fig:core2}).

The pairs $(\mu_1,\mu_2)$ and $(\mu_{h-1},\mu_h)$ are matchings of type $M^{\underline{\phantom{\rightarrow}}}$.

For $1<s<h-1$, the restriction of $\tau$ to the pair $(\mu_s,\mu_{s+1})$
has type $M^{\rightarrow}, M^{\leftarrow}, M^{\underline{\rightarrow}}, M^{\underline{\leftarrow}}$, or $M^{\underline{\phantom{\rightarrow}}}$. 
\end{ndefi}

\begin{figure}[ht]
    \centering
    \includegraphics[height=.27\textheight]{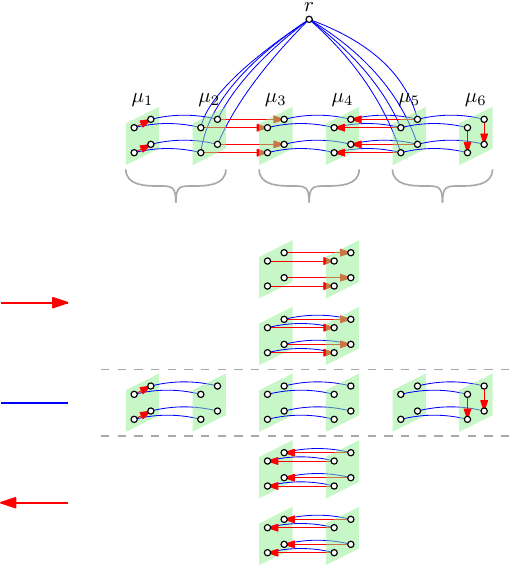}
    \caption{A type 2 core of length $6$. (For sake of readability, we did not draw the covers involving the root.)}
    \label{fig:core2}
\end{figure}

\begin{ndefi}[Core of type 3]
A \ndef[type 3]{core} $\tau$ of length $h$ with type $3$
has its domain partitioned into sets $\{r\}$ and 
    $V(\mu_s)=\{\mu_s(i,j)\colon i,j\in [2]\}$ (for $1\le s\le h$).

The sets $V(\mu_1),\dots$, and  $V(\mu_h)$ are the \emph{layers} of $\tau$ and are independent. The restriction of $\prec$ to $V(\mu_1)$ is $<_{IJ}$, $V(\mu_s)$ is an antichain for $1<s<h$, and the restriction of $\prec$ to $V(\mu_h)$ is a linear order of anti-lexicographic type.
Each layer is either included in the $E$-neighborhood of $r$ or does not intersect it. 

No two non-consecutive layers within $V(\mu_2),\dots,V(\mu_{h-1})$ contain comparable vertices.
Each layer $V(\mu_s)$ with $s>1$ may be greater than $V(\mu_1)$, and each layer~$V(\mu_s)$ with $1<s<h$ may be greater than $V(\mu_h)$.
With these constraints, and the fact the order is a tree-order, 
the core is built by choosing to which kind  each pair of successive layers belongs to (see \zcref{fig:core3}).

Either $(\mu_1,\mu_2)$ is a matching of type $M^{\underline{\phantom{\rightarrow}}}$ (with possibly $V(\mu_1)$ smaller than~$V(\mu_2)$), or 
there is no $E$-relation between $V(\mu_1)$ and $V(\mu_2)$ and we have $\mu_1(i,j)\prec \mu_2(i',j')$ if $i'\geq i$.

For $1<s<h-1$, the restriction of $\tau$ to the pair $(\mu_s,\mu_{s+1})$
has type $M^{\rightarrow}, M^{\leftarrow}, M^{\underline{\rightarrow}}, M^{\underline{\leftarrow}}$, or $M^{\underline{\phantom{\rightarrow}}}$. 

Either $(\mu_{h-1},\mu_h)$ is a matching of type $M^{\underline{\phantom{\rightarrow}}}$ (with possibly~$V(\mu_h)$ smaller than~$V(\mu_{h-1})$), or 
there is no $E$-relation between $V(\mu_{h-1})$ and $V(\mu_h)$ and we have $\mu_h(i,j)\prec \mu_{h-1}(i',j')$ if $j'\geq j$ (or if $j'\leq j$, depending on the type of anti-lexicographic order on $V(\mu_h)$).
\end{ndefi}

\begin{figure}[ht]
    \centering
    \includegraphics[height=.27\textheight]{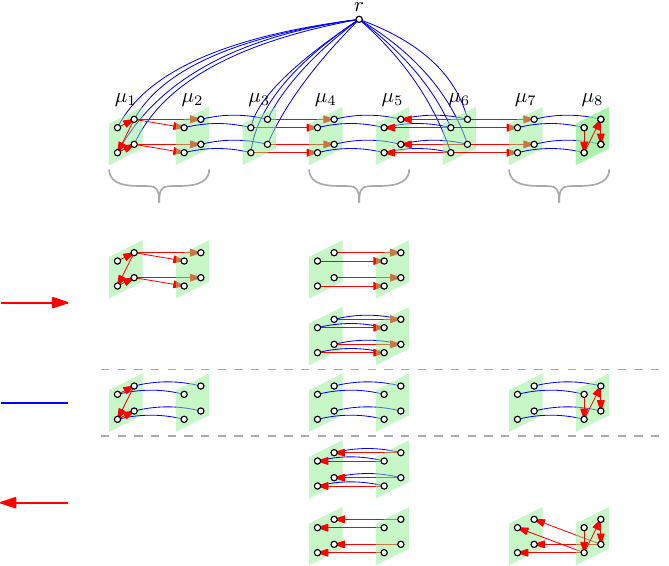}
    \caption{A type 3 core of length $8$. (For sake of readability, we did not draw some covers involving the root, $\max \mu_1$, and $\max \mu_8$.)}
    \label{fig:core3}
\end{figure}

\begin{ndefi}\label{def:core}
The \ndef[of a twist]{core} of a twist $\mathbf M$ is the substructure of $\mathbf M$ induced by a subtwister of $\mathbf T$ of order $2$ and the root $r$. (Note that the core of a twist is a twist.) 
\end{ndefi}

\begin{lem}\label{lem:twist}
    For every core $\tau$ and every integer $n\geq 2$, there exists a unique twist $\tau[n]$ with core $\tau$. Moreover, given $\mathbf N$, the twist $\tau[n]$ can be constructed in time $\mathcal{O}(n^4)$.
\end{lem}
\begin{proof}
Let $\mathbf M$ be a twist with twister $(A,\mu_1,\dots,\mu_h,B)$.
As in the proof of \zcref{cl:restr}, it follows from the regularity of  the pairs $(A,\mu_s)$, $(\mu_s,\mu_t)$, and   $(\mu_s,B)$ and the homogeneity of $r$ with $A,B$ and each of the $\mu_s$ that $\mathbf M$ is completely determined by its order and its core.

The structure $\tau[n]$ has $\mathcal O(n^2)$ vertices, and is easily constructed in time $\mathcal O(n^4)$ using regularity.
\end{proof}

\label{p:lics_twists}
\begin{thm}[store*=thm:lics_twists,restate-keys={note=see \zcpageref[nocap]{p:lics_twists}}]
	A hereditary class $\mathscr C$ of TOWS graphs is  independent if and only if there exists a core $\tau$ such that 
    $\{\tau[n]\colon n\in\mathbb N\}\subseteq\mathscr C$.
\end{thm}
\begin{proof}
    Assume that $\mathscr C$ is independent. Then, according to \zcref{thm:dependentTOWS}, there exists some integer $h$ such that for every integer $n$ some TOWS graph $G_n\in\mathscr C$ contains a clean twister of length $h$ and order~$n$.
    As $\mathscr C$ is hereditary, we deduce (by \zcref{fact:restr}) that for every integer $n$, $\mathscr C$ contains a twist of length $h$ and order~$n$.
    By \zcref{lem:cov_twist,lem:twist1,lem:twist2,lem:twist2_mu1,lem:twist3}, if $n\geq 2\bomega(\mathscr C)+2$, then 
    every subtwist of order $2$ of a twist of order $n$ in $\mathscr C$ is a core. As there are only finitely many cores with length $h$, according to \zcref{lem:twist},  there exists a core $\tau$ with length $h$ such that $\{\tau[n]\colon n\in\mathbb N\}\subseteq \mathscr C$.

    The reverse implication follows directly from the fact that $\{\tau[n]\colon n\in\mathbb N\}$ contains arbitrarily large clean twisters of length $h$, hence is independent (by \zcref{thm:dependentTOWS}).
\end{proof}

\section{Characterization by (induced) tree-ordered minors}
\label{sec:Iminor}
In this section, we prove the following characterization theorem for tree-ordered weakly sparse graph classes:
\getkeytheorem[body]{thm:mainG}
\subsection{Tree-order starification}
\label{sec:star}
In this section we introduce starifications of tree-ordered graphs and prove that nowhere denseness of the class of all the starifications of a  class $\mathscr C$ of tree-ordered weakly sparse graphs is equivalent to monadic dependence of $\mathscr C$.

We first consider tree-orders.
A \ndef{starification} $Q$ of a tree-order $(V,\prec)$ is a star forest obtained by partitioning the cover graph of the tree order into subtrees, then replacing each part by a star whose center is the root of the subtree.
The graph $Q$ can be obtained by means of a transduction, by coloring by a color~$A$ the minimum of each part (see \zcref{fig:starification}). Then the adjacency formula is
\begin{align*}
\rho(x,y)&:=\rho_0(x,y)\vee\rho_0(y,x),
\intertext{where}
\rho_0(x,y)&:=A(x)\wedge  (x\prec y)\wedge \forall z\ \bigl(
x\prec z\preceq y\ \Rightarrow\ \neg A(z)\bigr).
\end{align*}
\begin{figure}[ht]
    \centering
    \includegraphics[width=\columnwidth]{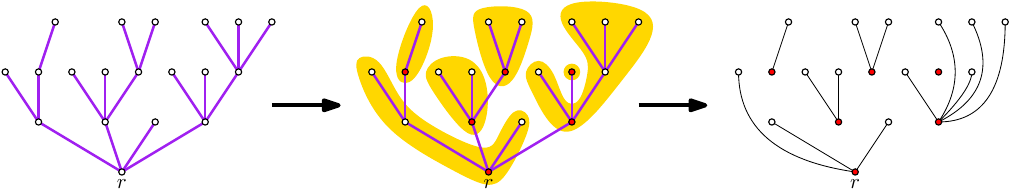}
    \caption{Starification of a tree-order. The minimum vertex of each part of a partition of the cover graph into subtrees is marked. Then  each part is turned into a star whose center is the minimum of the part.}
    \label{fig:starification}
\end{figure}
A \ndef{starification} of a tree-ordered graph $\mathbf M$  is a graph with same domain, whose edge set is the union of the edge set of $\mathbf M$ and the edge set of a starification of $\mathbf M^\prec$ \zcref{fig:starification2}).
It is obtained as a transduction $\mathsf{St}$ of $\mathbf M$, where the adjacency formula $\rho$ is slightly modified, compared to the above:
$\rho(x,y):=\rho_0(x,y)\vee\rho_0(y,x)\vee E(x,y)$.

\begin{figure}[ht]
    \centering
    \includegraphics[width=\columnwidth]{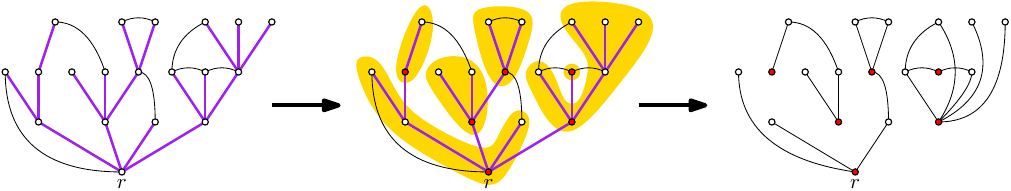}
    \caption{Starification of a tree-ordered graph.}
    \label{fig:starification2}
\end{figure}
\pagebreak

\begin{lem}
\label{lem:StND}
    Let $\mathscr C$ be a  monadically  dependent    class of tree-ordered weakly sparse graphs. 
    Then $\mathsf{St}(\mathscr C)$ is nowhere dense.
\end{lem}
\begin{proof}
    That $\mathsf{St}(\mathscr C)$ is monadically dependent follows from the fact that $\mathsf{St}(\mathscr C)$ is a transduction of $\mathscr C$. 
    According to \zcref{cor:ws}, that $\mathsf{St}(\mathscr C)$ is weakly sparse follows from the fact that the edge set of the graphs in $\mathsf{St}(\mathscr C)$ is obtained as the union of the edge sets of two weakly sparse graphs, namely the $E$-reducts of the tree-ordered graphs in $\mathscr C$ and a star-forest obtained from the tree-order. 
    Being both weakly sparse and monadically dependent,  $\mathsf{St}(\mathscr C)$ is nowhere dense \cite{msrw}.
\end{proof}
\subsection{{\Tim}s and tree-ordered minors  of TOWS graphs}

We denote by $\Mon(\mathscr M)$ the set of all the structures obtained by $\sigma$-deletions on structures in $\mathscr M$. 
The next fact is direct from the definitions.
\begin{fact}
    \label{fact:cont_min}
    Let $\mathscr M$ be a class of tree-ordered $\sigma$-structures. 
    Then 
\[
\Minor(\mathscr M)=\Mon(\Cont(\mathscr M))=\Cont(\Mon(\mathscr M)).
\]
\end{fact}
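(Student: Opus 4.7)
The plan is to prove the first equality directly from definitions and to reduce the second equality to a commutativity statement handled by tracking tuples through contraction maps. For the first equality $\Minor(\mathscr M)=\Mon(\Cont(\mathscr M))$: by the structure-level analogue of the tree-ordered minor definition, a tree-ordered minor of $\mathbf M$ is, by construction, an induced tree-ordered minor of $\mathbf M$ (an element of $\Cont(\mathbf M)$) followed by a sequence of elementary tuple deletions (the action of $\Mon$). This gives the first equality with essentially no further work.

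For the second equality $\Mon(\Cont(\mathscr M))=\Cont(\Mon(\mathscr M))$, I would represent any element of $\Cont(\mathbf M)$ by a pair $(U,c)$ consisting of a root-containing subset $U\subseteq M$ and a surjective contraction map $c\colon U\twoheadrightarrow N^+$ whose fibers are the connected subtrees of $U$ (in the cover graph) that get identified. For $\Mon(\Cont(\mathscr M))\subseteq \Cont(\Mon(\mathscr M))$: given $\mathbf N$ obtained by deleting a set $T^+$ of tuples from some $\mathbf N^+\in\Cont(\mathbf M)$ specified by $(U,c)$, I would set $T$ to be the full preimage of $T^+$ under $c$ among tuples of $\mathbf M$ with support in $U$, delete $T$ from $\mathbf M$, and then apply $(U,c)$ to the result. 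The verification that this yields $\mathbf N$ reduces to the single observation that a tuple $\bar t^+$ survives in the contracted image iff some preimage of $\bar t^+$ survives upstream. The reverse inclusion $\Cont(\Mon(\mathscr M))\subseteq \Mon(\Cont(\mathscr M))$ is symmetric and slightly easier: apply the same $(U,c)$ directly to $\mathbf M$ (ignoring the preceding tuple deletions) to get $\mathbf N^+\in\Cont(\mathbf M)$; by construction $\mathbf N^+$ shares the universe of $\mathbf N$ and contains every tuple of $\mathbf N$, so the tuples of $\mathbf N^+$ not present in $\mathbf N$ can be collected into a deletion set witnessing $\mathbf N\in\Mon(\mathbf N^+)$.

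The only point requiring care is that $\prec$-contractions are set-theoretic quotients: distinct tuples of $\mathbf M$ may coalesce into a single tuple of $\mathbf N^+$. Consequently, the forward inclusion must delete \emph{every} preimage of each target tuple, which is exactly what the definition of $T$ achieves; once this bookkeeping is in place the two inclusions reduce to checking that the resulting sets of tuples in each relation coincide, and there is nothing else to verify.
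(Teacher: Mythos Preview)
Your argument is correct. In the paper this fact is not given a proof at all: it is introduced with the sentence ``The next fact is direct from the definitions,'' and in the surrounding context (Definition~\ref{def:TminorG}) a tree-ordered minor is \emph{defined} as an induced tree-ordered minor followed by elementary deletions, so $\Minor(\mathscr M)=\Mon(\Cont(\mathscr M))$ is literally the definition and only the commutation $\Mon(\Cont(\mathscr M))=\Cont(\Mon(\mathscr M))$ needs any thought. Your quotient-map bookkeeping for that commutation is a clean way to spell out what the paper leaves implicit; the only subtlety you flag---that several tuples upstream may coalesce into one tuple downstream, so one must delete the \emph{entire} preimage---is exactly the point that would trip up a careless reader, and you handle it correctly.
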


\begin{ndefi}[$\mathsf{Shrink}$]
\label{def:shrink_G}
We define the interpretation $\mathsf{Shrink}$ \index{Shrink (of a tree-ordered graph)} of tree-ordered graphs in tree-ordered $\{E,\prec,D,V\}$-structures, where $D$ and $V$ are unary relations, by 

\begin{align*}    
    \nu(x)&:=V(x)\wedge\neg D(x)\vee (x=r)\\
    \rho_E(x,y)&:=\exists x',y'\ \biggl(E(x',y')\wedge (x'\preceq x)\wedge(y'\preceq y)\wedge (\neg D(x')\vee (x'=r)\\
    &\qquad
     \wedge(\neg D(y')\vee (y'=r))
    \wedge\bigl(\forall z\ ((x'\prec z\preceq x)\vee(y'\prec z\preceq y)) \rightarrow\neg \nu(z)\bigr)
\end{align*}
\end{ndefi}

We stress the following easy fact, which justifies to consider that $\Cont$ is a transduction.
\begin{fact}
\label{fact:cont_trans}
    The transduction $\mathsf T$ defined as $\mathsf T=\mathsf{Shrink}\circ \Lambda$ (where $\Lambda$ is the monadic expansion adding predicates $D$ and $V$) is
    such that, for every class $\mathscr C$ of tree-ordered graphs we have
    $\Cont(\mathscr C)=\mathsf T(\mathscr C)$.
\end{fact}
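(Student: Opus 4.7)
The claim is an equality of two classes of tree-ordered graphs, proved by a direct verification that the two unary predicates $V$ and $D$ introduced by $\Lambda$ encode precisely the combinatorial data specifying an {\tim}: $D$ marks the vertices deleted from the induced substructure, and $V$ marks the representatives of the contraction classes. I prove the two inclusions separately.

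For $\Cont(\mathscr C) \subseteq \mathsf T(\mathscr C)$, any $\mathbf N \in \Cont(\mathbf M)$ is determined by a subset $S \subseteq V(\mathbf M)$ with $r \in S$ (specifying the induced substructure) together with a sequence of $\prec$-contractions on $\mathbf M[S]$. These contractions partition $S$ into classes, each a convex subset of $\mathbf M[S]$ (equivalently, a connected subtree of the cover graph), and each class collapses to a single vertex of $\mathbf N$, represented by its unique $\prec$-minimum. I define $\mathbf M^+ \in \Lambda(\mathbf M)$ by letting $D$ mark $V(\mathbf M) \setminus S$ and $V$ mark the representatives; then $\mathsf{Shrink}(\mathbf M^+) \cong \mathbf N$ by inspection of the three formulas of the interpretation: $\nu$ selects the representatives together with $r$ (the vertices of $\mathbf N$), the tree-order is inherited by restriction, and $\rho_E(x, y)$ is true iff some $E$-edge of $\mathbf M$ connects a member of the class of $x$ to a member of the class of $y$, which is precisely the adjacency rule of $\mathbf N$. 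Conversely, for $\mathsf T(\mathscr C) \subseteq \Cont(\mathscr C)$, given any $\mathbf M^+ \in \Lambda(\mathbf M)$ I reverse the construction: set $S := (V(\mathbf M) \setminus D) \cup \{r\}$, take the induced substructure $\mathbf M[S]$, and partition $S$ into classes by assigning every vertex of $S \setminus (V \cup \{r\})$ to the class of its nearest $(V \cup \{r\})$-ancestor in $\mathbf M[S]$. Because each vertex of a tree-order has a unique chain of ancestors, this partition is well-defined, each class is a convex subtree with a unique representative in $V \cup \{r\}$, and the partition is therefore realized by a valid sequence of $\prec$-contractions; the resulting {\tim} $\mathbf N$ satisfies $\mathsf{Shrink}(\mathbf M^+) \cong \mathbf N$ by the same formula-to-operation correspondence.

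The one point requiring a moment's care in either direction is the quantifier clause of $\rho_E$: that no \emph{surviving} vertex (i.e.\ satisfying $\nu$) lies strictly between the representative and its class-member witness. This condition characterizes membership of the witness in the class of the representative precisely because representatives are exactly the cut-vertices between classes in the induced cover graph. The rest of the argument is purely definitional, so I do not foresee any substantive obstacle beyond the bookkeeping alignment between $V, D$ and the pair (deletion set, contraction classes).
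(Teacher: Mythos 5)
Your proposal is correct and follows the same strategy as the paper's own (very terse) proof: both directions of the set equality are verified by matching the predicates $D$, $V$ to the deletion set and to the $\prec$-minima of the contraction classes, and checking that the formulas of $\mathsf{Shrink}$ compute the corresponding contraction. You simply spell out in detail the bookkeeping that the paper leaves implicit — in particular that the "nearest $\nu$-ancestor" assignment partitions the surviving set into convex subtrees and that the guard clause of $\rho_E$ precisely delimits class membership — which is exactly the content of the paper's "we check that $\mathbf N \in \mathsf T(\mathbf M)$."
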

\begin{proof}
    Let $\mathbf M$ be a tree-ordered graph.
    For every expansion $\mathbf M^+$ of $\mathbf M$ by unary relations $V$ and $D$, $\mathsf{Shrink}(\mathbf M^+)\in\Cont(\mathbf M)$.
    Conversely, for every $\mathbf N\in \Cont(\mathbf M)$, by marking $D$ the  deleted vertices and $V$ the minimum vertex of each contracted part, we check that $\mathbf N\in\mathsf T(\mathbf M)$.
\end{proof}

\begin{lem}
\label{lem:St2Min}
    Let $\mathbf M$ be a tree-ordered graph.
    Then  
    \[\Cont(\mathbf M)^E\subseteq \Minor(\mathbf M)^E\subseteq (\mathsf{St}\circ\mathsf{Her}(\mathbf M))\shm 1,\]
    where $\mathsf{Her}$ is the hereditary closure transduction.
\end{lem}
\begin{proof}
Let $\mathbf M$ be a tree-ordered graph and let $G\in \Minor(\mathbf M)^E$. The graph $G$ is a subgraph of the $E$-reduct $\mathbf N^E$ of an {\tim}   $\mathbf N$ of $\mathbf M$. 
By definition, $\mathbf N$ is obtained from an induced substructure $\mathbf M'$ of $\mathbf M$ by a sequence of elementary $\prec$-contractions. Hence, $\mathbf N$
has  vertex set $\mathscr P$, where $\mathscr P$ is a partition of the domain of $\mathbf M'$ into subtrees. Let $S$ be the starification of 
$\mathbf M'$ induced by this partition. Then $\mathbf N^E$ is obtained from $S$ by contracting a star forest, and the subgraph $G$ of~$\mathbf N^E$ also belongs to $\mathsf{St}(\mathbf M')\shm 1$.
\end{proof}

\begin{lem}
\label{lem:Min}
    Let $\mathscr C$ be a   monadically dependent class of tree-ordered weakly sparse graphs. Then $\Minor(\mathscr C)^E$ (\/hence $\Cont(\mathscr C)^E$\/) is nowhere dense.
\end{lem}
\begin{proof}
    According to \zcref{lem:StND}, $\mathsf{St}\circ\mathsf{Her}(\mathscr C)$ is nowhere dense. As nowhere denseness is preserved by shallow minors, the property follows from \zcref{lem:St2Min}. 
\end{proof}

\subsection{Racks and groundings}

In this Section, we prove that the monadic independence of a hereditary class of TOWS graphs is witnessed by the existence of 
specific {\tim}s, which we call \ndef{racks} and \ndef{groundings}.
In essence,
a rack or a grounding is a tree-ordered graph $\mathbf M$ obtained by subdividing a bipartite graph~$G$ and turning this subdivision into a tree-ordered graphs in a simple, highly regular way.
In particular,  the subdivision of $G$ 
can easily be retrieved from $\mathbf M$, by taking the $E$-reduct of $\mathbf M$, then deleting the root.
\medskip

First, we define the (simpler) notion of a grounding. In short, the grounding of a bipartite graph is obtained from a subdivision by adding a root and making  this root adjacent to some layers of the subdivision.

\begin{ndefi}[Grounding]
Let $\mathfrak G=(h,N_r)$, where
$h$ is a non-negative integer and $N_r\subseteq \{0,\dots,h+1\}$.

Let $G=(A,B,F)$ be a bipartite graph, let $G^{(h)}$ be the $h$-subdivision of $G$ and $L_i$ be the set of the subdivision vertices of $G^{(h)}$ at distance $i$ from $A$.

We define the \ndef{$\mathfrak G$-grounding} of $G$ as the tree-ordered graph $\mathbf M$ obtained from the disjoint union of $G^{(h)}$ and the singleton $\{r\}$ by defining the tree-order $x\prec y:=(x=r)\wedge(y\neq r)$ and
 the neighborhood (in $E$) of $r$ as $\bigcup_{i\in N_r}L_i$.
 
The graph $G$ is the \ndef{reduction} of $\mathbf M$. We also say that $\mathbf M$ is a \emph{$\mathfrak G$-grounding} with reduction $G$ and length $h$.
\end{ndefi}

The tree-order of a $\mathfrak G$-grounding is somewhat trivial. Thus, we cannot hope
to always find such simple structures in a monadically independent class of TOWS graphs as we only proceed to {\tim}s.
Indeed, any {\tim} of a class of ordered graphs (which is a special case of tree-ordered graphs) is a class of ordered graphs. 

Contrary to the previous case, we now consider a bipartite graph whose parts are independently linearly ordered.
Formally, an \ndef{ordered bipartite graph} is a triple
$(G,<_A,<_B)$, where $G$ is a bipartite graph, $<_A$ is a linear order on the first part of~$G$, and $<_B$ is a linear order on the second part of $G$.
Then  
 the rack of the ordered bipartite graph $(G,<_A,<_B)$ is basically obtained from a subdivision of $G$ by adding a root, making  this root adjacent to some layers of the subdivision, and assigning as a predecessor of a layer either the root, or the maximum of the first part of the graph, or the maximum of the second part of the graph (See \zcref{fig:rack}).

\begin{ndefi}[Rack]
Let $\mathfrak R=(h,N_r,C_A,C_B)$, where
$h$ is a non-negative integer, $N_r\subseteq \{0,\dots,h+1\}$, $C_A\subseteq [h+1]$, $C_B\subseteq [h]$, and $C_A\cap C_B=\emptyset$.

Let $(G,<_A,<_B)$ be an ordered bipartite graph with parts $A$ and $B$, let $G^{(h)}$ be the $h$-subdivision of $G$ and $L_i$ be the set of the subdivision vertices of $G^{(h)}$ at distance~$i$ from $A$. Let $\pi_A$ and $\pi_B$ be the predecessor functions of $<_A$ and $<_B$.

The \ndef{$\mathfrak R$-rack} of $(G,<_A,<_B)$ is the tree-ordered graph $\mathbf M$ obtained from the disjoint union of $G^{(h)}$ and the singleton $\{r\}$ as follows:

Let 
$L_0=A, L_0'=\{\min A\}$, $L_i'=L_i$ (for $i\in [h]$), $L_{h+1}=B$, and $L_{h+1}'=\{\min B\}$.
Then  the $E$-neighborhood of $r$ is $\bigcup_{i\in N_r}L_i$ and the tree-order $\prec$ is defined from its  predecessor function $\pi$, which is defined by

\[
\pi(x)=\begin{cases}
    \pi_A(x)&\text{if }x\in A\setminus L_0',\\
    \pi_B(x)&\text{if }x\in B\setminus L_{h+1}',\\
    \max A&\text{if }(\exists i\in C_A)\ x\in L_i',\\
    \max B&\text{if }(\exists i\in C_B)\ x\in L_i',\\
    r&\text{otherwise.}
\end{cases}
\]
The  graph $G$ is the \ndef{reduction} of $\mathbf M$. We also say that $\mathbf M$ is an \emph{$\mathfrak R$-rack} with reduction $G$ and length $h$.
\end{ndefi}

\begin{figure}[ht]
    \centering
    \includegraphics[width=.75\linewidth]{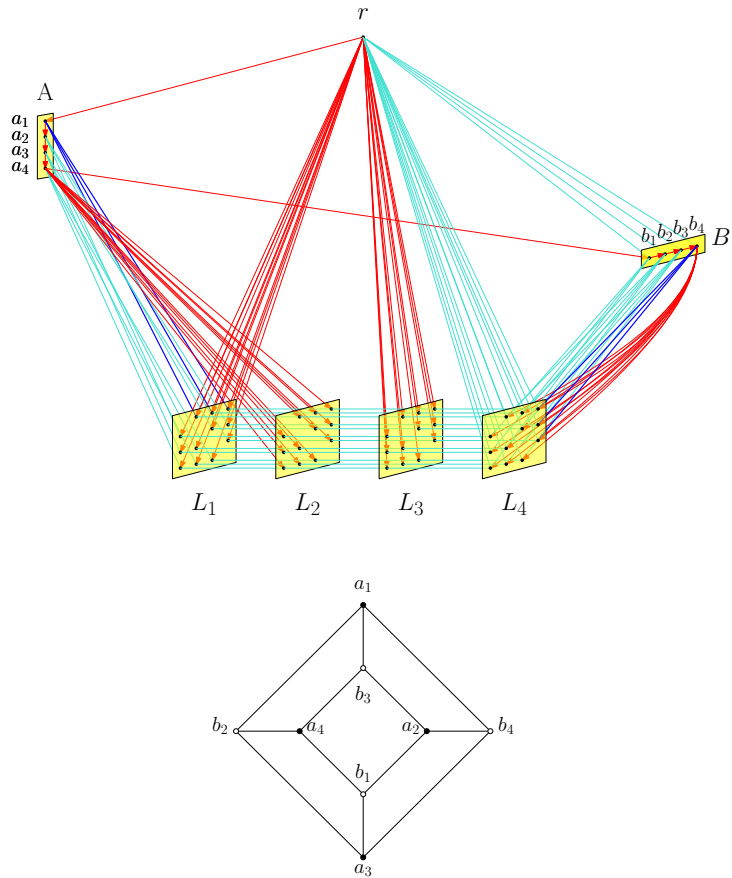}
    \caption{An $\mathfrak R$-rack  with $h=4$, $N_r=\{4,5\}, C_A=\{2,5\}$, and $C_B=\{4\}$ and its reduction.}
    \label{fig:rack}
\end{figure}

A class of groundings is a
\ndef{full class of groundings} if it is the class of the $\mathfrak G$-groundings of all bipartite graphs, for some $\mathfrak G$; Likewise, a class of racks is a \ndef{full class of racks} is it is the class of the $\mathfrak R$-racks of all ordered bipartite graphs, for some $\mathfrak R$.
Our aim is to prove that not only a full class of groundings or racks can be obtained as {\tim}s from
any hereditary independent class of TOWS graphs, but that this may be done in a definable and effective way
(after having selected a well-behaved subclass).

\begin{ndefi}
A  \ndef{definable \tim} is the composition of an interpretation~$\mathsf\Lambda$ performing a monadic expansion 
(defining $D$ and $V$) and the interpretation $\mathsf{Shrink}$.
Let $\mathscr C,\mathscr D$ be classes of tree-ordered graphs.   We say that $\mathscr D$ is a \ndef{definable \tim} of $\mathscr C$ if $\mathscr D=\mathsf I(\mathscr C)$, for some definable {\tim} $\mathsf I$.
\end{ndefi}

According to \zcref{thm:lics_twists}, 
for every independent hereditary class of TOWS graphs $\mathscr C$
there exists a core $\tau$, such that
$\{\tau[n]\colon n\in\mathbb N\}\subseteq \mathscr C$.
\medskip

Thus, we consider a fixed core $\tau$ with length $h$ and  define the mapping ${\rm Host}_\tau$ and the interpretation $\mathsf{\Lambda}_\tau$ such that, for every bipartite graph $G$, 
${\rm Host}_\tau(G)$ belongs to the hereditary closure of $\{\tau[n]\colon n\in\mathbb N\}$, is polynomial time computable from $G$, and $\mathsf{Shrink}(\mathsf{\Lambda}_\tau({\rm Host}_\tau(G)))$ is either a grounding or a rack with reduction $G$  and length $h'\leq h$.

The interpretation $\mathsf \Lambda_\tau$ is a monadic expansion defining unary predicates $V$ and~$D$, by formulas $\rho_V(x)$ and $\rho_D(x)$. 
We start by introducing the following definable unary predicates, that will be
used in the definition of $\rho_V$:
 \begin{itemize}
        \item $S$ is the set of \ndef{small} vertices, i.e.
        having at most 2 successors in $\prec$;
        \item $M$ is the set of the successors of $r$.
    \end{itemize}

We consider different cases, depending on $\tau$. In the following, we assume
that $\tau[n]$ has guards $A = \{a(1), \ldots, a(n)\}$ and $B = \{b(1),\ldots, b(n)\}$, is indexed by $I=J=[n]$, and
contains meshes $\mu_1, \ldots, \mu_h$.

\subsubsection{$\tau$ has type $1$}
In this case, $\mathsf{\Lambda}_\tau$ is defined by
\begin{align*}
\rho_D(x)&:=\bot,\\
\rho_V(x)&:=M(x),
\end{align*}
and, for a bipartite graph $G=(X,Y,F)$, 
with $X=\{x_i\colon i\in[n_X]\}$ and $Y=\{y_i\colon i\in [n_Y]\}$,
we define $n=\max(n_X,n_Y)$ and 
${\rm Host}_\tau(G)$ is the substructure of $\tau[n]$ induced by
\begin{itemize}
    \item the root $r$,
    \item the vertices $a(i)$ for $i\in [n_X]$,
    \item the vertices $\mu_s(i,j)$ for $\{x_i,y_j\}\in F$, $s\in [h]$,
    \item the vertices $b(j)$ for $j\in [n_Y]$.
\end{itemize}
\begin{lem}
\label{lem:tau1}
Let $\tau$ be a core of type $1$ and length $h$. 
Then  there exists $h'\le h$ and $N_r\subseteq\{0,\dots,h'+1\}$ such that 
for every bipartite graph $G=(X,Y,F)$, 
$\mathsf{Shrink}(\mathsf{\Lambda}_\tau({\rm Host}_\tau(G)))$ is 
the $(h',N_r)$-grounding of $G$.
\end{lem}
\begin{proof}
    Let $\mathfrak T=(A,\mu_1,\dots,\mu_h,B)$ be the twister of the twist $\tau[n]$ including ${\rm Host}_\tau(G)$ as an induced substructure.
    We denote by $D$ the domain of ${\rm Host}_\tau(G)$.

     Note that the vertices of $A$ are successors of $r$ (i.e. cover $r$), hence $A\cap D\subseteq M$ and, similarly, $B\cap D\subseteq M$.

    By \zcref{lem:cov_twist}, for each
    $s\in [h]$, vertices of $\mu_s$ are
    either in $M$ or cover a vertex in $M$.
    As each of $A\cap D$, $B\cap D$, and $V(\mu_1)\cap D,\dots,V(\mu_h)\cap D$ is homogeneous with~$\{r\}$, it follows
    by construction of ${\rm Host}_\tau(G)$ that  $\mathsf{Shrink}(\mathsf{\Lambda}_\tau({\rm Host}(G)))$ will result in the $(h',N_r)$-grounding of $G$ with  $h'\leq h$ and $N_r$ determined by~$\tau$.
\end{proof}

\subsubsection{$\tau$ has type $2$}
In this case also, $\mathsf{\Lambda}_\tau$ is defined by
\begin{align*}
\rho_D(x)&:=\bot,\\
\rho_V(x)&:=M(x).
\end{align*}
and, for a  bipartite graph $G$ with parts $X=\{x_i\colon i\in[n_A]\}$,  $Y=\{y_i\colon i\in [n_B]\}$, and edge set $F$,
we define $n=\max(n_X,n_Y)+1$ and 
${\rm Host}_\tau(G)$ is the substructure of $\tau[n]$ induced by
\begin{itemize}
    \item the root $r$,
    \item the vertices $\mu_1(i+1,1)$ for $i\in [n_X]$,
    \item the vertices $\mu_s(i+1,j+1)$ for $\{x_i,y_j\}\in F$ and $1\le s\le h$,
    \item the vertices $\mu_h(1,j+1)$ for $j\in [n_Y]$.
\end{itemize}

\begin{lem}
\label{lem:tau2}
Let $\tau$ be a core of type $2$ and length $h$. 
Then  there exists $h'\le h$ and $N_r\subseteq\{0,\dots,h'+1\}$ such that 
for every bipartite graph $G=(X,Y,F)$, 
$\mathsf{Shrink}(\mathsf{\Lambda}_\tau({\rm Host}_\tau(G)))$ is 
the $(h',N_r)$-grounding of $G$.
\end{lem}

\begin{proof}
Let $\mathfrak T = (\emptyset, \mu_1, \ldots, \mu_h, \emptyset)$ be the twist
of $\tau[n]$.
By definition of \zcref{typ2}, the
restriction of $\prec$ to $\mu_1$ and $\mu_h$
is a union of chains, and by \zcref{lem:twist2_mu1}, such that $\mu_1(i,j)\prec: \mu_1(i,j+1)$ (resp.
$\mu_h(i,j)\prec:\mu_h(i+1,j)$).
As only $\mu_1(i,1)$ and $\mu_h(1,j)$ are
in $M$ in those chains, and the others marks
correspond to the ones computed for type 1,
we can consider that the interpretation  
$\mathsf{Shrink}$ applied to $\mathsf{\Lambda}_\tau({\rm Host}_\tau(G))$ first contracts each chain of $\mu_1$ and $\mu_h$ to its minimal argument, yielding
${\rm Host}_{\tau'}(G)$ for some $\tau'$ of
type 1 and order $h-2$, then does the same contractions to
${\rm Host}_{\tau'}(G)$ as those defined by the monadic expansion
$\mathsf{\Lambda}_{\tau'}({\rm Host}_{\tau'}(G))$.
Hence, it follows from \zcref{lem:tau1} that $\mathsf{Shrink}(\mathsf{\Lambda}_\tau({\rm Host}_\tau(G)))$ is the $(h,N_r)$-grounding of $G$.
\end{proof}

\subsubsection{$\tau$ has type $3$}

We call a vertex $x$ regular if $x$ has a big predecessor and the sum of the number of successors of $x$ and the number of $E$-neighbors of $x$ distinct from $r$ is two. (Note that regular vertices are definable.)

We define an interpretation $\mathsf \Lambda_\tau^0$ by $\rho_D(x):=\bot$ and $\rho_V(x)$, where $\rho_V(x)$ is the formula asserting that either $x$ is regular, or $x$ has no $E$-neighbor distinct from $r$ and its predecessor
is either big or with at least one $E$-neighbor distinct from $r$.

Let $(G,<_X,<_Y)$ be a  ordered bipartite graph, with parts $X=\{x_i\colon i\in [n_X]\}$ ordered by $<_X$ and $Y=\{y_j\colon j\in [n_Y]\}$ ordered by $<_Y$, and edge set $F$. Define $n=\max(n_X,n_Y)+1$.
We assume that $G$ has minimum degree at least $2$.
${\rm Host}^0_\tau(G)$ is the substructure of $\tau[n]$ induced by
\begin{itemize}
    \item the root $r$,
    \item the vertices $\mu_1(i+1,1)$ for $i\in [n_X]$,
    \item the vertices $\mu_s(i+1,j+1)$ for $\{x_i,y_j\}\in F$ and $1\le s\le h$,
    \item the vertices $\mu_h(1,j+1)$ for $j\in [n_Y]$.
\end{itemize}
    
\begin{lem}
\label{lem:tau3}
Let $\tau$ be a core of type $3$ and length $h$. 
Then  there exists $h'\le h$, $N_r\subseteq\{0,\dots,h'+1\}$, $C_A\subseteq [h'+1]$, and $C_B\subseteq [h']$, such that $C_A\cap C_B=\emptyset$ and,
for every  ordered bipartite graph $(G,<_X,<_Y)$ with minimum degree at least $2$, 
$\mathsf{Shrink}(\mathsf{\Lambda}^0_\tau({\rm Host}^0_\tau(G)))$ is 
the $(h',N_r,C_A,C_B)$-rack of $G$.
\end{lem}
\begin{proof}
First note that a vertex in $V(\mu_s)$ (for $1<s<h$) is regular if and only if
its predecessor is neither in $V(\mu_{s-1})$ nor in $V(\mu_{s+1})$.

Assume $(\mu_1,\mu_2)$ is matching. The only vertices of $V(\mu_1)$ without $E$-neighbors distinct from $r$ are the vertices $\mu_1(i+1,1)$ for $i\in [n_X]$.
Moreover, the predecessors of these vertices are either $r$ (which is big), or a vertex of the form $\mu_1(i,j)$, which has an $E$-neighbor distinct from $r$.
Hence, the vertices of $V(\mu_1)$ satisfying $\rho_V$ are exactly the vertices of the form $\mu_1(i+1,1)$ with $i\in [n_X]$.

The case of $(\mu_{h-1},\mu_h)$ is similar, except that 
the predecessor of $\mu_h(1,2)$ can either be $r$ (which is big) or 
$\max V(\mu_1)$. In this last case, either  if $(\mu_1,\mu_2)$ is simply vertical and, as the minimum degree of $G$ is at least $2$, $V(\mu_1)$ is big, or if $(\mu_1,\mu_2)$ is matching and the vertices of some $V(\mu_s)$ have $\max V(\mu_1)$ as their predecessors and $V(\mu_1)$ is big, or $(\mu_1,\mu_2)$ is matching and $\max V(\mu_1)$ has $\min V(\mu_h)$ as unique successor. However, in this case $V(\mu_1)$ has an $E$-neighbor distinct from $r$. Hence, the vertices of $V(\mu_h)$ satisfying $\rho_V$ are exactly the vertices of the form $\mu_h(1,j+1)$ with $j\in [n_Y]$.

Assume $(\mu_1,\mu_2)$ is simply vertical. Then  no vertex of $V(\mu_1)$ has an $E$-neighbor distinct from $r$. The big vertices of $V(\mu_1)\cup\{r\}$ are the predecessors of the vertices $\mu_1(i+1,1)$ for $i\in [n_X]$ and $\max V(\mu_1)$. Hence, the vertices of $V(\mu_1)$ satisfying $\rho_V$ are exactly the vertices of the form $\mu_1(i+1,1)$ with $i\in [n_X]$. Similarly (with the same explanation for the special case of $\min V(\mu_h)$ as above), the vertices of $V(\mu_h)$ satisfying $\rho_V$ are exactly the vertices of the form $\mu_h(1,j+1)$ with $j\in [n_Y]$.

From all of these properties, we deduce that $\mathsf{Shrink}(\mathsf{\Lambda}_\tau({\rm Host}(G)))$ is the $(h',N_r,C_A,C_B)$-rack of $G$ with  $h'\leq h$, $N_r$, $C_A$, and $C_B$ determined by~$\tau$.
\end{proof}

\begin{figure}[ht]
    \begin{minipage}{.25\linewidth}
    \includegraphics[width=\linewidth]{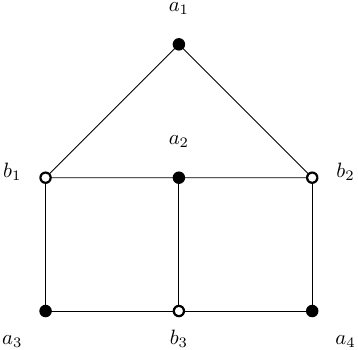}
    \end{minipage}
    \hfill
    \begin{minipage}{.7\linewidth}
    \includegraphics[width=\linewidth]{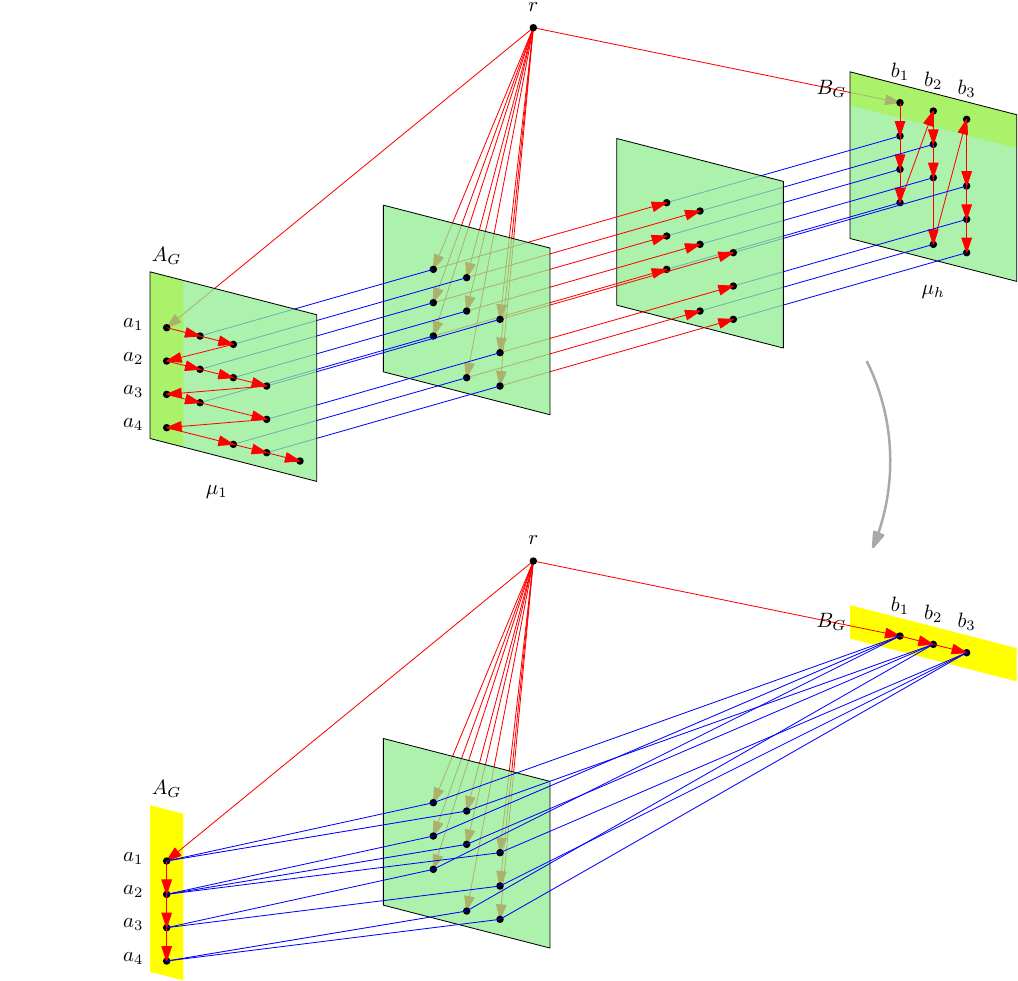}
    \end{minipage}
    \caption{An ordered bipartite graph $(G,<_A,<_B)$ with $a_1<_A a_2<_A a_3<_A a_4$ and $b_1<_B b_2<_B b_3<_B b_4$)  and its host ${\rm Host}^0(G)$, in the case where $\tau$ has type $3$, with both $(\mu_1,\mu_2)$ and $(\mu_{h-1},\mu_h)$ matchings of type $M^{\underline{\phantom{\rightarrow}}}$.}
    \label{fig:host3_0}
\end{figure}

\begin{thm}
\label{thm:racks}
    Let $\mathscr C$ be an independent hereditary class of TOWS graphs, and let $\tau$ be a core such that 
    $\tau[n]\in\mathscr C$ for all $n\in\mathbb N$.
    
    Then  there exists a polynomial time computable ${\rm Host}_\tau$ from graphs to TOWS graphs, and a definable {\tim} $\mathsf I_\tau$ such that for every graph~$G$, ${\rm Host}_\tau(G)\in \mathscr C$, and
\begin{itemize}
    \item if $\tau$ has type $1$ or $2$, then $\mathsf I_\tau({\rm Host}_\tau(G))$ is the $\mathfrak G$-grounding of $G$ (for some $\mathfrak G$ depending only on $\tau$);
    \item if $\tau$ has type $3$, then $\mathsf I_\tau({\rm Host}_\tau(G))$ is the $\mathfrak R$-rack of $G$ (for some $\mathfrak R$ depending only on $\tau$).
\end{itemize}
\end{thm}

\begin{proof}
    We reduce to twists of arbitrary large
    size with a fixed core $\tau$ by \zcref{thm:lics_twists}.

    For cores of type $1$ or $2$, this directly follows from \zcref{lem:tau1,lem:tau2}.
    
    Let $\tau$ be a core of type $3$ and let $(G,<_X,<_Y)$ be a  ordered bipartite graph. Let $\mathfrak T=(\mu_1,\dots,\mu_h)$ be the twister of  $\tau$.
    We define $(G^+,<_{X^+},<_{Y^+})$ from $(G,<_X,<_Y)$. 
    \begin{itemize}
        \item 
    The graph $G^+$ has parts $X^+=X\cup\{a_1,a_2\}$ and $Y^+=Y\cup\{b_1,b_2\}$, each~$a_i$ is adjacent to all the the vertices in $Y^+$, each $b_j$ is adjacent to all the the vertices in $X^+$, $G^+[X\cup Y]=G$.
    \item
    The order $<_{X^+}$ has restriction $<_Y$ on $X$, 
    $a_1<_{X^+} a_2$, and the missing cover is defined as follows:
    if $\max V(\mu_1)\prec\min V(\mu_h)$ in $\tau$, then
    $a_2<_{X^+}\min X$; otherwise, $\max X<_{X^+} a_1$.
\item
    The order $<_{Y^+}$ has restriction $<_Y$ on $Y$ and  $\max Y<_{Y^+}b_1<_{Y^+} b_2$.
    \end{itemize}
    
    The vertices $a_1,a_2,b_1,b_2$ are easily definable in $\mathsf{Shrink}(\mathsf{\Lambda}^0_\tau({\rm Hosts}^0_\tau(G^+)))$, hence definable in ${\rm Host}^0_\tau(G^+)$, and so is the set $Z$ of the vertices of ${\rm Host}^0_\tau(G^+)$ contracted to $a_1,a_2,b_1,b_2$ or to one of the subdivision vertices on a path linking one of these vertices to another vertex.
    The interpretation $\mathsf \Lambda_\tau$ is defined by the same formula~$\phi_V$ as $\mathsf\Lambda_\tau^0$, but with the formula $\phi_D(x)$ expressing that $x$ belongs to $Z$. 
    We further define ${\rm Host}_\tau(G)={\rm Host}_\tau^0(G^+)$.
    It is easily checked that
    $\mathsf{Shrink}(\mathsf{\Lambda}_\tau({\rm Hosts}_\tau(G)))$ is the $(h,N_r,C_A,C_B)$-rack of $G$.
\end{proof}

\begin{cor}\label{cor:racks}
    Let $\mathscr C$ be an independent hereditary class of TOWS graphs. Then  there exists a subclass $\mathscr C'\subseteq \mathscr C$ such that
    \begin{itemize}
        \item either the class of all $\mathfrak G$-groundings (for some $\mathfrak G$) is a definable {\tim} of $\mathscr C'$, 
        \item or the class of all $\mathfrak R$-racks (for some $\mathfrak R$) is a definable {\tim} of $\mathscr C'$.
    \end{itemize}
\end{cor}

\begin{cor}
\label{cor:interpret}
    Let $\mathscr C$ be an independent hereditary class of TOWS graphs.
    Then, there exists an integer $h$ such that the class of all bipartite graphs is an effective interpretation of a subclass of $\mathscr C$.
\end{cor}
\begin{proof}
    This follows from \zcref{thm:racks} by applying to the $\mathfrak G$-groundings or $\mathfrak R$-racks the interpretation computing the $E$-reduct and removing the root $r$.
\end{proof}

\subsection{Tree-ordered minors of TOWS graphs}

We now considered the class $\Minor(\mathscr C)$ of all tree-ordered minors of the structures in a class $\mathscr C$ of TOWS graphs and prove that this class can serve as a bridge between the notion of monadic dependence of classes of sparse graphs and the notion of nowhere-denseness of a class of graphs.

 \begin{thm}[store=thm:mainG]
Let $\mathscr C$ be a class of tree-ordered weakly sparse graphs. Then  the following are equivalent:
\begin{enumerate}
    \item\label{enum:G-mdependent} $\mathscr C$ is monadically dependent;
    \item\label{enum:G-IM-mdependent} $\Cont(\mathscr C)$ is monadically dependent;
    \item\label{enum:G-IM-ND} $\Cont(\mathscr C)^E$ is nowhere dense;
    \item\label{enum:G-M-mdependent} $\Minor(\mathscr C)$ is monadically dependent;
    \item\label{enum:G-M-ND} $\Minor(\mathscr C)^E$ is nowhere dense.
\end{enumerate}
\end{thm}
\begin{proof}
(1)$\iff$(2) follows from the fact that $\Cont(\mathscr C)$ is a transduction of $\mathscr C$ (\zcref{fact:cont_trans}) and $\mathscr C\subseteq \Cont(\mathscr C)$;
(1)$\Rightarrow$(3) follows from \zcref{lem:Min}, and the converse implication (3)$\Rightarrow$(1) follows from \zcref{thm:racks}.

As $\Minor(\mathscr C)=\Cont(\Mon(\mathscr C))$ (\zcref{fact:cont_min}), we deduce from $(1)\Leftrightarrow(2)\Leftrightarrow(3)$ the equivalence of 
\begin{enumerate}[label=\emph{(\arabic*')}]
    \item $\Mon(\mathscr C)$ is monadically dependent;
    \item $\Minor(\mathscr C)$ ($=\Cont(\Mon(\mathscr C))$) is monadically dependent (that is, (4));
    \item $\Minor(\mathscr C)^E$ ($=\Cont(\Mon(\mathscr C))^E$) is nowhere dense (that is, (5)).
\end{enumerate}
As $\Minor(\mathscr C)^E$ is also the monotone closure of $\Cont(\mathscr C)^E$, the conditions (3) and (5) are equivalent.
The theorem follows.
\end{proof}
\subsection{Fundamental graphs of TOWS graphs}
\label{sec:fund_dep}
In this section, we prove 
a relation between the monadic dependence of a class of TOWS graphs and the monadic dependence of the class of its generalized fundamental graph. This result will be one of the bricks of the proof of \zcref{thm:mainS}.
\begin{lem}
\label{lem:lambdaG}
Let $\mathscr C$ be a class  of TOWS graphs.

If the class $\Lambda(\mathscr C)$ is monadically dependent,
then $\mathscr C$ is monadically dependent.
\end{lem}
\begin{proof}
    Assume for contradiction that $\Lambda(\mathscr C)$ is monadically dependent but $\mathscr C$ is not monadically dependent. First note that $\Lambda(\Minor(\mathscr C))$ is nothing but the hereditary closure of $\Lambda(\mathscr C)$.
    Hence, a contradiction will arise if we prove that $\Lambda(\Minor(\mathscr C))$ is not monadically dependent.
    
    According to \zcref{thm:racks},
    there is a contraction $\mathscr D$ of a subclass of $\mathscr C$, which is either the class of all $\mathfrak R$-racks for some $\mathfrak R$ or the class of all $\mathfrak G$-groundings for some~$\mathfrak G$.

    Let $\mathfrak G=(h,N_r)$, and assume that 
    $\Cont(\mathscr C)$ contains all $\mathfrak G$-groundings.
    By deleting the edges incident to the root, we get that
    $\Minor(\mathscr C)$ contains all $(h,\emptyset)$-groundings.
    If $\mathbf M$ is the $(h,\emptyset)$-grounding of $K_{t,t}$, then $\Lambda(\mathbf M)$ is the $(2h+1)$-subdivision of $K_{t,t}$. Hence, $\Lambda(\Minor(\mathscr C))$ is not monadically dependent.

    Otherwise, there is some $\mathfrak R=(h,N_r,C_A,C_B)$ such that  $\Cont(\mathscr C)$ contains all $\mathfrak R$-racks, hence $\Minor(\mathscr C)$ contains all $(h,\emptyset,C_A,C_B)$-racks.
    Let $n\in\mathbb N$.  $\Minor(\mathscr C)$ contains 
    the $(h,\emptyset,C_A,C_B)$-rack $\mathbf N$ of the ordered $K_{n,n}$, with set of vertices $A=\{u_1,\dots,u_n\}$, $B=\{v_1,\dots,v_n\}$, and $L_s=\{x_s(i,j)\colon i,j\in[n]\}$. 
    Identifying the tree edge corresponding to the cover $(u,v)$ to the vertex $v$, let 
    \begin{align*}
    V_A&=A=\{u_1\dots,u_n\},\\
    V_B&=B=\{v_1\dots,v_n\},\\
    X_s&=L_s=\{x_s(i,j)\colon i,j\in [n]\}&\text{(for $s\in [h]$)},\\
    Y(A,X_1)&=\{w_0(i,j)\colon i,j\in [n]\},\\
    Y(X_s,X_{s+1})&=\{w_s(i,j)\colon i,j\in [n]\}&\text{(for $s\in [h-1]$)},\\
    Y(X_{h},B)&=\{w_{h}(i,j)\colon i,j\in [n]\}.
    \end{align*}

    Then   $\Lambda_0$ be the bipartite graph with parts $A\cup B\cup\bigcup_s X_s$ and $Y(A,X_1)\cup Y(X_{h'},B)\cup \bigcup_s Y(X_s,X_{s+1})$ and edges
\begin{align*}
&\{u_i',w_0(i,j)\}&\text{(for $1\le i'\le i\le n$ and $j\in [n]$)},\\ 
&\{w_s(i,j),x_s(i,j)\}&\text{(for $s\in[h-1]$ and $i,j\in [n]$)},\\ 
&\{w_s(i,j),x_{s+1}(i,j)\}&\text{(for $s\in[h-1]$ and $i,j\in [n]$)},\\ 
&\{w_{h}(i,j),v_{j'}\}&\text{(for $1\le j'\le j\le n$ and $i\in [n]$)}.
\end{align*}
Note that $\Lambda_0$ corresponds to the generalized fundamental graph of $\mathbf N$ if $C_A=C_B=\emptyset$. We now show how to deduce $\Lambda(\mathbf N)$ when this is not the case. Let 
\[V_B'=\begin{cases}
V_A\cup V_B&\text{if $h+1\in C_A$;}\\
V_B&\text{otherwise.}
\end{cases}
\]
Then  $\Lambda(\mathbf N)$ is isomorphic to the graph obtained from $\Lambda_0$ by
successively
\begin{itemize}
    \item flipping the pair $(V_A,Y(A,X_1))$ if $1\in C_A$,
    \item flipping the pair $(V_A,Y(X_s,X_{s+1})$ if $s\in C_A$,
    \item flipping the pair $(V_A,Y(X_s,X_{s+1})$ if $s+1\in C_A$,
    \item flipping the pair $(V_A,Y(X_{h},B))$ if $h\in C_A$, 
    \item flipping the pair $(V_B',Y(A,X_1))$ if $1\in C_B$,
    \item flipping the pair $(V_B',Y(X_s,X_{s+1})$ if $s\in C_B$,
    \item flipping the pair $(V_B',Y(X_s,X_{s+1})$ if $s+1\in C_B$,
    \item flipping the pair $(V_B',Y(X_{h},B))$ if $h\in C_B$. 
\end{itemize}

    As $\Lambda(\mathbf N)$ is an induced subgraph of $\Lambda(\mathbf M)$, it follows from the characterization of~\cite{dreier2024flipbreakability} that $\Lambda(\mathscr C)$ is not monadically dependent, contradicting our assumption.
\end{proof}
\begin{figure}[ht]
    \centering
    \includegraphics[width=0.5\linewidth]{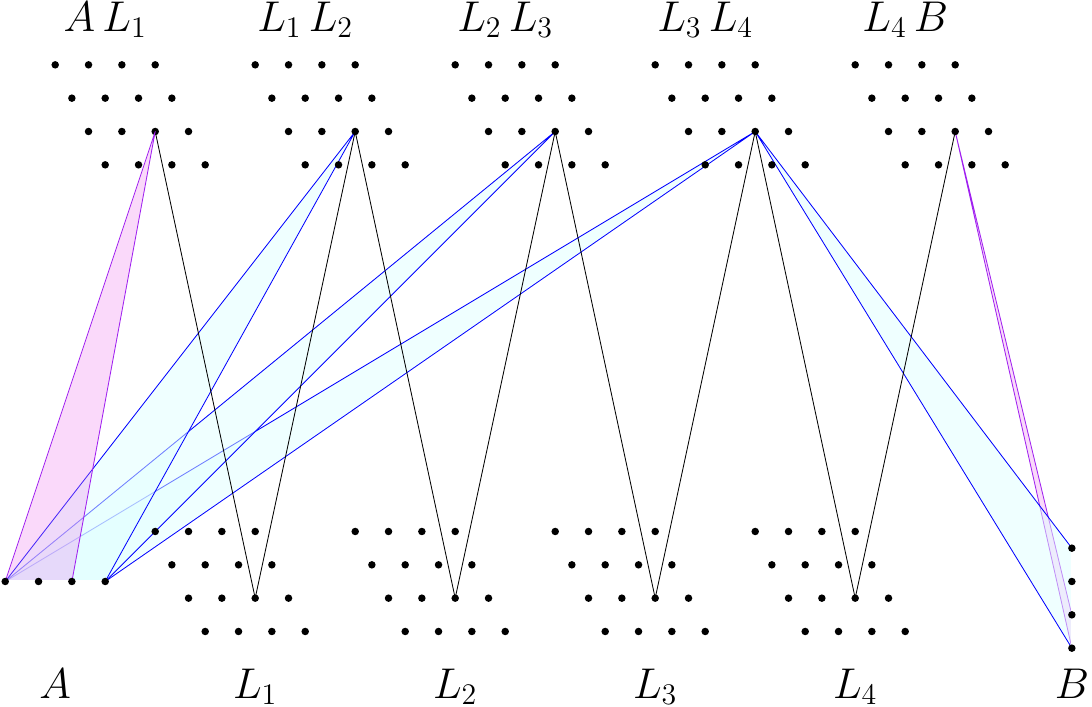}
    \caption{The fundamental graph of the $(4,\emptyset,\{2,5\},\{4\})$-rack of $K_{4,4}$.}
    \label{fig:placeholder}
\end{figure}

\newpage
\part{Monadic dependence of tree-ordered weakly sparse structures}
\addtocontents{toc}{\vspace{2pt}}
\section{Stability and dependence on weakly sparse classes of structures}
\label{sec:Gstdep}

We recall the Theorem of Adler and Adler characterizing monotone dependent classes of binary structures.

\getkeytheorem{thm:Adler}

This theorem has recently been  (almost fully) extended to relational structures by Braunfeld and Laskowski.
\begin{theorem}[\cite{braunfeld2022existential}]
	\label{thm:Adler2}
	Let $\sigma$ be a finite signature.
	For a monotone class $\mathscr M$ of finite $\sigma$-structures, the following are equivalent:
	\begin{enumerate}
		\item $\mathscr M$ is dependent;
		\item $\mathscr M$ is monadically dependent;
		\item $\mathscr M$ is stable;
		\item $\mathscr M$ is monadically stable.
	\end{enumerate}
\end{theorem}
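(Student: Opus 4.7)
The implications $(4)\Rightarrow(3)$, $(4)\Rightarrow(2)$, $(3)\Rightarrow(1)$ and $(2)\Rightarrow(1)$ are immediate from the definitions: the monadic variants strengthen their non-monadic counterparts, and stability strengthens dependence. The substantive content of the theorem lies in the single implication $(1)\Rightarrow(4)$, which is what I would aim to establish.

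My plan is to split $(1)\Rightarrow(4)$ into two steps. Step~one is the implication $(1)\Rightarrow(3)$ for monotone classes: every monotone dependent class of $\sigma$-structures is already stable. Step~two invokes the Braunfeld--Laskowski result~\cite{braunfeld2022existential}, valid for hereditary (and hence monotone) classes of arbitrary relational signature, to conclude $(3)\Leftrightarrow(4)$ by a Ramsey-theoretic absorption of monadic colorings. Composing these gives the chain $(1)\Rightarrow(3)\Rightarrow(4)$.

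For step one I would reduce to the binary case of \zcref{thm:Adler} through a refined incidence encoding. To each $\mathbf M\in\mathscr M$ associate the binary structure $\mathbf M^\natural$ on the universe $M\sqcup\bigsqcup_{R\in\sigma}R(\mathbf M)$, equipped with a unary marker $T$ of the tuple sort and, for every $R\in\sigma$ of arity $k$ and every $i\in[k]$, a binary relation $R_i$ linking a tuple-token $(R,\bar v)$ to its $i$-th coordinate~$v_i$. Then $\mathbf M$ is simply interpretable from $\mathbf M^\natural$, while deleting a tuple $\bar v\in R(\mathbf M)$ corresponds to deleting the tuple-token $(R,\bar v)$ with its $k$ incident $R_i$-edges in $\mathbf M^\natural$. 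Moreover, any first-order formula witnessing independence of $\mathscr M^\natural=\{\mathbf M^\natural:\mathbf M\in\mathscr M\}$ (respectively, the order property) can be rewritten, by replacing each tuple-sort free variable by its $k$ coordinate element-sort variables guarded by the corresponding $R(\cdot)$, as a formula witnessing independence (respectively, the order property) in $\mathscr M$ itself. Consequently, dependence transfers from $\mathscr M$ to $\mathscr M^\natural$ and stability transfers back from $\mathscr M^\natural$ to $\mathscr M$. Applying \zcref{thm:Adler} to the binary monotone class then closes the loop.

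The main obstacle I anticipate is that the monotone closure of $\mathscr M^\natural$ in its binary signature contains ``partial tuple-tokens'' incident to fewer than $k$ coordinates, which do not arise as the incidence encoding of any $\mathbf M'\in\mathscr M$; a naive application of \zcref{thm:Adler} could therefore appeal to structures outside the image of the encoding. To circumvent this I would enlarge the encoding to use, for each $R\in\sigma$ and each non-empty $I\subseteq[\mathrm{arity}(R)]$, a separate relation symbol $R_I$ recording those tuples whose ``present'' coordinates are precisely~$I$. With this enrichment the monotone closure of the encoded class matches the monotone closure of the corresponding $\sigma$-expansion of $\mathscr M$, so the binary Adler--Adler theorem applies verbatim and yields stability; the underlying difficulty, namely aligning monotone substructure-closure across varying arities, is exactly the technical point that the Braunfeld--Laskowski absorption argument is designed to handle uniformly.
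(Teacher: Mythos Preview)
The paper does not prove this theorem; it is quoted from~\cite{braunfeld2022existential} without argument and used only as input to \zcref{cor:NDG}. There is no proof in the paper to compare your proposal against.

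On the proposal itself: the decomposition via \zcref{thm:LB} for $(3)\Leftrightarrow(4)$ plus a direct $(1)\Rightarrow(3)$ is reasonable, but Step~1 has a genuine gap precisely where you locate it. To invoke \zcref{thm:Adler} you need a \emph{monotone} binary class that is dependent. Your $\mathscr M^\natural$ is dependent (as a multi-dimensional interpretation of $\mathscr M$) but not monotone; its monotone closure is monotone, but you have not shown it remains dependent. A structure with partial tuple-tokens is not an interpretation of any $\mathbf M\in\mathscr M$: specifying which incidences survive amounts to selecting a subset of the token sort, and tokens correspond to $k$-tuples of $M$, so this cannot be realised by a monadic expansion of $\mathbf M$. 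Your $R_I$-enrichment does not repair this either: deleting a single incidence edge from a full token in the enriched encoding does not manufacture any $R_I$-incidences for smaller $I$, so the monotone closure of the enriched class still escapes the image of the encoding. Closing this gap is essentially the content of~\cite{braunfeld2022existential}; the present paper imports it rather than reproving it.
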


It follows from \zcref{thm:indND} that, in the case of hereditary classes of graphs, one can replace the assumption that $\mathscr C$ is monotone by the assumption that it is weakly sparse. 
Furthermore, Braunfeld and Laskowski proved the following. 

\begin{theorem}[\cite{braunfeld2022existential}]
	\label{thm:LB}
	Let $\sigma$ be a finite signature.
	Let $\mathscr M$ be a hereditary class of finite $\sigma$-structures. Then 
	\begin{enumerate}
		\item $\mathscr M$ is dependent if and only if $\mathscr M$ is monadically dependent;
		\item $\mathscr M$ is stable if and only if $\mathscr M$ is monadically stable. 
	\end{enumerate}
\end{theorem}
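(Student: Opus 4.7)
The plan is to prove the contrapositive of the two nontrivial directions: if $\mathscr M$ is not monadically dependent (resp.\ not monadically stable), then $\mathscr M$ is not dependent (resp.\ not stable). The direction ``monadically dependent $\Rightarrow$ dependent'' (and its analogue for stability) is immediate from the definitions, since monadic variants are by definition stronger. I focus on item (1); the proof of (2) is parallel, with the order property replacing the independence property and finite linear orders replacing finite graphs.

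First, I would reduce the problem to a coding problem. By the Baldwin--Shelah theorem, $\mathscr M$ is not monadically dependent iff there is a transduction $\mathsf T = \mathsf I \circ \Gamma_\Sigma \circ \mathsf C_k$ such that every finite graph lies in $\mathsf T(\mathscr M)$. The copy operation $\mathsf C_k$ is absorbed into $\mathsf I$ by working with $k$-tuples of elements, so we may assume $k = 1$. The essential task is to eliminate the coloring step $\Gamma_\Sigma$: for each finite graph $G$ we are given some $\mathbf M_G \in \mathscr M$ and a $\Sigma$-coloring $\mathbf M_G^+$ of $\mathbf M_G$ with $\mathsf I(\mathbf M_G^+) = G$, and we must replace the artificial colors by sets that are already first-order definable with parameters in some member of $\mathscr M$.

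The main step is then a coding lemma: for $G$ sufficiently large, there should exist a formula $\psi(x;\bar z)$ depending only on $\mathsf T$, a subset $X \subseteq M_G$, and for each desired color pattern $P \subseteq X$ a parameter tuple $\bar a_P$, such that $\psi(\mathbf M_G, \bar a_P) \cap X = P$. Given such a lemma, one restricts to the induced substructure on $X$ together with the parameter elements; by heredity this is still in $\mathscr M$, and on it the coloring $\Gamma_\Sigma$ is simulated by the first-order definable family $\{\psi(\cdot,\bar a_P)\}_P$. Composing with $\mathsf I$ then defines every finite graph in a member of $\mathscr M$ using parameters only, so $\mathscr M$ is not dependent.

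The main obstacle is establishing the coding lemma with sufficient uniformity. My proposed approach is a Ramsey-type extraction combined with iterated refinement: starting from a monadic-transduction witness one successively passes to large homogeneous subsets under the colorings induced by the atomic types of parameter tuples, using heredity at each step to restrict to the induced substructure so that the effective signature does not grow. Either this process stabilises with a uniform coding formula $\psi$ (giving the desired conclusion), or it produces arbitrarily long chains of genuinely monadic behaviour; the latter, by a pigeonhole over the finitely many formulas of bounded complexity involved, would contradict the existence of the original coloured witness. The hard part is tracking the quantitative bounds through the Ramsey extraction so that the stabilisation occurs uniformly across all graphs $G$. The same scheme, tracking order-property witnesses instead of IP-witnesses, yields item (2).
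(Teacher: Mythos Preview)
The paper does not give its own proof of this theorem: it is stated with the citation \cite{braunfeld2022existential} and used as a black box (together with Remark~\ref{rem:GI}) to derive Theorem~\ref{cor:NDG}. There is therefore no proof in the paper to compare your proposal against.

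As to the proposal itself, the strategy you outline --- absorb copying into tuples, then replace monadic marks by parameter-definable sets via a ``coding lemma'' obtained through Ramsey-type extraction --- is in the right spirit, but as written it is only a plan and not a proof. The crucial step, showing that the extraction stabilises uniformly to yield a single formula $\psi(x;\bar z)$ that works across all target graphs~$G$, is asserted rather than carried out; the dichotomy ``either stabilises or produces arbitrarily long chains of genuinely monadic behaviour'' is not made precise, and it is unclear what the contradiction in the second branch actually is. The actual proof in \cite{braunfeld2022existential} uses substantially more model-theoretic machinery (indiscernible sequences, the characterisation of monadic NIP via the f.s.\ dichotomy, etc.) to make this work. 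If you want to give a self-contained argument, you should either invoke those tools explicitly or make the combinatorial extraction fully quantitative.
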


\zcref{thm:LB}, together with the next remark, 
allows us to completely  generalize \zcref{thm:Adler2} to hereditary weakly sparse classes of relational structures.

\begin{remark}[{\cite[Proposition 5.7]{Sparsity}}]
\label{rem:GI}
     Let  $\mathscr M$ be a class of $\sigma$-structures. 
     Then,  $\Gaif(\mathscr M)$ is nowhere dense if and only if $\Inc(\mathscr M)$ is nowhere dense. 
\end{remark}

(Note that the original statement and proof of \cite[Proposition 5.7]{Sparsity} considered the multigraph version of the incidence graph,  but the same proof works for the simple graph version as well.)

\begin{thm}[store=cor:NDG]
			Let $\mathscr M$ be a hereditary weakly sparse class of $\sigma$-structures. Then, the following are equivalent:
			\begin{enumerate}
				\item $\mathscr M$ is monadically stable;
				\item $\mathscr M$ is  stable;
				\item $\mathscr M$ is monadically dependent;
				\item $\mathscr M$ is dependent;
				\item the monotone closure of $\mathscr M$ is monadically stable;
				\item $\mathscr M$ is nowhere dense (i.e.\ $\Gaif(\mathscr M)$ is nowhere dense);
				\item $\Inc(\mathscr M)$ is nowhere dense.
			\end{enumerate}    
\end{thm}
\begin{proof}
	We have $(1)\Leftrightarrow (2)$ and $(3)\Leftrightarrow (4)$ by \zcref{thm:LB} and $(2)\Rightarrow (4)$ by definition. 
    
    Now, $(3)$ implies that $\Gaif(\mathscr M)$ is monadically dependent, as it is a simple interpretation of $\mathscr M$ (see comment after~\zcref{lem:preserve-by-interpret}). Assume for contradiction that $\Gaif(\mathscr M)$ is monadically dependent but not nowhere dense. Then, as $\mathscr M$ is weakly sparse, according to \zcref{thm:indND} there exists a positive integer $t\geq 1$ such that for every integer  $n$, the $t$-subdivision of~$K_{n}$ is an induced subgraph of some $G\in\Gaif(\mathscr M)$.  
    This contradicts the hypothesis that $\Gaif(\mathscr M)$ is monadically dependent: from induced $t$-subdivisions of arbitrarily large cliques one can transduce all finite graphs by choosing the branch vertices and encoding edges via the subdivided paths.
    Hence, $(3)\Rightarrow (6)$. 
    By \zcref{rem:GI} $(6)\Leftrightarrow (7)$.

    Assume $(6)$ and let $\mathscr M'$ be the monotone closure of $\mathscr M$. 
    Then also $\Gaif(\mathscr M')$ (which includes $\Gaif(\mathscr M)$ and is included in the monotone closure of $\Gaif(\mathscr M)$) is nowhere dense.
    According to \zcref{rem:GI}, this implies that $\Inc(\mathscr M')$ is nowhere dense. 
    Applying the property a second time, we get that $\Inc(\Inc(\mathscr M'))$ is nowhere dense.
    Then also the monotone closure of $\Inc(\Inc(\Mm))$ is nowhere dense, which implies that it is monadically stable by \zcref{thm:Adler}. 
    As $\Mm$ and $\mathscr M'$ are a transduction of $\Inc(\Inc(\mathscr M'))$ we conclude that $\Mm$ and $\Mm'$ are monadically stable, hence $(6)\Rightarrow (1)$ and $(6)\Rightarrow (5)$. 

    Now assume $(5)$ and again denote by $\Mm'$ the monotone closure of $\Mm$. Then $\Gaif(\Mm')$, as a transduction of $\Mm'$, is monadically stable. 
    By \zcref{thm:Adler}, $\Gaif(\Mm')$ is nowhere dense, and hence $\Gaif(\Mm)$ is nowhere dense. 
    This means $\Mm$ is nowhere dense and hence $(5)\Rightarrow (6)$. 
\end{proof}

\section{Tree-ordered weakly sparse  
structures}
\label{sec:TOWS}
In this section, we introduce tree-ordered structures and derived notions that will be used all along this paper.
\begin{ndefi}[Tree-ordered $\sigma$-structure]
\label{def:TO}
Let $\sigma$ be a finite relational signature.
A \ndef{tree-ordered} $\sigma$-structure is a $\sigma\cup\{\prec\}$-structure, where $\prec$ is a tree-order. 
\end{ndefi}

Recall that the $\sigma$-reduct of a structure $\mathbf M$ is denoted by $\mathbf M^\sigma$ and that, for a class~$\mathscr M$, we define $\mathscr M^\sigma=\{\mathbf M^\sigma\colon \mathbf M\in\mathscr M\}$.

\begin{ndefi}[TOWS]
A class $\mathscr M$ of tree-ordered $\sigma$-structures is \ndef[class]{tree-ordered weakly sparse} (or \ndef[class]{TOWS}) if the class 
$\Gaif(\mathscr M^\sigma)$
is weakly sparse. By extension, when a bound of the maximal order of a balanced complete bipartite is assumed to be implicitly fixed, we shall speak about \ndef[$\sigma$-structure]{tree-ordered  weakly sparse} $\sigma$-structures (or \ndef[$\sigma$-structure]{TOWS} $\sigma$-structures).
\end{ndefi}

\subsection{Basic constructions}
In this section we introduce some variants of classical constructions (Gaifman graph, incidence graph) in the context of tree-ordered structures. The idea is to ignore the tree-order when performing these constructions.

\begin{ndefi}[Tree-ordered Gaifman graph]
\label{def:TGaif}
The \ndef[of a tree-ordered structure]{tree-ordered Gaifman graph} $\TGaif(\mathbf M)$ of a tree-ordered $\sigma$-structure $\mathbf M$ is the tree-ordered graph $\mathbf G$ with the same domain as $\mathbf M$, such that $\mathbf G^\prec=\mathbf M^\prec$ and 
$\mathbf G^\sigma=\Gaif(\mathbf M^\sigma)$. 
\end{ndefi}

\begin{ndefi}[Tree-ordered incidence graph]
\label{def:TInc}
The \ndef[of a tree-ordered structure]{tree-ordered incidence  graph} $\TInc(\mathbf M)$ of $\mathbf M$ is 
the tree-ordered graph, whose $E$-reduct is $\Inc(\mathbf M^\sigma)$, and whose 
$\prec$-reduct is the tree-order obtained from $\mathbf M^\prec$ by adding all the other elements in  $\Inc(\mathbf M^\sigma)$ as an antichain (comparable with the root only).
\end{ndefi}

\begin{lem}
    \label{lem:inc2tows}
    Let $\sigma$ be a finite relational signature. There exists a transduction~$\mathsf T$ such that, for every tree-ordered $\sigma$-structure $\mathbf M$, $\mathbf M\in\mathsf T(\TInc(\TInc(\mathbf M)))$.
\end{lem}

\begin{proof}
Let $k$ be the maximum arity of relations in $\sigma$.
    The transduction $\mathsf T$ uses unary predicates $V,M_1,\dots,M_k,P$,
  and  is defined by $\nu(x):=V(x)$, $\rho_\prec(x,y):=(x\prec y)$ and, for each $R\in\sigma$ with arity $r$, 
    \[\rho_R(\bar x):=
    \exists \bar y, t\ P(t)\wedge\bigwedge_{i=1}^r\bigl(
     V(x_i)\wedge E(x_i,y_i)\wedge M_i(y_i)\wedge E(y_i,t)\bigr).
    \]

    Let $\mathbf M$ be a tree-ordered $\sigma$-structure.
    The vertices of $\TInc(\TInc(\mathbf M))$ are marked as follows:  the domain $M$ of $\mathbf M$  is marked by $V$, the  pairs $(R,\bar v)$ with $R\in\sigma$ and $\bar v\in R(\mathbf M)$ are marked by $P$, and the  pairs
    $(u,(R,\bar v))$ such that
    $R\in\sigma, \bar v\in R(\mathbf M)$, and  $u=v_i$ are marked $M_i$.
    
    With this marking, it is easily checked that $\mathbf M\in\mathsf T(\TInc(\TInc(\mathbf{M})))$.
\end{proof}

\begin{remark}
    It is direct from the definitions that $\TGaif(\mathbf M)^E=\Gaif(\mathbf M^\sigma)$ and
    $\TInc(\mathbf M)^E=\Inc(\mathbf M^\sigma)$ (See \zcref{fig:example}).
\end{remark}

\begin{figure}[h!t]
    \centering
    \includegraphics[width=.9\columnwidth]{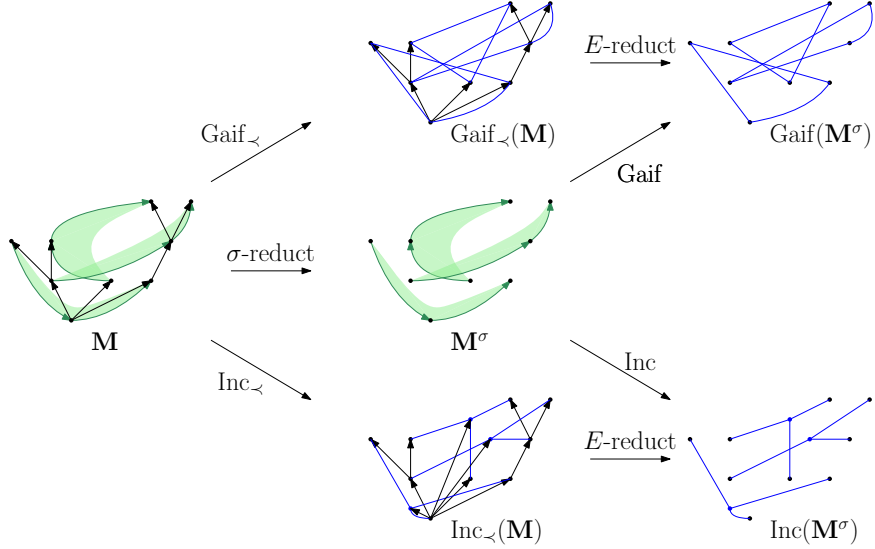}
    \caption{Constructions derived from a tree-ordered  structure $\mathbf M$. (The signature $\sigma$ contains a single $3$-ary relation.)}
    \label{fig:example}
\end{figure}. 

\subsection{The generalized fundamental graph}

 We generalize the notion of a fundamental graph of a tree-ordered graph given in \zcref{sec:fund} to relational structures.
Our generalization is inspired by the notion of hypergraphic polymatroid introduced in \cite{vertigan1993recognizing}. 
This leads to the following notion:

\begin{ndefi}
    The \ndef{flattening} $\Flat(\mathbf M)$ of a tree-ordered $\sigma$-structure  $\mathbf M$  is a pair $(\mathcal H,Y)$, where $Y$ is the edge set of the cover graph of $\mathbf M^\prec$, and $\mathcal H$ is the hypergraph with domain $M$, whose edge set is the union $Y$ and the symmetrization\footnotemark of $\mathbf M^\sigma$.
\end{ndefi}
\footnotetext{Precisely, $X$ is an edge in the symmetrization of $\mathbf M^\sigma$ if there exists at least one relation $R\in\sigma$, such that $X$ is  the support of some tuple in $R(\mathbf M)$.}

Note that if $\mathbf M$ is a tree-ordered graph and $(G,Y)=\Flat(\mathbf M)$, then $G$ is a connected graph having $Y$ as a spanning tree.

Consider a tree-ordered $\sigma$-structure $\mathbf M$ with flattening $(\mathcal H,Y)$. For $A\in E(\mathcal H)\setminus Y$, we consider the subspace $Y\langle A\rangle$ of $2^{Y}$ generated by the minimal sets $F\subseteq Y$ such that $F$ induces a path between two distinct elements of $A$. (Note that the dimension of this subspace is exactly the rank of $A$ in the hypergraphic polymatroid associated to $\mathcal H$.)

\begin{ndefi}[Generalized fundamental graph]
\label{def:TFun}
    Let $\sigma$ be a finite relational signature and let $k$ be the maximum arity of a relation in $\sigma$.

    A \ndef{generalized fundamental graph}  of a tree-ordered $\sigma$-structure $\mathbf M$ with flattening $(\mathcal H,Y)$ is a bipartite binary relational structure with signature $\{E_1,\dots,E_{k-1}\}$, whose domain is the union of $Y$ and $E(\mathcal H)$, where 
    the neighborhoods of $A\in E(\mathcal H)$ in $E_1,\dots,E_{k-1}$ spans $Y\langle A\rangle$.

    We denote by $\Lambda(\mathbf M)$ the set of all generalized fundamental graphs of $\mathbf M$.
\end{ndefi}

\begin{lem}
\label{lem:special_lambda}
Let $\sigma$ be a relational signature. There exists a transduction 
$\mathsf T_{\Lambda}$ such that, for every TOWS $\sigma$-structure $\mathbf M$, $\mathsf T_\Lambda(\TInc(\TInc(\mathbf M)))$ contains a generalized fundamental graph of $\mathbf M$.
\end{lem}
\begin{proof}
    Let $k$ be the maximum arity of relations in $\sigma$ and let $M_1,\dots,M_k,Y,Z$ be unary predicates not in $\sigma$.
    We define the transduction $\mathsf T_\Lambda$ by $\nu(x):=Y(x)\vee Z(x)$ and, for $i\in [k]$, 
    \begin{multline*}
\rho_{E_i}(x,y):=Y(x)\wedge Z(y)\wedge 
    \bigl(\exists v,v',w,w'\ M_1(w)\wedge M_{i+1}(w')\wedge E(w,y)\\
    \wedge E(w',y)\wedge E(w,v)\wedge E(w',v')\wedge (v\meet v' \prec x)\wedge ((x\preceq v)\vee(x\preceq v'))
    \bigr).
    \end{multline*}

    Let $\mathbf M$ be a TOWS $\sigma$-structure. 
    Two vertices $(R,\bar v)$ and $(R',\bar v')$ of $\TInc(\mathbf M)$ (hence of $\TInc(\TInc(\mathbf M))$) are said to be equivalent if
    $R=R'$ and $\bar v'$ is a permutation of $\bar v$.
    We mark vertices of $\TInc(\TInc(\mathbf M))$ as follows: the vertices from the domain~$M$ of $\mathbf M$ except the root are marked $Y$ and a representative of each equivalence class of 
    the vertices of the form $(R,\bar v)$ with $R\in\sigma$ and $\bar v\in R(\mathbf M)$ is marked $Z$. Moreover, vertices of the form $(u,(R,\bar v))$ with $u=v_i$ are marked $M_i$.  
    Each vertex $v\in Y$ corresponds to the unique edge of the cover graph of $\mathbf M^\prec$ with greater incidence~$v$ (this is the cover of the parent of $v$ by $v$). It is easily checked that 
    this marking defines a generalized fundamental graph of $\mathbf M$ (See \zcref{fig:fundamental}).
\end{proof}
\begin{lem}
\label{lem:trans_lambda}
    Let $\sigma$ be a relational signature. There exists a transduction 
$\mathsf T_{\rm ext}$ such that, for every TOWS $\sigma$-structure $\mathbf M$, if $\mathbf N$ is a generalized fundamental graph of $\mathbf M$, then $\mathsf T_{\rm ext}(\mathbf N)$ contains all the generalized fundamental graphs of $\mathbf M$.

    Consequently, $\Lambda(\mathbf M)\subseteq \mathsf T_{\rm ext}\circ\mathsf T_\Lambda(\TInc(\TInc(\mathbf M)))$, where $T_\Lambda$ is the transduction defined in the proof of \zcref{lem:special_lambda}.
\end{lem}
\begin{proof}
    Any generalized fundamental graph of $\mathbf M$ can be obtained by performing a change of basis for each vertex of $\mathbf N$ associated to a hyperedge of $\Flat(\mathbf M)$. This change of bases is easily performed by a transduction as follows (where $k$ is the maximum arity of a relation in $\sigma$) by letting
\[
\rho_{E_i}(x,y):=Y(x)\wedge Z(y)\wedge \bigoplus_{j=1}^k (M_{i,j}(y)\wedge E_j(x,y)),
\]
    where $\oplus$ denotes the exclusive or.
\end{proof}

\begin{figure}[h!t]
    \centering
    \includegraphics[width=\columnwidth]{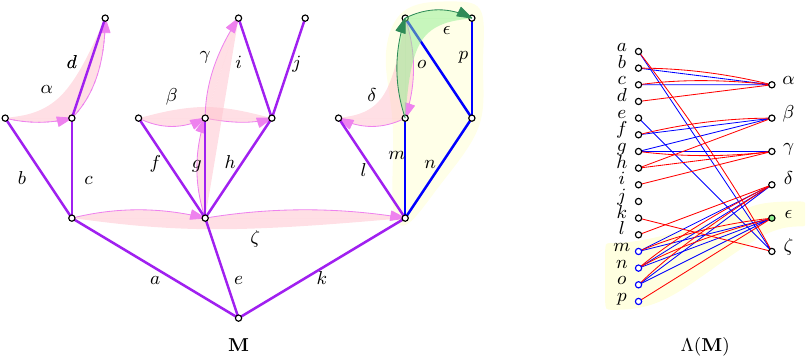}
    \caption{The fundamental graph of a tree-ordered $\sigma$-structure. The dotted lines indicate the order of the incidences in the triples; on $\Lambda(\mathbf M)$, the blue edges correspond to $E_1$ and the red edges to $E_2$.}
    \label{fig:fundamental}
\end{figure}

\begin{lem}
\label{fact:lambda_Gaif}
For every finite relational signature $\sigma$ there is a transduction $\mathsf T$ such that for every TOWS $\sigma$-structure $\mathbf M$ we have
$\Lambda(\TGaif(\mathbf M))\subseteq\mathsf T(\Lambda(\mathbf M))$. 
\end{lem}
\begin{proof}
Let $\sigma$ be a  relational signature and let $k$ be the maximum arity of a relation in $\sigma$.
We define the transduction $\mathsf T$ as the composition of a copy operation $\mathsf C_{\binom{k}{2}+1}$  and the transduction $\mathsf T'$ defined by $\nu(x):=Y(x)\vee Z(x)$ and
\begin{multline*}
\rho_E(x,y):=Y(x)\wedge Z(y)\wedge \\
\exists y'\ \bigl(A(y')\wedge
E(y,y')\wedge\bigvee_{(i,j)\in\binom{k}{2}} (M_{i,j}(y)\wedge(E_{i-1}(x,y')\oplus E_{j-1}(x,y')))\bigr),    
\end{multline*}
where we let $E_0$ to be the empty relation.

    Let $\mathbf M$ be a TOWS $\sigma$-structure and let $\mathbf N$ be a generalized fundamental graph of~$\mathbf M$ constructed as in the proof of \zcref{lem:special_lambda}.

    We mark the copies of $\mathbf N$ in $\mathsf C_{\binom{k}{2}+1}(\mathbf N)$ by predicates $A$ and $M_{i,j}$ for $1\leq i<j\leq k$.
    We further mark by $Y$ the part of  the copy of $\mathbf N$ marked $A$ corresponding to cover edges of $\mathbf M^\prec$. For each edge $uu'$ of $\Gaif(\mathbf M)$ we select a pair $(R,\bar v)$
    associated to a vertex of $\mathbf N$ such that $u,u'\in\bar v$, and we mark $Z$ the clone of this vertex in the copy of $\mathbf N$ marked $M_{\min(i,j),\max(i,j)}$. We denote by $\phi(uu')$ this vertex.
    Then, it is easily checked that $\rho_E(x,\phi(uu'))$ holds if and only if the unique cover of $\mathbf M^\prec$ with greater element $x$ belongs to the fundamental cycle of $uu'$.
\end{proof}
\subsection{{\Tim}s and tree-ordered minors}

{\Tim}s are defined the same way as for  tree-ordered graphs (see \zcref{sec:prelim}):
Let $\mathbf M$ be a tree-ordered structure.
An \ndef{elementary $\prec$-contraction} of a tree-ordered $\sigma$-structure $\mathbf M$ is obtained by identifying two elements $u$ and $v$ of $\mathbf M$, where $(u,v)$ is a cover in $\mathbf M^\prec$. A \ndef{$\prec$-contraction} is a sequence of elementary $\prec$-contractions. 

\begin{ndefi}
A \ndef[of a tree-ordered structure]{\tim} of $\mathbf M$ is a $\prec$-contraction of an induced substructure of $\mathbf M$. 
We denote by  $\Cont(\mathbf M)$ the set of all the {\tim}s of $\mathbf M$.
\end{ndefi}

An \ndef{elementary $\sigma$-deletion} of a tree-ordered $\sigma$-structure $\mathbf M$  is obtained by deleting a tuple of a relation (that is, a tuple in $R(\mathbf M)$ for some $R\in\sigma$). A \ndef{$\sigma$-deletion} is a sequence of elementary $\sigma$-deletions.

\begin{ndefi}
\label{def:Tminor}
A \ndef[of a tree-ordered structure]{tree-ordered minor} of a tree-ordered $\sigma$-structure~$\mathbf M$ is a tree-ordered $\sigma$-structure obtained from $\mathbf M$ by a sequence of  elementary $\prec$-contractions and elementary $\sigma$-deletions. For a class $\mathscr M$ of tree-ordered $\sigma$-structures, we denote by $\Minor(\mathscr M)$ the class of all the tree-ordered minors of structures in~$\mathscr M$.
\end{ndefi}

\section{{\Tim}s and tree-ordered minors  of TOWS structures}
\label{sec:timS}
In this Section we generalize the notions of {\tim} and tree-ordered minors introduced in \zcref{sec:minor_G}.
Let $\mathbf M$ be a tree-ordered structure.
\begin{ndefi}
An \ndef{elementary $\prec$-contraction} of a tree-ordered $\sigma$-structure $\mathbf M$ is obtained by identifying two elements $u$ and $v$ of $\mathbf M$, where $(u,v)$ is a cover in $\mathbf M^\prec$. An \ndef{elementary deletion} is the deletion of a non-root element (as well as all the relations in which the element is involved). Note that both elementary $\prec$-contractions and elementary deletions preserve the property of a structure to be tree-ordered.
An \ndef{\tim} of a TOWS structure $\mathbf M$ is a TOWS structure obtained from $\mathbf M$ by a sequence of deletions followed by a sequence of elementary $\prec$-contractions.
We denote by $\Cont(\mathbf M)$ the set of all the {\tim}s of $\mathbf M$. 
\end{ndefi}

It is easily checked that an {\tim} of an {\tim} of a tree-ordered $\sigma$-structure $\mathbf M$ is an {\tim} of $\mathbf M$.

An \ndef{elementary $\sigma$-deletion} of a tree-ordered $\sigma$-structure $\mathbf M$  is the deletion of a tuple of a relation (that is, a tuple in $R(\mathbf M)$ for some $R\in\sigma$). A \ndef{$\sigma$-deletion} is a sequence of elementary $\sigma$-deletions. 
We denote by $\Mon(\mathscr M)$ the set of all the structures obtained by $\sigma$-deletions on structures in $\mathscr M$. 
A \ndef{tree-ordered minor} of a tree-ordered $\sigma$-structure~$\mathbf M$ is a tree-ordered $\sigma$-structure obtained from $\mathbf M$ by a sequence of  elementary $\prec$-contractions, elementary deletions, and elementary $\sigma$-deletions. We denote by $\Minor(\mathbf M)$ the set of all the tree-ordered minors of $\mathbf M$.
The next fact is direct from the definitions.
\begin{fact}
    \label{fact:cont_minG}
    Let $\mathscr M$ be a class of tree-ordered $\sigma$-structures. 
    Then 
\[
\Minor(\mathscr M)=\Mon(\Cont(\mathscr M))=\Cont(\Mon(\mathscr M)).
\]
\end{fact}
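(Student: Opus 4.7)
The plan is to verify both equalities by elementary commutation arguments between the three operations that define a tree-ordered minor: elementary $\prec$-contractions, elementary vertex deletions (removing a non-root element together with the tuples involving it), and elementary $\sigma$-deletions (removing a single tuple from a relation). The inclusions $\Mon(\Cont(\mathscr M))\subseteq\Minor(\mathscr M)$ and $\Cont(\Mon(\mathscr M))\subseteq\Minor(\mathscr M)$ are immediate since $\Minor$ allows all three operations in any order. The content of the fact is therefore the two converse inclusions, which amount to showing that tuple deletions can equally well be performed last or first.

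For $\Minor(\mathscr M)\subseteq\Mon(\Cont(\mathscr M))$, I would argue that any $\mathbf N\in\Minor(\mathbf M)$ is fully determined by the following data: a set $V\subseteq M$ of preserved vertices (containing the root), a partition $\sim$ of $V$ into subtrees of the cover graph of $\mathbf M^\prec[V]$ (inducing a contraction map $\phi\colon V\to V/{\sim}$ whose image is identified with the domain of $\mathbf N$), and, for each $R\in\sigma$, a subset $T_R\subseteq R(\mathbf N)$. Applying the vertex deletions and $\prec$-contractions encoded by $(V,\sim)$ first yields an element of $\Cont(\mathbf M)$ whose $R$-relation is $\phi(R(\mathbf M[V]))\supseteq T_R$; a subsequent $\sigma$-deletion of the tuples in $\phi(R(\mathbf M[V]))\setminus T_R$ then produces $\mathbf N$, placing it in $\Mon(\Cont(\mathscr M))$.

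For the dual inclusion $\Minor(\mathscr M)\subseteq\Cont(\Mon(\mathscr M))$, I would take the same data $(V,\sim,\{T_R\}_{R\in\sigma})$ and set $S_R:=\phi^{-1}(T_R)\cap R(\mathbf M[V])\subseteq R(\mathbf M)$. Since $\phi(S_R)=T_R$, first deleting all tuples in $R(\mathbf M)\setminus S_R$ (a $\Mon$-step) and then applying the vertex and contraction operations encoded by $(V,\sim)$ (a $\Cont$-step) reconstructs $\mathbf N$. The only mildly subtle point is that $\phi$ may fail to be injective on tuples -- distinct tuples of $\mathbf M$ may collapse to the same tuple after contraction -- but this never causes an obstruction: in the first reordering the collapsing is dealt with after the fact by deleting from the image, and in the second one we simply take the full preimage of each desired tuple. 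I do not foresee any real difficulty.
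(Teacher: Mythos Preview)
Your proposal is correct and is precisely the kind of elaboration the paper has in mind: the paper states this fact without proof, declaring it ``direct from the definitions''. Your parametrization of a tree-ordered minor by the data $(V,\sim,\{T_R\})$ is the natural way to make this explicit, and your handling of the only delicate point---that contraction may collapse distinct tuples---is correct in both directions. One small slip: when you introduce $T_R$ you write ``a subset $T_R\subseteq R(\mathbf N)$'', but you mean $T_R:=R(\mathbf N)$ (which then satisfies $T_R\subseteq\phi(R(\mathbf M[V]))$); the rest of your argument uses it that way.
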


\begin{ndefi}[$\mathsf{Shrink}$]
\label{def:shrink}
We define the interpretation $\mathsf{Shrink}$ \index{Shrink (of a structure)} of tree-ordered $\sigma$-structures in tree-ordered $\sigma\cup\{D,V\}$-structures, where $D$ and $V$ are unary relations, by (for each $R\in\sigma$ with arity $r$):

\begin{align*}    
    \nu(x)&:=V(x)\wedge\neg D(x)\vee (x=r)\\
    \rho_R(\bar x)&:=\exists \bar y\ \biggl(R(\bar y)\wedge\bigwedge_{i=1}^r (x_i\preceq y_i)\wedge\bigwedge_{i=1}^r\bigl(\neg D(y_i)\vee (y_i=r)\bigr)\\
    &\phantom{:=\exists \bar y\ \biggl(\bigwedge_{i=1}^r (x_i\preceq y_i)\ }
    \wedge\bigwedge_{i=1}^r \bigl(\forall z\ (x_i\prec z\preceq y_i)\rightarrow\neg \nu(z)\bigr)\biggr)
\end{align*}
\end{ndefi}

We stress the following easy fact, which justifies to consider that our generalization of $\Cont$ is still a transduction.
\begin{fact}
\label{fact:cont_transG}
    Let $\sigma$ be a relational signature. Then  the  transduction $\mathsf T$ defined as $\mathsf T=\mathsf{Shrink}\circ \Lambda$ (where $\Lambda$ is the monadic expansion adding predicates $D$ and $V$) is
    such that, for every class $\mathscr C$ of tree-ordered $\sigma$-structures we have
    $\Cont(\mathscr C)=\mathsf T(\mathscr C)$.
\end{fact}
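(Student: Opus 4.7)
The plan is to prove the two inclusions $\mathsf T(\mathscr C) \subseteq \Cont(\mathscr C)$ and $\Cont(\mathscr C) \subseteq \mathsf T(\mathscr C)$ by directly mimicking the argument of \zcref{fact:cont_trans}, the only new twist being that relation symbols in $\sigma$ may have arbitrary arity. Note upfront that $\mathsf T = \mathsf{Shrink}\circ\Lambda$ is by construction a transduction (a monadic expansion followed by a simple interpretation).

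For $\mathsf T(\mathscr C) \subseteq \Cont(\mathscr C)$, I would take any structure $\mathbf M \in \mathscr C$ and any $\{D,V\}$-expansion $\mathbf M^+$, and verify that $\mathsf{Shrink}(\mathbf M^+)$ is an {\tim} of $\mathbf M$. The formula $\nu$ is built so that the root $r$ always survives, so we may assume without loss of generality that $r$ is $V$-marked and not $D$-marked. I would let $S$ be the set of non-$D$-marked elements and partition $S$ into classes $P_v$ indexed by the elements $v$ of $S$ satisfying $\nu$, where $w\in P_v$ iff $v$ is the $\prec$-largest $\nu$-satisfying ancestor of $w$. Each $P_v$ is convex in $(S,\restr{\prec}{S})$ with minimum $v$, so deleting $M\setminus S$ and contracting each $P_v$ to $v$ yields a valid {\tim} $\mathbf N$. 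Unwinding $\rho_R(\bar x)$ then shows that $\bar x \in R(\mathsf{Shrink}(\mathbf M^+))$ iff some $\bar y\in R(\mathbf M)$ exists with every $y_i\in S$ and $x_i$ equal to the representative of the part containing $y_i$, which is exactly the definition of $R(\mathbf N)$.

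For $\Cont(\mathscr C) \subseteq \mathsf T(\mathscr C)$, I would take any $\mathbf N\in\Cont(\mathbf M)$ with $\mathbf M\in\mathscr C$. By definition $\mathbf N$ is obtained from an induced substructure on some $S\ni r$ by contracting each subtree of a partition of $S$ into subtrees of $\restr{\prec}{S}$ to its minimum. I would $D$-mark exactly the deleted elements $M\setminus S$ and $V$-mark exactly the minimum of each part (taking care that $r$ is not $D$-marked). Then the elements satisfying $\nu$ are precisely the part representatives (together with $r$), and a direct check of $\rho_R$ against the definition of $\Cont$ gives $\mathsf{Shrink}(\mathbf M^+)=\mathbf N$.

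No real obstacle arises beyond this bookkeeping; the one subtlety worth flagging is the handling of the root. Since the formula $\nu$ contains the disjunct $(x=r)$ and the formula $\rho_R$ contains the side condition $\neg D(y_i)\vee y_i=r$, any arbitrary $D/V$-marking of $r$ is effectively neutralized in both directions of the argument, matching the requirement that $\Cont$-operations preserve the root.
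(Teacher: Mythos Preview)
Your proposal is correct and follows essentially the same approach as the paper's proof: the paper's argument is the same two-inclusion check (any $\{D,V\}$-expansion yields an {\tim}, and conversely mark $D$ the deleted vertices and $V$ the minimum of each contracted part), only stated more tersely. Your added bookkeeping about the partition $P_v$ and the special handling of the root is accurate and simply makes explicit what the paper leaves to the reader.
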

\begin{proof}
    Let $\mathbf M$ be a tree-ordered $\sigma$-structure.
    For every expansion $\mathbf M^+$ of $\mathbf M$ by unary relations $V$ and $D$, $\mathsf{Shrink}(\mathbf M^+)\in\Cont(\mathbf M)$.
    Conversely, for every $\mathbf N\in \Cont(\mathbf M)$, by marking $D$ the  deleted vertices and $V$ the minimum vertex of each contracted part, we check that $\mathbf N\in\mathsf T(\mathbf M)$.
\end{proof}

\section{Monadic dependence of classes of TOWS structures}
\label{sec:depS}

In this section, we aim to prove \zcref{thm:mainS}, which we restate now for convenience.
\smallskip

\begin{thm}[store*=thm:mainS,restate-keys={note=see \zcpageref[nocap]{sec:depS}}]
 For a class $\mathscr M$ of TOWS $\sigma$-structures, the following are equivalent:
\begin{enumerate}
    \item $\mathscr M$ is monadically dependent,
    \item $\TGaif(\mathscr M)$ is monadically dependent,
    \item $\TInc(\mathscr M)$ is monadically dependent,
    \item $\Lambda(\mathscr M)$ is monadically dependent,
    \item $\Minor(\mathscr M)$ is monadically dependent, 
    \item $\Minor(\mathscr M)^\sigma$ is nowhere dense. 
\end{enumerate}
\end{thm}
\medskip

\begin{lem}
\label{lem:minG}
    Let $\mathbf N$ be a tree-ordered minor of a tree-ordered $\sigma$-structure $\mathbf M$. Then  $\TGaif(\mathbf N)$ is a tree-ordered minor of  $\TGaif( \mathbf M)$.
\end{lem}
\begin{proof}
    It is sufficient to consider the case where $\mathbf N$ is obtained from $\mathbf M$ by an elementary $\prec$-contraction or an elementary $\sigma$-deletion.

    In the first case, assume that $u\prec: v$ is a cover of $\mathbf M^\prec$. Then  $\TGaif(\mathbf N)$ is clearly obtained by identifying $u$ and $v$. Hence, $\TGaif(\mathbf N)$ is a tree-ordered minor of  $\TGaif(\mathbf M)$.

    In the second case, assume that $\mathbf N$ is obtained from $\mathbf M$ by deleting the tuple $\bar v\in R(\mathbf M)$. Then  $\TGaif(\mathbf N)$ is obtained from $\TGaif(\mathbf M)$ by deleting the pairs $\{v_i,v_j\}$ that do not appear in any other tuple in a relation $R'(\mathbf M)$ with \mbox{$R'\in\sigma$}. In particular the graph $\bigl(\TGaif(\mathbf N)\bigr)^\sigma$ is a subgraph of 
    $\bigl(\TGaif(\mathbf M)\bigr)^\sigma$, while 
    $\bigl(\TGaif(\mathbf N)\bigr)^\prec=\bigl(\TGaif(\mathbf M)\bigr)^\prec$. It follows that $\TGaif(\mathbf N)$ is a tree-ordered minor of  $\TGaif(\mathbf M)$.
\end{proof}

\begin{proof}[Proof of \zcref{thm:mainS}]
We aim to prove the next equivalences.
\begin{mycases}
    \item\label{enum:S-mdependent} $\mathscr M$ is monadically dependent;

    \item\label{enum:S-M-mdependent} $\Minor(\mathscr M)$ is  monadically dependent;

     \item\label{enum:S-TG-mdependent} $\TGaif(\mathscr M)$ is  monadically dependent;
     
    \item\label{enum:S-MG-ND} $\Minor(\TGaif(\mathscr M))^E$ is nowhere dense;
    \item\label{enum:S-GM-ND} $\Gaif(\Minor(\mathscr M)^\sigma)$ is nowhere dense;
    \item\label{enum:S-M-ND} $\Minor(\mathscr M)^\sigma$ is nowhere dense;
    \item\label{enum:S-IM-ND} $\Inc(\Minor(\mathscr M)^\sigma)$ is nowhere dense;
    \item\label{enum:S-MI-ND} $\Minor(\TInc(\mathscr M))^E$  is nowhere dense;
    \item\label{enum:S-I-mD} $\TInc(\mathscr M)$ is  monadically dependent;
    \item\label{enum:S-II-mD} $\TInc(\TInc(\mathscr M))$ is  monadically dependent;
    \item\label{enum:S-Fun} $\Lambda(\mathscr M)$  is  monadically dependent;
    \item\label{enum:S-FunG} $\Lambda(\TGaif(\mathscr M))$  is  monadically dependent;
\end{mycases}

We prove the following implications and equivalences.
\[
\begin{xy}
    \xymatrix@=16pt{
       \zcref[noname]{enum:S-Fun}\ar@2[d]&\zcref[noname]{enum:S-II-mD}\ar@2[l]\ar@2[d]&\zcref[noname]{enum:S-I-mD}\ar@2[l]&\zcref[noname]{enum:S-MI-ND}\ar@2[l]\\
    \zcref[noname]{enum:S-FunG}\ar@2[d]&\zcref[noname]{enum:S-mdependent}\ar@2[dl]\ar@2{<->}[r]&\zcref[noname]{enum:S-M-mdependent}&\zcref[noname]{enum:S-IM-ND}\ar@2{<->}[u]\\
        \zcref[noname]{enum:S-TG-mdependent}\ar@2{<->}[r]&\zcref[noname]{enum:S-MG-ND}\ar@2[r]&\zcref[noname]{enum:S-GM-ND}\ar@2{<->}[r]\ar@2{<->}[ur]&\zcref[noname]{enum:S-M-ND}
    }
\end{xy}
\]

\begin{trivlist}
    \item[\impl{1}{3}:] $\TGaif(\mathscr M)$ is a simple interpretation of $\mathscr M$.
    \item[\equ{3}{4}:] this is a direct consequence of \zcref{thm:mainG} applied to the weakly sparse class $\TGaif(\mathscr M)$ of tree-ordered graphs.
    \item[\impl{4}{5}:] this follows from \zcref{lem:minG}, as it implies
    \[
\Gaif(\Minor(\mathscr M)^\sigma)= \TGaif(\Minor(\mathscr M))^E\subseteq \Minor(\TGaif(\mathscr M))^E.
\] 
 \item[\equ{5}{6}:] this is the definition of nowhere-denseness for relational structures.
 \item[\equ{5}{7}:] this is well known (see, for instance \cite[Proposition 5.7]{Sparsity}).
  \item[\equ{7}{8}:] this follows from the fact that 
  $\Minor(\TInc(\mathscr M))^E$
  is the monotone closure of 
  $\Inc(\Minor(\mathscr M)^\sigma)$.
  \item[\impl{8}{9}:] this follows from \zcref{thm:mainG} applied to the weakly sparse class $\TInc(\mathscr M)$ of tree-ordered graphs.
  \item The above proves the implication \impl{1}{9}. Applying this implication to $\TInc(\mathscr M)$, we deduce  \impl{9}{10}.
\item[\impl{10}{1}:] $\mathscr M$ is a transduction of 
  $\TInc(\TInc(\mathscr M))$ (\zcref{lem:inc2tows}).
     \item[\impl{10}{11}:] $\Lambda(\mathscr M)$ is a transduction of 
  $\TInc(\TInc(\mathscr M))$ (by \zcref{lem:trans_lambda}).
  \item[\impl{11}{12}:]  $\Lambda(\TGaif(\mathscr M))$ is a transduction of $\Lambda(\mathscr M)$ (\zcref{fact:lambda_Gaif}).
  \item[\impl{12}{3}:]
  this is a direct application of \zcref{lem:lambdaG}, since $\TGaif(\mathscr M)$ is a tree-ordered weakly sparse class of graphs.
  \item[\equ{1}{2}:] As $\mathscr M\subseteq \Cont(\mathscr M)$ and
  as $\Cont(\mathscr M))$ is a transduction of $\mathscr M$ (\zcref{fact:cont_trans}), we get that $\mathscr M$ and $\Cont(\mathscr M)$ are transduction-equivalent. Hence,
  (1) is equivalent to the property that $\Cont(\mathscr M)$ is monadically dependent.
  By the above, we deduce that $\Cont(\mathscr M)$ is monadically dependent if and only if $\Minor(\mathscr M)^\sigma$ is nowhere dense.
  As $\Minor(\Mon(\mathscr M))=\Minor(\mathscr M)$, $\Minor(\mathscr M)^\sigma$ is nowhere dense is equivalent to $\Cont(\Mon(\mathscr M))$ monadically dependent. As  $\Minor(\mathscr M)=\Cont(\Mon(\mathscr M))$, we deduce that (1) is equivalent to (2).\qedhere
\end{trivlist}
\end{proof}

\newpage
\part{Applications}
\addtocontents{toc}{\vspace{2pt}}

\section{Sparsification}
\label{sec:sp}

In this section, we make use of \MSO-transductions and \MSO-interpretations, which are defined similarly as
the other transductions and interpretations in this article (which are \FO-transductions and \FO-interpretations), except their
formulas are also allowed to quantify on unary predicates (for more on the use
of monadic second-order logic in graphs, we refer to \cite{courcelle12}).

We now consider yet another transduction, which is a variation of the reduct of tree-minors.
The transduction $\mathsf{Sp}$ is obtained by first adding to the edge set the cover graph of the tree-order, then applying the {\tim} transduction $\Cont$, then taking the $E$-reduct.

\begin{lem}
\label{lem:back_from_Sp}
    Let $\mathscr C$ be a class of TOWS graphs such that
    $\mathsf{Sp}(\mathscr C)$ has bounded expansion.

    Then $\mathscr C$ is an \MSO-transduction of $\mathsf{Sp}(\mathscr C)$.
\end{lem}
\begin{proof}
    Let $\sigma=(R,E,Y)$, where $R$ is a unary relation and $E$ and~$Y$ are two binary relations. As $\mathscr C$ has bounded expansion, there exists a transduction $\mathsf T$ such that for every  $G\in\mathscr C$, $\mathsf T(G)$ is the set of all the sigma structures having $G$ as their Gaifman graph \cite{TWW_perm}.
    In particular, for every TOWS graph $\mathbf M$, 
    $\mathsf T({\rm Sp}(\mathbf M))$ contains the $\sigma$-structure $\mathbf N$ with same domain as $\mathbf M$, where $R(\mathbf N)$ is the root of $\mathbf M$, $(x,y)\in Y(\mathbf N)$ if $\mathbf M\models x\prec:y$, and $E(\mathbf N)=E(\mathbf M)$.
    As the tree-order $\prec$ is an \MSO-interpretation of its its directed cover graph, we deduce (by composition) that $\mathscr C$ is an \MSO-transduction of $\mathsf{Sp}(\mathscr C)$.
\end{proof}

\begin{lem}
\label{lem:analog}
    Let $\mathscr O$ be a class of structures, let $\Pi$ be the set of all the classes that are \MSO-transductions of $\mathscr O$ and let $\Pi_S$ be the set of all the weakly sparse classes in $\Pi$.  
    
    Assume that all the classes in $\Pi_S$ have bounded expansion.
    Then for every class~$\mathscr C$ of TOWS-graphs, the following are equivalent:
    \begin{enumerate}
        \item $\mathscr C\in\Pi$;
        \item every weakly sparse transduction of $\mathscr C$ is in $\Pi_S$;
        \item $\mathsf{Sp}(\mathscr C)\in\Pi_S$.
    \end{enumerate}
\end{lem}
\begin{proof}
    (1)$\Rightarrow$(2) is direct from the definition of $\Pi_s$ (as every transduction is an \MSO-transduction). (2)$\Rightarrow$(3) as $(2)$ implies that~$\mathscr C$ is monadically dependent, hence by \zcref{thm:mainG}, $\mathsf{Sp}(\mathscr C)$ is nowhere dense (thus weakly sparse), hence (by (2)) $\mathsf{Sp}(\mathscr C)\in \Pi_s$.
    (3)$\Rightarrow$(2) as $\mathscr C$ is an \MSO-transduction of $\mathsf{Sp}(\mathscr C)$ if $\mathsf{Sp}(\mathscr C)$ has bounded expansion (\zcref{lem:back_from_Sp}). 
\end{proof}

\begin{thm}[store*=thm:sp,restate-keys={note=see \zcpageref[nocap]{sec:sp}}]
    Let $\mathscr C$ be a  class of TOWS graphs. 
    Then,
    $$\mathscr C\text{ is monadically dependent }\quad\iff\quad\mathsf{Sp}(\mathscr C)\text{ is nowhere dense}.$$
    
    Moreover, we have the following equivalences:
\begin{align*}
    {\rm cw}(\mathscr C)<\infty&\iff\forall\text{ weakly sparse transduction $\mathscr D$ of $\mathscr C$},\ {\rm tw}(\mathscr D)<\infty\\
    &\iff {\rm tw}(\mathsf{Sp}(\mathscr C))<\infty;\\
    {\rm lcw}(\mathscr C)<\infty&\iff\forall\text{ weakly sparse transduction $\mathscr D$ of $\mathscr C$},\ {\rm pw}(\mathscr D)<\infty\\
    &\iff {\rm pw}(\mathsf{Sp}(\mathscr C))<\infty;\\
    {\rm sd}(\mathscr C)<\infty&\iff\forall\text{ weakly sparse transduction $\mathscr D$ of $\mathscr C$},\ {\rm td}(\mathscr D)<\infty\\
    &\iff {\rm td}(\mathsf{Sp}(\mathscr C))<\infty.
\end{align*}
\end{thm}
\begin{proof}
The first equivalence is a direct consequence of \zcref{thm:mainG}.
The other equivalences follow from \zcref{lem:analog}, as
\begin{itemize}
    \item a class has bounded cliquewidth if and only if it is an \MSO-transduction of the class of all trees \cite{courcelle1992monadic} and the weakly sparse classes with bounded cliquewidth are exactly the classes with bounded treewidth \cite{Gurski2000};
    \item a class has bounded linear cliquewidth if and only if it is an \MSO-transduction of the class of all paths \cite{courcelle1992monadic} and the weakly sparse classes with bounded linear cliquewidth are exactly the classes with bounded pathwidth;
    \item a class has shrubdepth at most $h$ if and only if it is an \MSO-transduction of the class of all trees with depth at most $h$, and the weakly sparse classes with bounded shrubdepth are exactly the classes with bounded treedepth. \qedhere
\end{itemize}
\end{proof}

So, within TOWS-graphs, the dense analogue of bounded treewidth and bounded pathwidth are  bounded clique-width and bounded linear-cliquewidth.
While \zcref{lem:analog} cannot be used beyond cliquewidth,
we conjecture that dense analogues within TOWS graphs are the same as within general graphs. 
\begin{conj}
Let $\Pi_S$ be a downset of weakly sparse class properties (meaning that if $\mathscr C\in\Pi_S$ and $\mathscr D$ is a weakly sparse transduction of $\mathscr C$, then $\mathscr D\in\Pi_S$).
Let $\Pi$ be the dense analog of $\Pi_S$, that is, $\mathscr C\in\Pi$ if every weakly sparse transduction of $\mathscr C$ belong to $\Pi_S$.

Then  for every class $\mathscr C$ of TOWS-graphs, we have
\[
\mathscr C\in \Pi\qquad\iff\qquad\mathsf{Sp}(\mathscr C)\in \Pi_S.
\]
\end{conj}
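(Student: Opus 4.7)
The plan is to reduce the conjecture to a single ``sparsification absorbs all weakly sparse transductions'' lemma, thereby lifting \zcref{lem:analog} (and the special instances in \zcref{thm:sp}) to arbitrary downsets $\Pi_S$, without the bounded-expansion hypothesis that underlies \zcref{lem:back_from_Sp}. Throughout, I may assume $\Pi_S$ is not the class of all weakly sparse classes, as otherwise the conjecture is trivial.

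For the forward direction, suppose $\mathscr C\in\Pi$. If $\mathscr C$ were not monadically dependent, then every finite graph, and in particular every weakly sparse class, would be a transduction of $\mathscr C$, forcing every weakly sparse class into $\Pi_S$ and contradicting our assumption. Hence $\mathscr C$ is monadically dependent, and \zcref{thm:mainG} yields that $\mathsf{Sp}(\mathscr C)$ is nowhere dense, in particular weakly sparse. Since $\mathsf{Sp}$ is by construction a transduction, $\mathsf{Sp}(\mathscr C)$ is a weakly sparse transduction of $\mathscr C$, and therefore $\mathsf{Sp}(\mathscr C)\in\Pi_S$ by the very definition of $\Pi$.

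For the backward direction, assume $\mathsf{Sp}(\mathscr C)\in\Pi_S$ and let $\mathscr D$ be an arbitrary weakly sparse transduction of $\mathscr C$; the goal is to show $\mathscr D\in\Pi_S$. The conjecture reduces to the following central claim: \emph{every weakly sparse transduction of a monadically dependent class of TOWS graphs is a transduction of its sparsification.} Granting this claim, $\mathscr D$ is a weakly sparse transduction of $\mathsf{Sp}(\mathscr C)\in\Pi_S$, and the downset property of $\Pi_S$ yields $\mathscr D\in\Pi_S$. To establish the central claim, I would decompose the transduction witnessing $\mathscr D=\mathsf T(\mathscr C)$ as $\mathsf I\circ\Gamma_\Sigma\circ\mathsf C_k$ with $\mathsf I$ an interpretation given by an edge formula $\rho(x,y)$, and use the nowhere-density of $\mathsf{Sp}(\mathscr C)$ (from \zcref{thm:mainG}) together with the weak sparsity of $\mathscr D$ to trace each $\rho$-edge back to a sparse combinatorial witness in some tree-ordered minor of~$\mathscr C$; these witnesses would then be encoded via a monadic expansion of $\mathsf{Sp}(\mathscr C)$, in the spirit of the proof of \zcref{lem:back_from_Sp}, where unary markings of contracted subtrees recover the tree-order and hence the ambient structure.

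The main obstacle is precisely this last step: \zcref{lem:back_from_Sp} crucially uses bounded-expansion structure to produce an \MSO-transduction recovering the full $\sigma$-structure of $\mathscr C$ from its Gaifman graph, and for a merely nowhere dense $\mathsf{Sp}(\mathscr C)$ no such uniform \MSO-recovery is currently available. The proof must therefore either restrict to \FO and exploit the weak sparsity of $\mathscr D$ itself to localize the $\rho$-edges into \FO-definable patterns in $\mathsf{Sp}(\mathscr C)$ (using, for instance, the local structure theorems available for nowhere dense classes), or reformulate the conjecture using \MSO-transductions throughout (in which case the recovery of the tree-order from its directed cover graph employed in \zcref{lem:back_from_Sp} would still apply, albeit in the broader nowhere dense regime). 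I expect this generalization of \zcref{lem:back_from_Sp} beyond bounded expansion to be the genuine technical bottleneck of any proof of the conjecture, possibly requiring a new structural result for nowhere dense TOWS sparsifications that refines the characterizations obtained in \zcref{thm:mainG}.
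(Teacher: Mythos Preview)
The statement you are attempting to prove is a \emph{conjecture} in the paper, not a theorem: the authors explicitly state it as open and offer no proof. So there is nothing in the paper to compare your argument against.

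That said, your analysis is accurate as far as it goes. The forward direction you give is essentially correct and complete: if $\Pi_S$ is a proper downset, then $\mathscr C\in\Pi$ forces $\mathscr C$ to be monadically dependent, whence $\mathsf{Sp}(\mathscr C)$ is nowhere dense (this is the first equivalence in \zcref{thm:sp}, not \zcref{thm:mainG} directly), and since $\mathsf{Sp}$ is a transduction, $\mathsf{Sp}(\mathscr C)\in\Pi_S$ follows from the definition of $\Pi$. This direction does not require any bounded-expansion hypothesis, and indeed mirrors the easy implications $(1)\Rightarrow(2)\Rightarrow(3)$ in the proof of \zcref{lem:analog}.

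The backward direction, however, is not proved in your proposal: you correctly reduce it to the claim that every weakly sparse transduction of a monadically dependent TOWS class is a transduction of its sparsification, and you correctly identify that the only known route (via \zcref{lem:back_from_Sp}) requires $\mathsf{Sp}(\mathscr C)$ to have bounded expansion, not merely to be nowhere dense. Your sketch of how one might ``trace $\rho$-edges back to sparse combinatorial witnesses'' is not an argument; it is a restatement of the problem. The paper's authors pose the conjecture precisely because this gap between bounded expansion and nowhere density in the recovery step is, as you observe, the genuine bottleneck, and nothing in the paper closes it. Your proposal is therefore an honest assessment of where the difficulty lies, but it is not a proof.
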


\section{Intractability of {\FO} model checking in TOWS graphs}
\label{sec:model checking}

As an application of our characterization of independent TOWS graphs by substructure
obstructions (\zcref{thm:dependentTOWS}), we show that {\FO} model checking is intractable
for every independent hereditary class of TOWS graphs, under standard complexity-theoretic
assumptions.

An algorithm for {\FO} model checking on a specific graph class~$\mathscr C$ takes as input
a graph $G\in \mathscr C$ and a first-order sentence $\varphi$, and answers whether $G\models 
\varphi$ or not. Since a general algorithm cannot exist for all graphs or sentences under
widely-believed complexity assumptions, we ask for an {\em fixed-parameter-tractable} algorithm 
(or {\FPT} for short),
that is with time
complexity $O_{\mathscr C, \varphi}(|G|^c)$ for some~$c>0$, where the constants hidden by the
big-$O$ notation can depend on both the class and the sentence.

In the weakly sparse setting, the next theorem is a common generalization of the intractability of \FO-model checking for hereditary classes of
ordered graphs of unbounded twin-width\footnote{proved without the weak sparsity assumption} \cite[Theorem 5]{tww4} and of monotone somewhere dense graphs~\cite{dvovrak2010deciding}.

\begin{thm}[store*=thm:intractability,,restate-keys={note=see \zcpageref[nocap]{sec:model checking}}]
    Let $\mathscr C$ a hereditary class of tree-ordered weakly-sparse graphs. If~$\mathscr C$
    is independent, then {\FO}-model checking is not {\FPT}
 on $\mathscr C$,
    under the assumption that $\AW[*]\neq\FPT$.
\end{thm}
\begin{proof}
If $\mathsf I$ is an interpretation of $\sigma'$-structures in $\sigma$-structures and $\varphi$ is a sentence (in the language of $\sigma'$-structures), then there exists a (computable) formula $\mathsf I^*(\varphi)$ (in the language of $\sigma$-structures), such that for every $\sigma$-structure $\mathbf M$ it holds that  $\mathsf I(\mathbf M)\models\varphi$
if and only if $\mathbf M\models\mathsf I^*(\varphi)$.

According to \zcref{cor:interpret}, for  every independent hereditary class of TOWS graphs $\mathscr C$, there exists an integer $h$ such that $\mathscr C$
efficiently interprets the class of all $h$-subdivided bipartite graphs. As  \FO-model checking is $\AW[*]$-complete on these graphs \cite{downey1996parameterized}, we deduce that \FO-model checking is not {\FPT} on any independent hereditary classes of TOWS graphs (assuming
$\FPT\neq\AW[*]$).
\end{proof}

\section{A model-theoretical characterization of classes excluding a minor}
\label{sec:minor}
As a consequence of the above, we propose the following model theoretic characterization of weakly sparse classes of graphs excluding a minor.

We first take time for some definitions and notations.
Let $G$ be a graph. A \ndef{tree-ordering} of $G$ is the expansion of $G$ by a forest-ordering of $G$ defined by a rooted spanning forest of $G$. The \ndef{tree-ordering expansion} $\mathscr C_\Upsilon$ of a class of graphs $\mathscr C$ is the class of all the tree-orderings of all the graphs in $\mathscr C$.

\begin{thm}[store*=thm:excl_minor,,restate-keys={note=see \zcpageref[nocap]{sec:minor}}]
    Let $\mathscr C$ be a weakly sparse class of graphs. Then  the following are equivalent:
    \begin{enumerate}
        \item $\mathscr C$ excludes some graph as a minor;
        \item $\mathscr C_\Upsilon$ is monadically dependent.
    \end{enumerate}
    \end{thm}

\begin{proof}
    Let $\mathbf M$ be a tree-ordering of a graph $G$. Then $\Minor(\mathbf M)^E$ contains only minors of $G$. Thus, 
    $\Minor(\mathscr C_\Upsilon)^E$ is included in the closure of $\mathscr C$ by taking minors. If $\mathscr C$ excludes some graph as a minor, then this closure is a proper minor closed class, hence is nowhere dense. According to \zcref{thm:mainS}, it follows that $\mathscr C_\Upsilon$ is monadically dependent.

    Otherwise, if $\mathscr C$ contains every graph as a minor, there exists, for each $n\in\mathbb N$, some graph $G_n\in\mathscr C$ and some tree-ordering $G_n^\Upsilon$ such that the $1$-subdivision of~$K_n$ is an induced subgraph of a $\prec$-contraction of $G_n^\Upsilon$. 
    Then $\{G_n^\Upsilon\colon n\in\mathbb N\}$ is not monadically dependent, hence $\mathscr C_\Upsilon$ is not monadically dependent. 
\end{proof}

\begin{cor}[store=cor:succ_tree]
The ordered infinite binary tree with vertex set~$2^{<\omega}$ and order $\bar u<\bar v$ if $|\bar u|<|\bar v|$ or $|\bar u|=|\bar v|$ and $\bar u$ precedes $\bar v$ in the lexicographic order is monadically dependent.
\end{cor}
\begin{proof}
    By compactness, it is sufficient to prove that the class of all finite binary trees ordered by $\bar u<\bar v$ if $|\bar u|<|\bar v|$ or $|\bar u|=|\bar v|$ and~$\bar u$ precedes $\bar v$ in the lexicographic order is monadically dependent. This follows from the fact that the flattening of these ordered graphs are all planar.
\end{proof}
\begin{remark}
    It is natural to wonder what weakly sparse class of graphs $\mathscr C$ is such that there is a class $\mathscr C^+$ of tree-orderings of the graphs in $\mathscr C$ that is monadically dependent. (Here we  can select a single tree-ordering for each graph in $\mathscr C$.)

    Clearly, the class of connected cubic graphs does not have this property, as otherwise some expansion of this class to a class of ordered graphs could be obtained by transduction, contradicting the property that this class has unbounded twin-width (See \cite{tww4}).
\end{remark}

    Say that a TOWS graph $\mathbf M$ is an \ndef{internal tree-ordering expansion} of $G$ if  there exists a spanning forest $F$ such that (for an arbitrary rooting of $F$) $\mathbf M$ is the expansion of $G$ by the forest-ordering defined by $F$.
\begin{conj}
    Let $\mathscr C$ be a hereditary weakly sparse class of graphs. Then $\mathscr C$ has a dependent internal tree-ordering expansion if and only if $\mathscr C$ has bounded twin-width.
\end{conj}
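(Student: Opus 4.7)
My plan is to attack both directions of the conjecture using Theorems \zcref{thm:mainS} and \zcref{thm:tows_tww}, together with the known fact that weakly sparse classes of bounded twin-width have bounded expansion (Bonnet--Geniet--Kim--Thomass\'e--Watrigant). This rephrasing is useful because bounded expansion gives direct access to low tree-depth decompositions and DFS-type spanning forests, both of which will be instrumental in building candidate internal tree-orderings.

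For the forward direction, assume $\mathscr C$ has bounded twin-width. I would equip each $G \in \mathscr C$ with a rooted spanning forest $F_G$ extracted from a low tree-depth decomposition, and set $\mathscr C^+ := \{(G, F_G) \colon G \in \mathscr C\}$. To verify that $\mathscr C^+$ is monadically dependent, I apply \zcref{thm:mainS}: every tree-ordered minor of $(G, F_G)$ has $E$-reduct which is a minor of $G$ obtained by contracting subtrees of $F_G$ and deleting edges, so one must argue that these minors stay within a nowhere-dense class. The delicate point is selecting $F_G$ uniformly enough across $\mathscr C$ that the forest does not introduce unbounded combinatorial complexity into the tree-ordered minor class; a natural try is to use the canonical forests arising from iterated low tree-depth colorings.

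For the backward direction, suppose $\mathscr C^+$ is a monadically dependent internal tree-ordering expansion of $\mathscr C$. By \zcref{thm:mainS}, $\Minor(\mathscr C^+)^E$ is nowhere dense; hence $\mathscr C$ itself is nowhere dense. To upgrade nowhere denseness to bounded twin-width, I would exploit the \emph{internal} character of the tree-ordering: the cover graph of $\prec$ is a spanning subgraph of $G$, so every tree-ordered minor operation corresponds to an honest graph-minor operation. My approach is to refine the given internal tree-ordering to one whose cover graph has bounded degree, then invoke \zcref{thm:tows_tww}, which yields bounded twin-width in that regime.

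The main obstacle is this degree-reduction step in the backward direction. Given a dependent internal tree-ordering with unbounded-degree cover graph, it is not obvious how to extract a bounded-degree internal refinement that remains dependent: any bounded-degree spanning sub-forest is easy to produce, but preserving monadic dependence of the resulting TOWS expansion is delicate. A backup strategy is to argue directly via the unavoidable pattern theory of \zcref{sec:Iminor}: should $\mathscr C$ have unbounded twin-width, I expect that for any choice of internal tree-ordering there is an unavoidable core $\tau$ with $\{\tau[n] \colon n \in \mathbb N\} \subseteq \mathscr C^+$, contradicting monadic dependence of $\mathscr C^+$ via \zcref{thm:lics_twists}. Establishing that such an obstruction persists under all internal tree-orderings is, I expect, where the principal difficulty of the conjecture truly lies.
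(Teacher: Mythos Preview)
The statement you are addressing is a \emph{Conjecture} in the paper, not a theorem: the paper does not provide a proof, and indeed presents it as an open problem immediately following the Remark that the class of connected cubic graphs has no dependent internal tree-ordering expansion. There is therefore no ``paper's own proof'' to compare your proposal against, and your write-up is, as you yourself acknowledge in its final sentence, a plan of attack rather than a proof.

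Both directions of your plan have genuine gaps beyond what you flag. In the forward direction, low tree-depth decompositions produce colorings, not canonical spanning forests; even if you extract a forest from them, there is no reason the class of $E$-reducts of tree-ordered minors stays nowhere dense. The point is that contracting subtrees of \emph{any} spanning forest of $G$ produces honest minors of $G$, and a weakly sparse bounded-twin-width class can have all graphs as minors (it only has bounded expansion, not a forbidden minor). So the inclusion $\Minor(\mathscr C^+)^E \subseteq \{\text{minors of graphs in } \mathscr C\}$ buys you nothing by itself; you need a forest tailored so that only \emph{shallow} minors arise, and nothing in low tree-depth decompositions guarantees that. A more promising source of structure would be the contraction sequences that witness bounded twin-width, but extracting an internal spanning forest from them is itself nontrivial.

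In the backward direction, your degree-reduction step is not just delicate but appears to be the entire content of the conjecture in that direction: there is no known way to replace a dependent internal tree-ordering by a bounded-degree one while preserving dependence. Your backup via \zcref{thm:lics_twists} is circular as stated, since showing that every internal tree-ordering of an unbounded-twin-width class contains arbitrarily large twists is precisely what the conjecture asserts. The one direction the paper does hint at is the special case of bounded-degree graphs (via the Remark and \zcref{thm:tows_tww}), and your proposal does not go beyond that.
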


\section{Bounded degree TOWS graphs and twin-width}
\label{sec:tows_tww}
We have proved that a class $\mathscr C$ of tree-ordered weakly sparse graphs is monadically dependent if and only if $\Minor(\mathscr C)^E$ is nowhere dense (\zcref{thm:mainG}).
In the case where the cover graph of the tree-orders in the $\prec$-reduct of $\mathscr C$ has bounded degree, more can be said.

\label{p:tows_tww}
\begin{thm}[store*=thm:tows_tww,restate-keys={note=see \zcpageref[nocap]{p:tows_tww}}]
Let $D$ be an integer and let $\mathscr C$ be a class of tree-ordered weakly sparse graphs, whose $\prec$-reduct cover graphs have degree at most~$D$. Then  the following are equivalent:
\begin{enumerate}
    \item $\mathscr C$ is monadically dependent;
        \item $\mathscr C$ has bounded twin-width;
    \item $\Minor(\mathscr C)^E$ is weakly sparse.
\end{enumerate}
\end{thm}

\begin{proof}
    We consider the natural transduction $\mathsf T$ expanding 
    a tree-ordered graph (whose tree-order's cover graph has maximum degree at most $D$) with a linear orders.
    
     \begin{trivlist}
        \item[\equ{1}{2}:] this follows from the application to $\mathsf T(\mathscr C)$ of the characterization of classes with bounded twin-width given in \cite{tww4}.
        \item[\impl{2}{3},] as $\Minor(\mathscr C)^E$ is a transduction of $\mathscr C$, which is nowhere dense (according to \zcref{thm:mainG}).
        \item[\impl{3}{4}] is trivial;
        \item[\impl{4}{2}:] Assume $\mathsf T(\mathscr C)$ has unbounded twin-width. Then, we can find arbitrarily large biclique ordered-minors. Let $I_1,\dots,I_n$ be the corresponding intervals, and let $s_i=\bigwedge I_i$ for $i\in[n]$. 
        Then either we can find a large subset $I\subseteq [n]$ such that $\{s_i\colon i\in I\}$ is an antichain in $\prec$ (in which case contracting the subtrees rooted in the~$s_i$ for $i\in I$ gives a large biclique in $\Minor(\mathbf M)^E$), or we can find a large subset $I\subseteq [n]$ such that $\{s_i\colon i\in I\}$ is a chain in $\prec$.
        Note that, by construction, 
         $s_i$ in the parent of some element of $I_i$, and that $I_i\cup\{s_i\}$ is a subtree of the cover graph of~$\mathbf M^\prec$ whose leaves are leaves of $\mathbf M^\prec$, with the possible exception of the last element~$f_i$ of $I_i$.
        As $I_i\cap I_j=\emptyset$, the subtrees induced by $I_i\cup\{s_i\}$ and $I_j\cup\{s_j\}$ may intersect only if $s_i=s_j$, $s_i\in I_j$, or $s_j\in I_i$.
        There are at most $d$ values $i$ with the same $s_i$ (as each of these values uses a child of $s_i$) and $s_i$ belongs to at most on $I_j$.
        This implies that the conflict graph  is $(D+2)$-degenerate. Thus, we can find a large conflict free set, which will define a biclique in $\Minor(\mathbf M)^E$.\qedhere
    \end{trivlist}
\end{proof}
\addtocontents{toc}{\vspace{.5em}}
\begin{idx}
\printindex
\end{idx}
\newpage

\bibliographystyle{amsplain}
\bibliography{biblio,newbib}

\end{document}